\renewcommand\nomgroup[1]{%
  \item[\bfseries
  \ifstrequal{#1}{A}{Notations}{%
  \ifstrequal{#1}{B}{Abbreviations}{{}}}%
]}
\newcommand{\qtilde}[0]{\tilde{q}}
\newcommand{\XX}{\mbox{\tiny \it $X\!X'$}}
\theoremstyle{plain}
\newtheorem{thm}{Theorem}
\newtheorem{lem}{Lemma} 
\newtheorem{prop}{Proposition}
\newtheorem{cor}{Corollary}
\newtheorem{rem}{Remark}
\newtheorem{defn}{Definition}
\begin{document}

\begin{frontmatter}

\begin{fmbox}
\dochead{Research}


\title{Optimum Noise Mechanism for Differentially Private Queries in Discrete Finite Sets}


\author[
  addressref={aff1},                   
  email={sachinkadam@skku.edu}   
]{\fnm{Sachin} \snm{Kadam}}
\author[
  addressref={aff2},                   
  email={as337@cornell.edu}   
]{\fnm{Anna} \snm{Scaglione}}
\author[
  addressref={aff2},                   
  email={nr337@cornell.edu}   
]{\fnm{Nikhil} \snm{Ravi}}
\author[
  addressref={aff3},                   
  email={sppeisert@lbl.gov}   
]{\fnm{Sean} \snm{Peisert}}
\author[
  addressref={aff4},                   
  email={brent@kevalaanalytics.com}   
]{\fnm{Brent} \snm{Lunghino}}
\author[
  addressref={aff4},                   
  email={aram@kevalaanalytics.com}   
]{\fnm{Aram} \snm{Shumavon}}

\address[id=aff1]{
  \orgdiv{Technology Innovation Hub (TIH) Foundation for IoT and IoE},             
  \orgname{IIT Bombay campus},          
  \city{Mumbai},                              
  \cny{India}                                    
}
\address[id=aff2]{%
  \orgdiv{Department of Electrical and Computer Engineering},
  \orgname{Cornell Tech,  Cornell University},
  \city{New York},
  \cny{USA}
}
\address[id=aff3]{%
  \orgname{Lawrence Berkeley National Laboratory},
  \city{Berkeley, CA},
  \cny{USA}
}
\address[id=aff4]{%
  \orgname{Kevala, Inc.},
  \city{San Francisco, CA},
  \cny{USA}
}




\begin{abstractbox}

\begin{abstract} 
{The Differential Privacy (DP) literature often centers on meeting privacy constraints by introducing noise to the query, typically using a pre-specified parametric distribution model with one or two degrees of freedom. However, this emphasis tends to neglect the crucial considerations of response accuracy and utility, especially in the context of categorical or discrete numerical database queries, where the parameters defining the noise distribution are finite and could be chosen optimally. This paper addresses this gap by introducing a novel framework for designing an optimal noise Probability Mass Function (PMF) tailored to discrete and finite query sets.
Our approach considers the modulo summation of random noise as the DP mechanism, aiming to present a tractable solution that not only satisfies privacy constraints but also minimizes query distortion. Unlike existing approaches focused solely on meeting privacy constraints, our framework seeks to optimize the noise distribution under an arbitrary $(\epsilon, \delta)$ constraint, thereby enhancing the accuracy and utility of the response. We demonstrate that the optimal PMF can be obtained through solving a Mixed-Integer Linear Program (MILP). Additionally, closed-form solutions for the optimal PMF are provided, minimizing the probability of error for two specific cases. Numerical experiments highlight the superior performance of our proposed optimal mechanisms compared to state-of-the-art methods.
This paper contributes to the DP literature by presenting a clear and systematic approach to designing noise mechanisms that not only satisfy privacy requirements but also optimize query distortion. 
The framework introduced here opens avenues for improved privacy-preserving database queries, offering significant enhancements in response accuracy and utility.}
\end{abstract}


\begin{keyword}
\kwd{Differential Privacy} 
\kwd{Optimum Noise Mechanism} 
\kwd{Discrete Queries} 
\kwd{MILP} 
\kwd{Error Rate}
\end{keyword}


\end{abstractbox}
\end{fmbox}
\end{frontmatter}



\nomenclature[A]{$\mathbb{N}$}{Set of natural numbers including zero}
\nomenclature[A]{$\mathbb{N}^{+}$}{Set of natural numbers excluding zero}
\nomenclature[A]{$\mathbb{Z}$}{Set of integers}
\nomenclature[A]{$\mathbb{R}$}{Set of real numbers}
\nomenclature[A]{$[n]$}{$\{0, 1, \ldots, n\}$, $n\in \mathbb{N}$}
\nomenclature[A]{$[n]_{+}$}{$\{1, \ldots, n\}$, $n\in \mathbb{N}^{+}$}
\nomenclature[A]{$\lfloor k \rfloor$}{Floor function of $k$}
\nomenclature[A]{$\lceil k \rceil$}{Ceil function of $k$}
\nomenclature[A]{$|\mathcal{A}|$}{Cardinality of set $\mathcal{A}$}
\nomenclature[A]{$\mathcal{A}\setminus\mathcal{B}$}{Set difference between sets $\mathcal{A}$ and $\mathcal{B}$}
\nomenclature[A]{$\mathcal{X}$}{Database}
\nomenclature[A]{$X$}{Data point(s) in database $\mathcal{X}$}
\nomenclature[A]{$X'$}{A neighboring dataset of $X$ differing by a data record}
\nomenclature[A]{$Pr(\mathcal{A})$}{The probability of occurrence of an event $\mathcal{A}$}
\nomenclature[A]{$\mathbb{Q}$}{Query function}
\nomenclature[A]{$\mathcal{Q}$}{Discrete set of query answers}
\nomenclature[A]{$q$}{A query answer}
\nomenclature[A]{$\tilde{q}$}{The randomized query answer}
\nomenclature[A]{$f_A$}{The PMF of a random variable $A$}
\nomenclature[A]{$f^{\star}_A$}{The optimum noise PMF of random variable $A$}
\nomenclature[A]{$f^{\star}_{\infty}$}{The optimum noise PMF as $n \rightarrow \infty$}
\nomenclature[A]{$f_{(k)}$}{The $(k+1)$\textsuperscript{th} largest probability mass}
\nomenclature[A]{$\epsilon$}{DP budget}
\nomenclature[A]{$\delta$}{Probability of information being leaked}
\nomenclature[A]{$L_{\XX}$}{Privacy loss function (eq. \ref{Eq:PrivacyLoss})}
\nomenclature[A]{${\mathcal X}^{(1)}_{X}$}{Neighborhood set of dataset $X$}
\nomenclature[A]{$\eta$}{Discrete noise random variable}
\nomenclature[A]{$\mu_{\XX}$}{Query distance between $X$ and its distance one neighbor $X'$}
\nomenclature[A]{$u_{\XX}$}{Indicator function which is active when the privacy leakage function is more than the privacy budget, $\epsilon$}
\nomenclature[A]{$\rho_{\eta}$}{Distortion caused by noise value $\eta$}
\nomenclature[A]{$\mathbb{E}[A]$}{Expectation of a random variable $A$}
\nomenclature[A]{$\rho^{\text{ER}}$}{Error rate}
\nomenclature[A]{$\rho^{\star}$}{Optimum error rate}
\nomenclature[A]{$\rho^{\star}_{\infty}$}{Optimum error rate as $n \rightarrow \infty$}
\nomenclature[A]{$\rho^{\text{MSE}}$}{Mean Squared Error}
\nomenclature[A]{$y_{\eta}$}{Auxiliary variable equal to $f(\eta)u_{\XX}(\eta)$}
\nomenclature[A]{$\mathcal{M}$}{Set of query distances, $\forall$ $X\in\mathcal{X}$ and $X'\in \mathcal{X}_X^{(1)}$}
\nomenclature[A]{$\hat{\mu}$}{The constant query distance in the Single Distance case}
\nomenclature[A]{$\bar{\mu}$}{The maximum query distance in the Bounded Difference case}
\nomenclature[A]{$\bar{\mu}$}{The maximum query distance in the Bounded Difference case}
\nomenclature[A]{$\phi_i^k$}{$i^{th}$ step height in the flat region of the PMF of the optimal noise mechanism for BD neighbourhood case}
\nomenclature[A]{$\psi_i^k(\delta)$}{$i^{th}$ step height in the linear region of the PMF of the optimal noise mechanism for BD neighbourhood case}
\nomenclature[A]{$\underline{\delta}^{\epsilon}_k$}{The instance of $\delta$ at which $\rho^{\star}(\delta,\epsilon)$ changes from $k^{th}$ linear region to $k^{th}$ flat region}
\nomenclature[A]{$\overline{\delta}^{\epsilon}_k$}{The instance of $\delta$ at which $\rho^{\star}(\delta,\epsilon)$ changes from $k^{th}$ flat region to $(k+1)^{th}$ linear region}
\nomenclature[A]{$r$}{Reminder when $n$ is divided by $\overline{\mu}$}
\nomenclature[A]{$b$}{Quotient when $n$ is divided by $\overline{\mu}$}
\nomenclature[A]{$N_{\mu}$}{Ratio between $n+1$ and gcd(($n+1$), $\mu$)}
\nomenclature[A]{$F_A$}{The cumulative distribution function (CDF) of random variable $A$}
\nomenclature[A]{$\alpha$}{Parameter of Geometric distribution used in the paper}
\nomenclature[A]{$\beta'$}{Parameter of Gumbel distribution used in the paper}
\nomenclature[B]{AMI}{ Advanced Metering Infrastructure}
\nomenclature[B]{BD}{Bounded Distance}
\nomenclature[B]{DP}{Differential Privacy}
\nomenclature[B]{ER}{Error Rate}
\nomenclature[B]{LP}{Linear Program}
\nomenclature[B]{MSE}{ Mean Squared Error}
\nomenclature[B]{PDP}{ Probabilistic Differential Privacy}
\nomenclature[B]{PMF}{ Probability Mass Function}
\nomenclature[B]{SD}{Single Distance}
\nomenclature[B]{GCD}{Greatest Common Divisor}
\nomenclature[B]{MILP}{ Mixed Integer Linear Program}
\nomenclature[B]{2D}{Two-Dimensional}%
\printnomenclature
\section{Introduction} \label{sec:introduction}
Differential Privacy (DP) is a technique used for publishing database queries that conceal confidential attributes. Some of its real-world applications are in the publication of the United States of America Census 2020 data (using disclosure avoidance system~\cite{US_Sensus_Pub}), Google's historical traffic statistics~\cite{GooglePolicy}, Microsoft's telemetry data~\cite{ding2017collecting}, LinkedIn user engagement information to third parties for advertisements~\cite{rogers2020linkedins}, etc.

DP hinges on a randomized mechanism wherein the data publisher, who owns the database, responds to analyst queries. The key principle is to generate similar distributions of query answers for data differing by a specific attribute, making it statistically challenging to discern whether data with that attribute were involved in the query computations.

In the existing literature, randomizing query answers commonly involves adding noise with a parametric distribution featuring one or two degrees of freedom. {In contrast, this paper proposes a novel approach: optimizing all the parameters of the probability mass function (PMF) for queries with finite discrete answers. This ensures that the randomized query outcome meets DP constraints while minimizing expected distortion for a given database and discrete finite set of answers. Notably, existing literature does not undertake the optimization of noise to minimize distortion under an arbitrary $(\epsilon, \delta)$ constraint, as it typically limits the mechanism to a single parameter. Our approach stands out because it optimizes the entire distribution.}

Our formulation, applicable to both discrete numerical data (for which perhaps mean squared error (MSE) is the best error metric) and categorical data (where error rate (ER) is preferable), finds resonance in real-world scenarios. {For instance, queries related to discrete numerical data include: 
(a) The number of households in a census tract with at least one college-educated member (0 through $n$, where $n$ is the total number of households in that census tract);
(b) what is the most popular promotion on a website;
or how many users in a certain set accepted a sales promotion on a website;
(c) The hour of peak electricity usage in a neighborhood (00 through 23), etc.
Similarly, queries related to categorical data are: 
(a) Type of consumer? (Subscriber or Trial user, Residential or Commercial, etc.);
(b) What month of the year? (January: 1 through December: 12);
(c) What gender is a person? (Male: 1, Female: 2, Other: 3), same idea for ethnicity, blood type, etc.
}
{\color{black} For discrete numerical data, just perform the modulo addition of random noise of size  $n+1$, and for categorical data, we assign numerical values to categories and perform the modulo addition of random noise on them.}

Before outlining our contributions, we review the relevant literature. 

\subsection{Literature review}
In the literature, several papers studied the additive noise mechanisms for discrete query outputs~\cite{Add_Geng, SORIACOMAS2013200, Approx_Geng}. 
For discrete queries with infinite support, the additive noise mechanism for $\epsilon-$differential privacy that minimizes any convex function of the query error was found in~\cite{Add_Geng}; the optimum PMF is shown to have a specific decreasing staircase trend. The problem of finding the optimal data-independent noise mechanism for $\epsilon-$differential privacy is also addressed  in~\cite{SORIACOMAS2013200}. Even though the authors focus on continuous query outputs, they claim one can easily extend the method to discrete queries. Neither paper~\cite{Add_Geng},\cite{SORIACOMAS2013200} explored the optimization of the $(\epsilon,\delta)-$differential privacy trade-off for $\delta>0$. For integer query outputs, the optimal noise mechanism design for $(\epsilon,\delta)-$differential privacy is the subject of~\cite{Approx_Geng}.  Another approach to integer count queries is carried out in~\cite{Ghosh_Universal}, where a double-sided geometric distribution noise mechanism is used. A recent study on the count query DP problem is found in~\cite{8716576}, in which the authors use a set of constrained mechanisms that achieve favorable DP properties.  The related problem of publishing the number of users with sensitive attributes from a database is addressed in~\cite{sadeghi2020differentially}.  In their proposed DP mechanism, they add an integer-valued noise before publishing it to protect the privacy of individuals. 
Though the randomized query response, produced by the proposed mechanism in~\cite{sadeghi2020differentially},  lies in the actual query support range, the additive noise PMF used depends on the query output, which requires storing several PMFs. In the context of discrete queries, an additive discrete Gaussian noise-based mechanism is proposed in~\cite{canonne2021discrete}. They show that the addition of discrete Gaussian noise provides the same privacy and accuracy guarantees as the addition of continuous Gaussian noise. Another recent study focuses on the mechanisms of discrete random noise addition~\cite{qin2022differential}. In this study, the basic DP conditions and properties of general discrete random mechanisms are investigated. In~\cite{Ravi2022Diff} a randomized mechanism for the labels obtained from $K$-means clustering is provided using the modulo addition-based mechanism.

In the literature, a joint DP mechanism is proposed for key-value data in~\cite{ye2019privkv}, where key uses categorical data and value uses numerical data. Two potential applications have been identified: video ad performance analysis and mobile app activity analysis. The key in the former is the ad identifier, and the value is the time a user has watched this video ad, whereas the key in the latter is the app identifier, and the value is the time or frequency this app appears in the foreground. In another work~\cite{wang2019collecting}, local DP mechanisms for multidimensional data
that contain both numeric and categorical attributes are proposed.

\subsection{Paper contributions}
{This paper critically revisits the design of DP random mechanisms, specifically focusing on ensuring $(\epsilon,\delta)-$DP for queries with $n+1$ possible answers, each mapped onto the integers $0$ through $n$. The mechanism we study involves the modulo $n+1$ addition of noise. The significant and novel contributions of this paper can be summarized as follows:}
\begin{itemize}
    \item {\textbf{Optimal Noise PMF:} In Section \ref{Sec:Opt_numerical}, we demonstrate that the additive noise PMF minimizing a linear error metric under a given $(\epsilon, \delta)$ budget can be obtained as the solution of a Mixed Integer Linear Program (MILP). Notably, for the case when $\delta=0$, the optimum PMF can be found using a Linear Program (LP), as established in previous literature~\cite{Stair_Geng}, which is a special case of our general formulation.} 
    \item {\textbf{Explicit PMF expressions for minimum error:} Sections \ref{Sec:Opt_Noise_SD_Nbd} and \ref{Sec:Opt_Noise_BD_Nbd} delve into two special cases, providing explicit expressions for the optimum PMF that minimizes error for specific $(\epsilon,\delta)$ pairs. This analysis extends and subsumes prior work~\cite{Stair_Geng}.}
    \item {\textbf{Structure of Optimum PMF and Error Rate:} We unveil the structural characteristics of the optimum PMF and error rate functions. Specifically, the derived error rate function exhibits a piece-wise linear nature. Our findings reveal that the optimum $(\epsilon,\delta)$ trade-off curve, for a given error rate, experiences an exponential decrease as $\delta$ increases. Moreover, a discrete set of discontinuities in the curve precludes any change in the exponential rate of decay as $\epsilon$ increases.}
    \item {\textbf{Numerical Validation and Comparative Analysis:} The contributions outlined above are corroborated by a thorough numerical analysis presented in Section \ref{Sec:Simulations}. Our simulations include comprehensive comparisons with prior methods, demonstrating the superiority of our proposed approach.}
\end{itemize}

\subsection{Notation} \label{Section:Notation}
{In this section, we present a summary of general notations used throughout the paper. A comprehensive list of notations and abbreviations can be found at the top of the paper in the Nomenclature section.}

Let $\mathbb{N}, \mathbb{N}^+, \mathbb{Z}, \mathbb{R}$ denote the sets of natural numbers including zero, natural numbers excluding zero, integers, and real numbers, respectively.
The set integers $\{0,1, \ldots, n\}$, $n \in  \mathbb{N}$, is referred to as $[n]$, $[n]_{+}$ is, instead, $\{1, \ldots, n\}$. 
The symbols $\lfloor k \rfloor$ and $\lceil k \rceil$ denote the floor and ceiling functions of $k$, respectively. The cardinality of set $\mathcal{A}$ is denoted by $|\mathcal{A}|$, and the set difference of sets $\mathcal{A}$ and $\mathcal{B}$ is denoted as $\mathcal{A} \setminus \mathcal{B}$.  

The query function applied to data $X$ from a database, denoted by $\mathcal X$, is represented as $\mathbbm{Q}(X)$, and $\mathcal{Q}$ denotes a discrete finite set of query answers. In this paper, the query domain is discrete and finite, mapped onto the set $[n]$ of size $n+1$. The numerical outcome of the query is denoted by the variable $q\in [n]$, while $\tilde{q}$ represents the outcome after the randomized publication, with distribution $f(\qtilde|X)$. 

For vector queries with outcomes, $\bm q$, in a finite discrete domain, one can map the result onto the set $[n]$, and hence, the optimization we propose applies. We use $f(\eta)$ to represent the noise distribution $f_\eta(\eta)$ whenever possible without confusing the reader.

\section{Preliminaries}
{
\begin{defn}[$(\epsilon,\delta)$-Differential Privacy (DP)~\cite{Calib_Dwork}]
A randomized mechanism $\qtilde:\mathcal{Q} \rightarrow \mathcal{Q}$ is $(\epsilon,\delta)$-differentially private if for all datasets $X$ and $X'$ differ by a unique data record, given any arbitrary event $\mathcal{S} \subseteq \mathcal{Q}$ pertaining to the outcome of the query, the randomized mechanism satisfies the following inequality
\begin{equation}\label{eq:DP}
      Pr(\qtilde(X) \in \mathcal{S}) \leq e^\epsilon Pr(\qtilde(X') \in \mathcal{S}) + \delta,
\end{equation}
where $Pr(\mathcal{A})$ denotes the probability of the event $\mathcal{A}$ and the PMF used to calculate the events probability is $f(\tilde{q}|X)$.
\end{defn}
}
Conventionally, given the random published answer $\qtilde$ in the differential privacy literature, the {\it privacy loss} function name is a synonym for the log-likelihood ratio:  
\begin{equation} \label{Eq:PrivacyLoss}
L_{\XX}(\qtilde)\triangleq \ln \frac{f_{\qtilde}(\qtilde|X)}{f_{\qtilde}(\qtilde|X')},
\end{equation}
where $X \in {\mathcal X}$ is the set of data used to compute the query and  $X'$ is an alternative set with a unique attribute or data point that is different. For each $X$ we denote by $ {\mathcal X}^{(1)}_{X}$
the neighborhood of set of $X$ which contains all data sets $X'\in {\mathcal X}$ that differ from $X$ by a predefined sensitive attribute we want to conceal. 
Note that, if the event $L_{\XX}(\qtilde)<0$ under the experiment with distribution $f(\qtilde|X)$ then, in classical statistics,  the observer of the outcomes $\tilde{q}$ will choose erroneously the alternative hypothesis that $X'$ was queried (where the emission probability is $f(\qtilde|X')$) rather than $X$. By looking at the tail of the distribution for $L_{\XX}(\qtilde)>0$ under the distribution $f(\qtilde|X)$, one can gain insights into how frequently the mechanism allows to differentiate $X$ from $X'$ with great confidence, leaking private information to the observer. 

We now introduce the definition of $(\epsilon, \delta)-$ probabilistic differential privacy (PDP) we consider in this paper, which applies to any random quantity $\tilde{q}$ for any given $X$:
\begin{defn}[\bf $(\epsilon, \delta)$-{Probabilistic} DP~\cite{Privacy_Machanavajjhala}]\label{defn:e-d-privacy} Consider random data that can come from a set of emission probabilities $\tilde{q} \sim f(\tilde{q}|X)$ that change depending on $X\in {\mathcal X}$. The data $\tilde{q}$ are  $(\epsilon, \delta)$- probabilistic differentially private $\forall X \in {\mathcal X}$ and  $X^{'}\in {\mathcal X}^{(1)}_{X}$, iff: 
\begin{align}
  \delta&\geq \delta^{\epsilon}_{\qtilde} \triangleq \sup_{X\in {\mathcal X}} \sup_{X'\in {\mathcal X}_{X}^{(1)}} Pr
    \left(L_{\XX}
(\tilde{q})>\epsilon\right)\label{eq:epsilon_delta_relation},
\end{align}
and the PMF used to calculate the probability is $f(\tilde{q}|X)$.
\end{defn}

The following theorem guarantees that $(\epsilon,\delta)$-PDP is a strictly stronger condition than $(\epsilon,\delta)$-DP. 
\begin{thm}[PDP implies DP~\cite{mcclure2015relaxations}]\label{thm:PDP-DP}
If a randomized mechanism is $(\epsilon,\delta)$-PDP, then it is also $(\epsilon,\delta)$-DP, i.e.,
\[
    (\epsilon,\delta)-\text{PDP} \! \Rightarrow \! (\epsilon,\delta)-\text{DP}, \text{but } (\epsilon,\delta)-\text{DP} \! \nRightarrow \! (\epsilon,\delta)-\text{PDP}.
\]
\end{thm}
The proof of Theorem~\ref{thm:PDP-DP} is shown in~\cite{mcclure2015relaxations,triastcyn2019improved}. This motivates us to use $(\epsilon,\delta)-$PDP throughout this paper. 
{\color{black}The $(\epsilon,\delta)-$PDP has applications in the contexts of location recommendations~\cite{8306835}, privacy protection from sampling and perturbations~\cite{sholmo2012privacy}, to create a realistic framework for statistical agencies to distribute their goods~\cite{Privacy_Machanavajjhala}, etc.}

In our setup $\mathbbm{Q}(X)$ is a scalar value in ${\mathcal Q}\equiv [n]$.
The query response  $\tilde{q}$ is obtained by adding a discrete noise $\eta$, whose distribution is denoted by $f(\eta)$, i.e.:
{
\begin{align} \label{eq:AdditiveNoiseDefn}
    \tilde{q}=\mathbbm{Q}(X)+{\eta}~~\Rightarrow~~f_{\tilde{q}}(\tilde{q}|X)=f_{\eta}(\tilde{q}-\mathbbm{Q}(X)).
\end{align}
}
The PMF associated with the privacy loss function, called privacy leakage probability, for the additive noise mechanism, can be derived from ~\eqref{Eq:PrivacyLoss},~\eqref{eq:epsilon_delta_relation} and~\eqref{eq:AdditiveNoiseDefn}:
\begin{align} \label{eq:noise-only-classic-DP}
    Pr(L_{\XX}(\tilde{q})>\epsilon)=Pr\left(
    \ln\frac{f_{\eta}(\tilde{q}-\mathbbm{Q}(X))}{f_{\eta}(\tilde{q}-\mathbbm{Q}(X'))}>\epsilon
    \right).
\end{align}
For the discrete query case, we denote the  ``distance one set'' of $X \in {\mathcal X}$ as   ${\mathcal X}_{X}^{(1)} \subset {\mathcal X} \setminus X$ and let:
 \begin{align}
     \mu_{\XX}\triangleq \mathbbm{Q}(X)-\mathbbm{Q}(X'), ~~~~~ \forall X \in {\mathcal X},\forall X'\in {\mathcal X}_{X}^{(1)}. \label{eq:mu_xxDQ}
 \end{align} 
 where $X'$ differs from $X$ for one user data record or a sensitive user attribute.
Let us define the indicator function  $u_{\XX} (\eta)$, $\eta \in [n]$ such that $i^{th}$ entry is one if $L_{\XX}(\tilde{q}_i)>\epsilon$ and zero otherwise, i.e.:
\begin{align}
u_{\XX} (\eta) \triangleq & 
    \begin{cases}
    1, & ~~~~~f(\eta) > e^\epsilon f(\eta+ \mu_{\XX})\\
    0, &~~~~~~\mbox{otherwise}
    \end{cases}  
    \label{eq:u_XX_Defn}
\end{align}{
where we omitted the suffix $\eta$ and used $f(\eta)$ in lieu of $f_{\eta}(\eta)$.}
It is easy to verify that we have:
$$ Pr(L_{\XX}(\tilde{q})>\epsilon)=\sum_{\eta=0}^{n}
u_{\XX} (\eta) f(\eta).$$

Before describing our design framework in Section \ref{Sec:Modulo_Addition}, a few considerations on $(\epsilon,\delta)-PDP$ are in order. 

\subsection{{Post-processing}}\label{sec:PostProcess}
A randomized DP mechanism maps a query output onto a  distribution designed to meet Definition \eqref{eq:DP} or \eqref{eq:epsilon_delta_relation}.  {\color{black} If the randomized query answer generation requires multiple steps, it is important to ensure that the $(\epsilon,\delta)-PDP$ ($(\epsilon,\delta)-DP$) are met after the very last step.} In fact, in \cite{meiser2018approx} it was pointed out that, unless $\delta=0$, in general $(\epsilon,\delta)-PDP$ with $\delta>0$ cannot be guaranteed after {\it post-processing}, {\color{black} where post-processing refers to processing steps that follow the noise addition prior to the release the query response}.  
The objections in \cite{meiser2018approx} are valid for mechanisms that include post-processing like {\color{black} the popular  ``truncation'' or ``clamping'' mechanisms that consists of first adding unbounded noise $\eta$ and then projecting (clamping) the sum $\mathbb{Q}(X)+\eta$ in the prescribed range to generate $\tilde q$}. In this case $(\epsilon, \delta)-DP$ are guaranteed before the post-processing step, but not after, unless $\delta=0$. 
The proposition below  provides guarantees for $(\epsilon,\delta)-{PDP}$. 
\begin{prop} \label{prop:PostProcessing} Let $\qtilde \in {\mathcal Q}$ be a randomized $(\epsilon,\delta)-\text{PDP}$ response for query $q \in {\mathcal Q}$. Let $g: {\mathcal Q} \rightarrow {\mathcal Q}$ be an arbitrary invertible mapping. Then $g \circ \qtilde = g(\qtilde)$ is also a $(\epsilon,\delta)-{PDP}$ response for any $\delta\geq 0$. Furthermore, if $\delta=0$, $\epsilon-{PDP}$ is preserved, irrespective of $g$. 
\end{prop}
\begin{proof}
The proof is in Appendix~\ref{Apdx:prop1Proof}.
\end{proof}
 Proposition~\ref{prop:PostProcessing}  clarifies the importance of designing randomized responses whose domain is consistent with the query output, since it does not require post-processing to generate answers in the right set. Clamping is not bijective and changes the masses of probability in a way that alters $\delta$ for a given $\epsilon$.

\section{Optimal Additive Noise}\label{Sec:Modulo_Addition}
For queries $q = \mathbbm{Q}(X) \in [n]$, a possible approach other than clamping is to assume that the noise addition is modulo $n+1$ with $\eta \in [n]$ so that the outcome $\tilde{q}=q + \eta ~(\text{mod } n+1)$, is always in the appropriate range. 
In this paper, we seek to obtain the optimum noise distribution $f(\eta)$ for such a mechanism. 
 Since $(\epsilon, \delta)-PDP$ implies $(\epsilon, \delta)-DP$ and hence, it is a stronger notion, we use the definition of $(\epsilon, \delta)-PDP$ throughout.   

Next, we omit the $(\text{mod } n+1)$  to streamline the notation, with the understanding that, from now on, sums and differences of query outcomes and noise values are always modulo $n+1$.  Observe that adding uniform noise would lead necessarily to a scalar query $\tilde q$ being uniform and thus, high privacy (i.e. $\delta=0$ for any $\epsilon>0$) but poor accuracy since $1-1/(n+1)$. This motivates the search for an optimal solution. 
Using~\eqref{eq:noise-only-classic-DP} and~\eqref{eq:mu_xxDQ}:
\begin{align}
    Pr(L_{\XX}(\tilde{q})>\epsilon)&=Pr\left(\ln\frac{f(\eta)}{f(\eta + \mu_{\XX})}>\epsilon
    \right).\label{eq:noise-only-modulo}
\end{align} 
The reasons for using the modulo addition of noise are:
\begin{itemize}
    \item  The randomized answers fall within the range expected for the query, which allows us to leverage  Proposition~\ref{prop:PostProcessing}.
    \item The mechanisms require defining a single distribution rather than distinct distributions for all possible $X\in {\mathcal X}$.
    \item In the optimization, any pair with the same modulo difference results in a single $(\epsilon, \delta)-PDP$ constraint, simplifying the search for the optimum distribution. 
    \item The simplifications allow us to derive the optimum distribution in closed form for specific use cases. 
\end{itemize}

From~\eqref{eq:noise-only-modulo} it is evident that the $(\epsilon,\delta)$ privacy curve is entirely defined by the noise distribution and its change due to a shift in the mean. 
As a result, the probability mass $f(q+\eta)$ is obtained as a circular shift of the PMF $f(\eta)$; therefore \eqref{eq:noise-only-modulo} can be used with the denominator $f(\eta+\mu_{\XX})$ also representing a circular shift of $f(\eta)$. This result motivates us to define the neighborhood sets, using only  $\mu_{\XX}$, in Section~\ref{sec:analytics}. 

\subsection{Numerical Optimization}\label{Sec:Opt_numerical}

In this section, we show that the problem of finding an optimal additive noise mechanism for a given pair $(\epsilon,\delta)$ and expected distortion cost can be cast into a MILP formulation, i.e. an optimization problem with linear cost, linear equality and inequality constraints, and real as well as integer variables. While MILPs are non-convex, several stable solvers have convergence guarantees. 
{
Our MILP formulation (in~\eqref{eq:MILP_obj}-\eqref{eq:eta_range}) finds optimum noise distributions minimizing a specified expected distortion cost:
\begin{equation}
    \mathbb{E}[\rho_{\eta}]=
    \sum_{\eta=0}^n \rho_{\eta}
    f(\eta),
\end{equation}
{where $\rho_{\eta}$ denotes the distortion caused by the noise value  $\eta \in [n]$.}
There are two typical metrics:
\begin{defn}[\bf Error Rate]\label{def:discrete-accuracy} 
For $ \tilde{q}=\mathbbm{Q}(X)+ \eta$, the error rate metric, denoted by $\rho^{\text{ER}}$, is the expected value of the function $\rho_0=0$ and $\rho_{\eta}=1, \eta>0$. Thus:
\begin{equation}
\rho^{\text{ER}} \triangleq 1 - f(0).\label{eq:error-rate}
\end{equation}
\end{defn}
 \begin{defn}[\bf Mean Squared Error]\label{def:contaccuracy}
For $\tilde{q}=\mathbbm{Q}(X) + \eta$, the MSE corresponds to $\rho_{\eta}=\eta^2$:
\begin{align}\label{eq:ms-sensitivity}
    \rho^{\text{MSE}} \triangleq \mathbb{E}[|\tilde{q}-\mathbbm{Q}(X)  |^{2}] = \sum_{\eta=1}^{n} \eta^{2} f(\eta).
\end{align}
\end{defn}
{
\begin{rem}
\label{Rem:MILP2}
In the case of an `ordered' query domain, the MSE metric can be preferable over an error rate metric. That is why we are focusing on the minimization of any linear cost in the MILP formulation. However, the numerical results shown in Section~\ref{Sec:Simulations} indicate that even when we target the error rate metric $\rho^{ER} = 1 - f(0)$, the optimal solution tends to have values that diminish as they move away from the actual query. 
\end{rem}
}
Because our analytical results in Section \ref{sec:analytics} consider the error rate metric, whenever $\rho$ is mentioned without specification, this implies $\rho^{ER}$ is being discussed.  
Having established the cost, the constraints (see~\eqref{eq:Sum_f_eta}-\eqref{eq:integrality}) are derived as follows.
From the databases, we calculate the set $\{\mu_{\XX}\}$ using~\eqref{eq:mu_xxDQ}, which are the only database parameters needed in the formulation. 
} 
As we know, the sum of probability masses is 1 (see~\eqref{eq:Sum_f_eta}).
The indicator function $ u_{\XX}(\eta)$, defined in~\eqref{eq:u_XX_Defn}, can be mapped on the integrality constraint, $u_{\XX} (\eta) \in \{0,1\}, \forall X \in {\mathcal X},\forall X'\in {\mathcal X}_{X}^{(1)}$ (see \ref{eq:integrality}) and on two linear inequality constraints, $f(\eta) - e^\epsilon f(\eta+\mu_{\XX}) - u_{\XX} (\eta) \leq 0$ and $e^\epsilon f(\eta+\mu_{\XX}) - f(\eta) + e^{\epsilon}u_{\XX} (\eta) \leq e^{\epsilon}$ (see~\eqref{eq:f_eta_constraint1} and~\eqref{eq:eta_range}).\footnote{The indicator function $u_{\XX} (\eta)$ (in short $u$) can take two values (see~\eqref{eq:u_XX_Defn}). In the first case, when $u=1$, the equations~\eqref{eq:f_eta_constraint1} and~\eqref{eq:eta_range} become $f(\eta) - e^\epsilon f(\eta+\mu_{\XX}) \leq 1$ and $e^\epsilon f(\eta+\mu_{\XX}) - f(\eta) \leq 0$. In~\eqref{eq:f_eta_constraint1}, since $\epsilon > 0$ and $f(\eta) \in [0,1]$, the maximum difference between $f(\eta)$ and $e^\epsilon f(\eta+\mu_{\XX})$ is 1 which happens when $f(\eta) = 1$ and $f(\eta+\mu_{\XX}) = 0$, and for the rest of the cases difference is less than 1. The equation~\eqref{eq:eta_range} is true due to the definition of $u$ (see~\eqref{eq:u_XX_Defn}). Next, when $u=0$, the equations~\eqref{eq:f_eta_constraint1} and~\eqref{eq:eta_range} become $f(\eta) - e^\epsilon f(\eta+\mu_{\XX}) \leq 0$ and $e^\epsilon f(\eta+\mu_{\XX}) - f(\eta) \leq e^\epsilon$. In this case,~\eqref{eq:f_eta_constraint1} is true due to the definition of $u$ (see~\eqref{eq:u_XX_Defn}) and the other equation can be modified by multiplying $e^{-\epsilon}$ both sides to $f(\eta+\mu_{\XX}) - e^{-\epsilon} f(\eta) \leq 1$, which is true due to the similar arguments made earlier for the case $u=1$.}
We can rewrite~\eqref{eq:noise-only-modulo} as  $Pr(L_{\XX}(\tilde{q})>\epsilon) = Pr(f(\eta) > e^\epsilon f(\eta+ \mu_{\XX}))= \sum_{\eta=0}^n u_{\XX} (\eta) f(\eta) \le \delta$, $\forall X \in {\mathcal X},\forall X'\in {\mathcal X}_{X}^{(1)}$. Since it is a bilinear constraint, {and for a MILP formulation we need the constraints to be linear}, we introduce the auxiliary variables $y_{\eta}, \eta \in  [n]$:
\begin{align}
\label{eq:y_Defn}
    y_{\eta} &\triangleq u_{\XX} (\eta)  f(\eta), ~~~~\eta \in  [n].
\end{align}
so that we can use 
$ \sum_{\eta=0}^n y_{\eta} \le \delta$ to constrain $\delta$ instead; to enforce $y_{\eta}=f(\eta)$ for $u_{\XX}(\eta)=1$, and $y_{\eta}=0$ for $u_{\XX}(\eta)=0$, the trick is to use, respectively, the following linear constraints: $ u_{\XX} (\eta) + f(\eta) - y_{\eta} \le 1$, $u_{\XX} (\eta) -f(\eta) +  y_{\eta} \le 1$,\footnote{\label{fn:y_f_relation}This inequality ensures that $y_{\eta} =   f(\eta)$ if $u_{\XX} (\eta) = 1$.} and $y_{\eta} - u_{\XX} (\eta) \le 0$.\footnote{This inequality ensures that $y_{\eta} =  0$ when $u_{\XX} (\eta) = 0$.} 
This completes the explanation of the optimization constraints shown in~\eqref{eq:y_eta_condition}-\eqref{eq:Aux2}.\footnote{Note that other constraints can be added, for instance, that of having a zero mean distribution, or forcing some of the PMF values to be identical, etc.} 
Let
\begin{equation*}
{\cal M}\!\triangleq\!\{\mu_{\XX}| \mu_{\XX}=\mathbb{Q}(X)-\mathbb{Q}(X'),\forall X\!\!\in\!\! {\mathcal X},\forall X'\!\!\in\!\! {\mathcal X}_{X}^{(1)}\}    
\end{equation*}
The form of the MILP is:
\begin{subequations}\label{eq:MILP}
\begin{align}
&{~~~~~~\min_{f(\eta),u_{\XX}(\eta), y_{\eta}} ~~\sum_{\eta=0}^{n}\rho_{\eta}f(\eta)} \label{eq:MILP_obj}\\
&~\text{s.t.}~\sum_{\eta=0}^n f(\eta)= 1, \label{eq:Sum_f_eta} \\
&~~~~~~\sum_{\eta=0}^n y_{\eta} \le \delta,  \label{eq:y_eta_condition} ~~~~~~y_{\eta} - u_{\XX} (\eta) \le 0, \\
&~~~~~~ u_{\XX} (\eta) + f(\eta) - y_{\eta} \le 1, \label{eq:Aux1}\\
&~~~~~u_{\XX} (\eta) -f(\eta) +  y_{\eta} \le 1, \label{eq:Aux2} \\
&~~~~~~ f(\eta) - e^\epsilon f(\eta+\mu_{\XX}) - u_{\XX} (\eta) \leq 0, \label{eq:f_eta_constraint1}\\
&~~~~~~ e^\epsilon f(\eta+\mu_{\XX}) - f(\eta) + e^{\epsilon}u_{\XX} (\eta) \leq e^{\epsilon},\label{eq:eta_range} \\
  &~~~~~~f(\eta),~y_{\eta} \in [0, 1], ~\eta \in  [n], \nonumber \\
  &~~~~~~u_{\XX} (\eta) \in \{0,1\}, \forall \mu_{\XX} \in \mathcal{M}. \label{eq:integrality} 
\end{align}
\end{subequations}
The MILP has $3(n+1)$ variables out of which $f(\eta)$ and  $y_{\eta}$, $\eta \in [n]$ are real numbers ($2(n+1)$ in total) in $[0,1]$ and $u_{\XX} (\eta)$ (also $(n+1)$ in number) are integers $\{0,1\}$. {The computational complexity of the MILP solution is determined by the algorithm used, which can be branch and bound, cutting plane, or branch and cut. A detailed study on the complexity of the MILP solution is found in~\cite{jiang2021complexity}.}   
When $\delta = 0$, it reduces  to the following LP:
\begin{subequations}
\begin{align}
&~~~~~~\min_{f(\eta),u_{\XX}(\eta), y_{\eta}} \sum_{\eta=0}^{n}\rho_{\eta}f(\eta)  \label{Eq:Opt_prob}\\
&~~\text{s.t.}~\sum_{\eta=0}^n f(\eta)= 1, \label{eq:prob_distr_condition}~~~
\\
&~~~~~~ f(\eta) - e^\epsilon f(\eta+\mu_{\XX})\leq  0,~~\eta \in [n],  \label{Eq:f0_constraint} \\
&~~~~~~f(\eta) \in [0,1], ~\forall \mu_{\XX} \in \mathcal{M}
\end{align}
\end{subequations}
A possible useful variant of the optimization in \eqref{eq:MILP} that will be explored in our numerical results is to minimize $\delta$ instead, under a distortion constraint $\overline{\rho}$, i.e.:
\begin{subequations}\label{eq:MILP-delta}
\begin{align}
&~~~~~~\min_{f(\eta),u_{\XX}(\eta), y_{\eta}} \delta \label{eq:MILP_obj2}\\
&~~\text{s.t.}~\sum_{\eta=0}^{n}\rho_{\eta}f(\eta)\le \overline{\rho},  \label{eq:rho_condition} \\
  &~~~~~~\eqref{eq:Sum_f_eta}-\dots-\eqref{eq:integrality} 
\end{align}
\end{subequations} 
In the next sections, we derive analytical solutions of \eqref{eq:MILP} that minimize the error probability
$\rho^{ER} = 1 - f(0)$
for some special database structures and corroborate the results in Section~\ref{Sec:Simulations} comparing the formulas with the MILP solutions obtained using Gurobi~\cite{gurobi} as a solver.

\subsection{Analytical Solutions}\label{sec:analytics}
{In this section, we analytically study the solution that minimizes the error rate $\mathbb{E}[\rho_{\eta}]=1-f(0)$}.
To give closed-form solutions for the optimum PMF, we focus on the following instances of possible $\mu_{\XX}$:
\begin{defn}[\bf Single Distance (SD)] In this setting $\forall X \in {\mathcal X},\forall X'\in {\mathcal X}_{X}^{(1)}$, the difference $\mu_{\XX}$ is constant, i.e. $\mu_{\XX} = \hat{\mu}$. Note that $\hat{\mu} \le n$. \label{def:FixedDiffNbd}
\end{defn}
\begin{defn}[\bf Bounded Difference (BD)] In this setting $\forall X \in {\mathcal X},\forall X'\in {\mathcal X}_{X}^{(1)}$ $\mu_{\XX}$ take values in the  set $[\bar{\mu}]_+$, $\bar{\mu} \le n$, at least once.\label{def:FixedRangeNbd}
\end{defn}
The most general case is the following:
\begin{defn}[\bf Arbitrary]  In this case $ \mu_{\XX}$ can take values from any subset of  $[n]$,$~\forall X \in {\mathcal X},\forall X'\in {\mathcal X}_{X}^{(1)}$.\label{def:GeneralNbd}
\end{defn}
The next lemma clarifies that the optimum PMF for the BD case for a given $(\epsilon,\delta)$ is useful to attain the same guarantees for the case of an arbitrary neighborhood. 
\begin{lem}
Suppose $\bar{\mu}=\sup_{\forall X \in {\mathcal X},\forall X'\in {\mathcal X}_{X}^{(1)}}\mu_{\XX}$. Then any noise $PMF$ that provides $(\epsilon,\delta)$ privacy for the BD neighborhood with parameter $\bar{\mu}$ will give the same guarantees in terms of  $(\epsilon,\delta)$ and $\rho$ for the case of arbitrary neighborhoods. However, lower distortion is achievable by solving the MILP in \eqref{eq:MILP}.  
\end{lem}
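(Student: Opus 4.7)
The plan is to argue by an inclusion of constraint families, exploiting the fact that the error-rate cost $\rho^{\text{ER}}=1-f(0)$ depends only on the PMF $f$ and not on the underlying neighborhood structure.

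First I would note that, with $\bar\mu=\sup_{X,X'}\mu_{\XX}$, the set $S$ of shifts $\mu_{\XX}$ that actually arise in the arbitrary-neighborhood case satisfies $S\subseteq\{0,1,\ldots,\bar\mu\}$. The constraint in \eqref{eq:f_eta_constraint1}--\eqref{eq:eta_range} associated with $\mu_{\XX}=0$ is trivial, since $L_{\XX}(\qtilde)=\ln\frac{f(\eta)}{f(\eta)}=0$ is never larger than $\epsilon>0$, so $Pr(L_{\XX}(\qtilde)>\epsilon)=0\le\delta$ automatically. Hence only the constraints indexed by $S\setminus\{0\}\subseteq[\bar\mu]_+$ are nontrivial, and these form a subset of the family of constraints enforced in the BD MILP at parameter $\bar\mu$.

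Consequently, any PMF that satisfies \eqref{eq:f_eta_constraint1}--\eqref{eq:eta_range} for every $\mu_{\XX}\in[\bar\mu]_+$ automatically satisfies them for every $\mu_{\XX}\in S$, so the same PMF is $(\epsilon,\delta)$-differentially private under the arbitrary neighborhood, with the identical error rate, since $\rho^{\text{ER}}$ is a functional of $f$ alone (see Definition~\ref{def:discrete-accuracy}). For the second assertion, I would observe that restricting the constraint set from $[\bar\mu]_+$ to $S$ can only enlarge the feasible region of the MILP in Section~\ref{Sec:Opt_numerical}. Hence the minimum error rate obtained by solving the arbitrary-case MILP is no larger than the BD optimum, with strict inequality whenever $S\subsetneq[\bar\mu]_+$ and at least one of the omitted constraints is active at the BD optimum; a small explicit example (e.g.\ $S=\{1\}$ versus $[\bar\mu]_+$ with $\bar\mu>1$) would suffice to rule out the trivial case.

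The main obstacle is a minor bookkeeping issue concerning the direction of the privacy loss. Because $\mu_{\XX}$ in \eqref{eq:mu_xxDQ} is asymmetric in the pair $(X,X')$ while Definition~\ref{defn:e-d-privacy} quantifies over ordered pairs, the effective shift set for the constraints is symmetrized modulo $n+1$: the reverse pair contributes the shift $n+1-\mu_{\XX}$. I would handle this by applying the same symmetrization to both sides of the inclusion $S\subseteq[\bar\mu]_+$, which is legitimate because of the circular nature of the sums described at the start of Section~\ref{Sec:Modulo_Addition}; once the symmetrized shift sets are compared, the inclusion argument goes through unchanged.
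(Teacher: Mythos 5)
Your proof is correct and takes essentially the same route as the paper's: the arbitrary-neighborhood case imposes a subset of the BD constraints at parameter $\bar\mu$, so any BD-feasible PMF remains feasible with the same $(\epsilon,\delta)$ and the same $\rho$ (which depends on $f$ alone), while the enlarged feasible region of the arbitrary-case MILP can only lower the optimum. Your added remarks on the trivial $\mu_{\XX}=0$ constraint and on symmetrizing the shift sets are harmless refinements of the same one-line inclusion argument the paper gives.
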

\begin{proof}
The proof is simple: the set of constraints that need to be met to satisfy  $(\epsilon,\delta)$ privacy for the arbitrary case is a subset of the BD neighborhood case with $\bar{\mu}$ as a parameter. This also means, however, that the minimum $
\rho^\star$ for the latter case is sub-optimum. 
\end{proof}

Note that the SD neighborhood setting is a simple case while the previous lemma indicates that the BD case is more useful in general. 

Next, we find an explicit solution for the optimum noise PMF $f^\star(\eta), \eta \in [n]$ for the SD and BD neighborhood cases. In Section~\ref{Sec:Opt_Noise_Vector}, we discuss the case of discrete vector queries whose entries are independently subjected to the mechanism vs. the optimal solution. 

\subsubsection{PMF for Single Distance  neighborhood} \label{Sec:Opt_Noise_SD_Nbd}
The natural way to express the optimum PMF for an SD neighborhood setting is by specifying the probability masses sorted in decreasing order of $f(k)$, $\forall k \in [n]$. They are referred to as $\{f_{(k)}| 1 \ge f_{(0)} \ge f_{(1)} \ge \ldots \ge f_{(n)} \ge 0\}$. 
\begin{lem}\label{Lemma:ordered_set} Considering the case in Definition \ref{def:FixedDiffNbd} where  $\mu_{\XX} = \hat{\mu}$ is constant $~\forall X \in {\mathcal X},\forall X'\in {\mathcal X}_{X}^{(1)}$, the noise PMF minimizing the error rate is such that
$f^\star_{(h)} = f^\star(h\hat{\mu})$, $\forall h \in [n]$ and the inequality in~\eqref{Eq:f0_constraint}, can be written in terms of the sorted PMF as follows:
\begin{align}
    \label{Eq:f_eta_ordered}
    f^\star_{(h)} - e^\epsilon f^\star_{(h+1)}\leq  0,~~\forall h \in [n].
\end{align}
\end{lem}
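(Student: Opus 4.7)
My plan is to exploit the cyclic structure induced by the single-distance constraint to identify the optimum PMF as monotonically decreasing along the orbit of the shift $\eta \mapsto \eta + \hat{\mu} \pmod{n+1}$.

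First, I would specialize the general constraint \eqref{Eq:f0_constraint} to the SD case, yielding the single family $f(\eta) \leq e^\epsilon f(\eta + \hat{\mu})$ for every $\eta \in [n]$ (indices mod $n+1$). Iterating this inequality $h$ times along the orbit of $0$ under the shift yields the chain bound $f(h\hat{\mu}) \geq f(0) e^{-h\epsilon}$. Assuming $\gcd(\hat{\mu}, n+1) = 1$, the orbit of $0$ visits every element of $[n]$ exactly once as $h$ ranges over $[n]$, so these chain bounds cover every noise value.

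Next, I would reformulate the problem along the orbit by writing $g(h) := f(h\hat{\mu})$ and observing that minimizing $\rho = 1 - g(0)$ subject to $\sum_{h=0}^n g(h) = 1$ and the chain bounds forces every chain inequality to be tight at the optimum. A one-line summing-of-lower-bounds argument produces the closed-form upper bound $g^\star(0) = (1 - e^{-\epsilon})/(1 - e^{-(n+1)\epsilon})$, attained by $g^\star(h) = g^\star(0) e^{-h\epsilon}$. This sequence is strictly decreasing in $h$, so sorting $\{f^\star(\eta)\}_{\eta \in [n]}$ in non-increasing order places $f^\star(h\hat{\mu})$ in rank $h$, giving $f^\star_{(h)} = f^\star(h\hat{\mu})$ as claimed. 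Substituting this identification back into $f(\eta) \leq e^\epsilon f(\eta + \hat{\mu})$ evaluated at $\eta = h\hat{\mu}$ immediately yields the sorted-form inequality $f^\star_{(h)} - e^\epsilon f^\star_{(h+1)} \leq 0$ for $h \in [n]$ (with the natural cyclic convention $f^\star_{(n+1)} \equiv f^\star_{(0)}$).

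The main subtlety I anticipate is the case $\gcd(\hat{\mu}, n+1) > 1$, in which the shift partitions $[n]$ into several disjoint orbits of equal length, so the orbit of $0$ is a proper subset of $[n]$ and the identification $f^\star_{(h)} = f^\star(h\hat{\mu})$ ceases to be well-defined for $h$ beyond the orbit size. To recover the statement one must argue that the optimum places zero mass on orbits not containing $0$ (since the constraints decouple across orbits while the objective only depends on $f(0)$), reducing the problem to the orbit of $0$ alone and preserving the same monotone structure.
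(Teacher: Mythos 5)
Your proposal is correct and follows essentially the same route as the paper: specialize constraint \eqref{Eq:f0_constraint} to the single shift $\hat{\mu}$, iterate it along the cyclic orbit of $0$ under $\eta \mapsto \eta+\hat{\mu} \pmod{n+1}$, argue that every link of the resulting chain must be tight at the optimum, and read off the decreasing order $f^\star_{(h)} = f^\star(h\hat{\mu})$. The only differences are cosmetic: you justify tightness by summing the chain lower bounds against the normalization (and in passing derive the closed form and the zero-mass-off-orbit treatment of the non-coprime case, which the paper defers to Lemma~\ref{Lem:SD_delta0}), whereas the paper's appendix argues directly that each successive mass should be assigned its minimum feasible value $e^{-\epsilon}$ times the previous one.
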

\begin{proof}
The proof is in Appendix~\ref{Apdx:Lem1Proof}.
\end{proof}
To start, let us consider the case $\delta = 0$:
\begin{lem}\label{Lem:SD_delta0}
 For the SD neighborhood and  $\delta = 0$, the optimal noise PMF for the modulo addition mechanism is:\\
\textbf{Case 1}: If $(\hat{\mu}, (n+1))$ are relatively prime.
\begin{subequations}
\begin{align} \label{eq:Noise_IntegerCase1}
      f_{(k)}^\star &=  e^{-k \epsilon} f_{(0)}^\star, ~~~~k \in [n]_+,\\
            f_{(0)}^\star& = \frac{1 - e^{-\epsilon}}{1 - e^{-(n+1)\epsilon}}\equiv f^\star(0).\label{eq:SD_f*(0)}
\end{align}
\end{subequations}
\textbf{Case 2}: If $(\hat{\mu}, (n+1))$ are not relatively prime.
\begin{subequations}
\begin{align} \label{eq:Noise_IntegerCase2}
      f_{(k)}^\star ~&=  e^{-k \epsilon} f_{(0)}^\star, ~~k \in [N_{\hat{\mu}}-1]_+\\
      f^\star_{(k)} &= 0, \quad \quad ~~~~~k \in [n] \setminus [i\hat{\mu}], ~\forall i \in [N_{\hat{\mu}}-1]\\
      f_{(0)}^\star& = \frac{1 - e^{-\epsilon}}{1 - e^{-(N_{\hat{\mu}})\epsilon}}\equiv f^\star(0).\label{eq:SD_f*(0)_NotPrime}
\end{align}
\end{subequations}
where $N_{\hat{\mu}} = \frac{(n+1)}{\gcd((n+1),\hat{\mu})}$
and $\rho^{\star} = 1-f^\star(0)$.
\end{lem}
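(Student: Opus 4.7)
The plan is to view the LP as a collection of independent cyclic chain problems, one per orbit of the shift $\eta \mapsto \eta + \hat{\mu} \pmod{n+1}$, and to show that the optimum concentrates all probability mass on the orbit $\mathcal{O}_0$ containing $0$ with the consecutive chain inequalities tight.

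First, I would note that the map $\eta \mapsto \eta + \hat{\mu}$ on $\mathbb{Z}/(n+1)\mathbb{Z}$ partitions $[n]$ into $\gcd(n+1,\hat{\mu})$ disjoint orbits, each of common length $N_{\hat{\mu}} = (n+1)/\gcd(n+1,\hat{\mu})$. Case~1 corresponds to $\gcd(n+1,\hat{\mu}) = 1$, a single orbit covering $[n]$; in Case~2 the orbit of $0$ is $\mathcal{O}_0 = \{0,\hat{\mu},2\hat{\mu},\ldots,(N_{\hat{\mu}}-1)\hat{\mu}\} \pmod{n+1}$, a proper subset. Since the constraint $f(\eta) \le e^{\epsilon} f(\eta+\hat{\mu})$ only couples entries within the same orbit, the feasibility region decouples into independent sub-systems indexed by the orbits.

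Next, I would argue that any optimal $f^{\star}$ must assign zero probability to every orbit other than $\mathcal{O}_0$. Because the objective $1-f(0)$ depends only on $f(0)$, and the constraints on orbits other than $\mathcal{O}_0$ do not involve $f(0)$, any mass placed outside $\mathcal{O}_0$ merely consumes normalization budget $\sum_\eta f(\eta)=1$ that could otherwise be used to push $f(0)$ upward. This establishes the support statement of Case~2; what remains is to solve the cyclic chain LP on $\mathcal{O}_0$ alone. Iterating $f(k\hat{\mu}) \le e^{\epsilon} f((k+1)\hat{\mu})$ along $\mathcal{O}_0$ gives $f(k\hat{\mu}) \ge e^{-k\epsilon} f(0)$ for $k = 0, 1, \ldots, N_{\hat{\mu}}-1$. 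Summing and using $\sum_{k=0}^{N_{\hat{\mu}}-1} f(k\hat{\mu}) = 1$ yields $f(0) \le (1-e^{-\epsilon})/(1-e^{-N_{\hat{\mu}}\epsilon})$, with equality iff the consecutive chain inequalities are all tight; this produces the PMF stated in both cases, with Case~1 recovered by $N_{\hat{\mu}} = n+1$. Feasibility then reduces to checking the single ``wrap-around'' constraint $f^{\star}((N_{\hat{\mu}}-1)\hat{\mu}) \le e^{\epsilon} f^{\star}(0)$, which holds since $e^{-(N_{\hat{\mu}}-1)\epsilon} \le e^{\epsilon}$ for any $\epsilon \ge 0$.

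The main obstacle is not algebraic but conceptual: one must separate the cyclic structure of the modulo-$(n+1)$ constraints from the linear chain furnished by Lemma~\ref{Lemma:ordered_set}, and recognize that the optimum concentrates all slack in a single ``wrap-around'' constraint while making the remaining $N_{\hat{\mu}}-1$ constraints along $\mathcal{O}_0$ tight. Once mass concentration on $\mathcal{O}_0$ is justified, the remainder is a routine geometric-sum computation, and $\rho^\star = 1 - f^\star(0)$ follows directly from the definition of the error rate.
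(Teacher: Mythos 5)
Your proposal is correct and follows essentially the same route as the paper's proof: chain the constraints $f(k\hat{\mu})\le e^{\epsilon}f((k+1)\hat{\mu})$ along the orbit of $0$ to force $f(k\hat{\mu})\ge e^{-k\epsilon}f(0)$, observe that maximizing $f(0)$ requires tightness and zero mass off that orbit, and normalize via the geometric sum. Your version merely makes explicit the orbit decomposition and the slack wrap-around constraint that the paper leaves implicit.
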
  
\begin{proof}
The proof is in Appendix~\ref{Apdx:Cor1Proof}.
\end{proof}
Lemma~\ref{Lem:SD_delta0} is verified numerically in Section~\ref{Sec:Simulations} in both Case 1 and 2 (see Fig.~\ref{Fig:SD_PMF_plots}). 
We illustrate the two cases in Fig.~\ref{Fig:Periodic_Explain}. 

\begin{figure}[h!] 
    \centering
    \subfloat[][
    Case 1: $\hat{\mu} = 3, ~n =7$, here (3, 8) are relatively prime.]{\includegraphics[width=0.445\linewidth]{./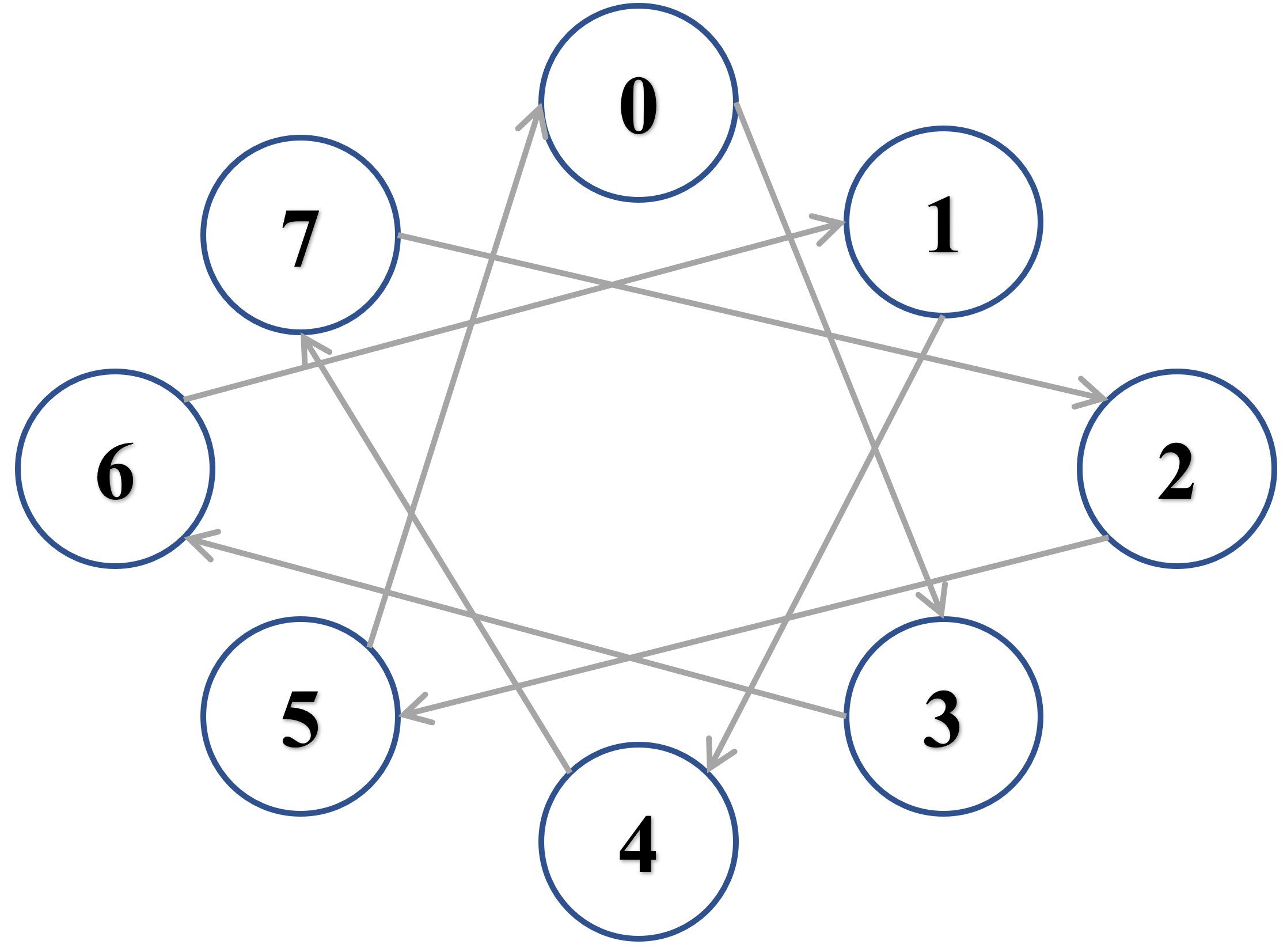}\label{fig:Case1} }~
    \subfloat[][
    Case 2: $\hat{\mu} = 2, ~n =7$, here (2, 8) are not relatively prime. 
    ]{\includegraphics[width=0.435\linewidth]{./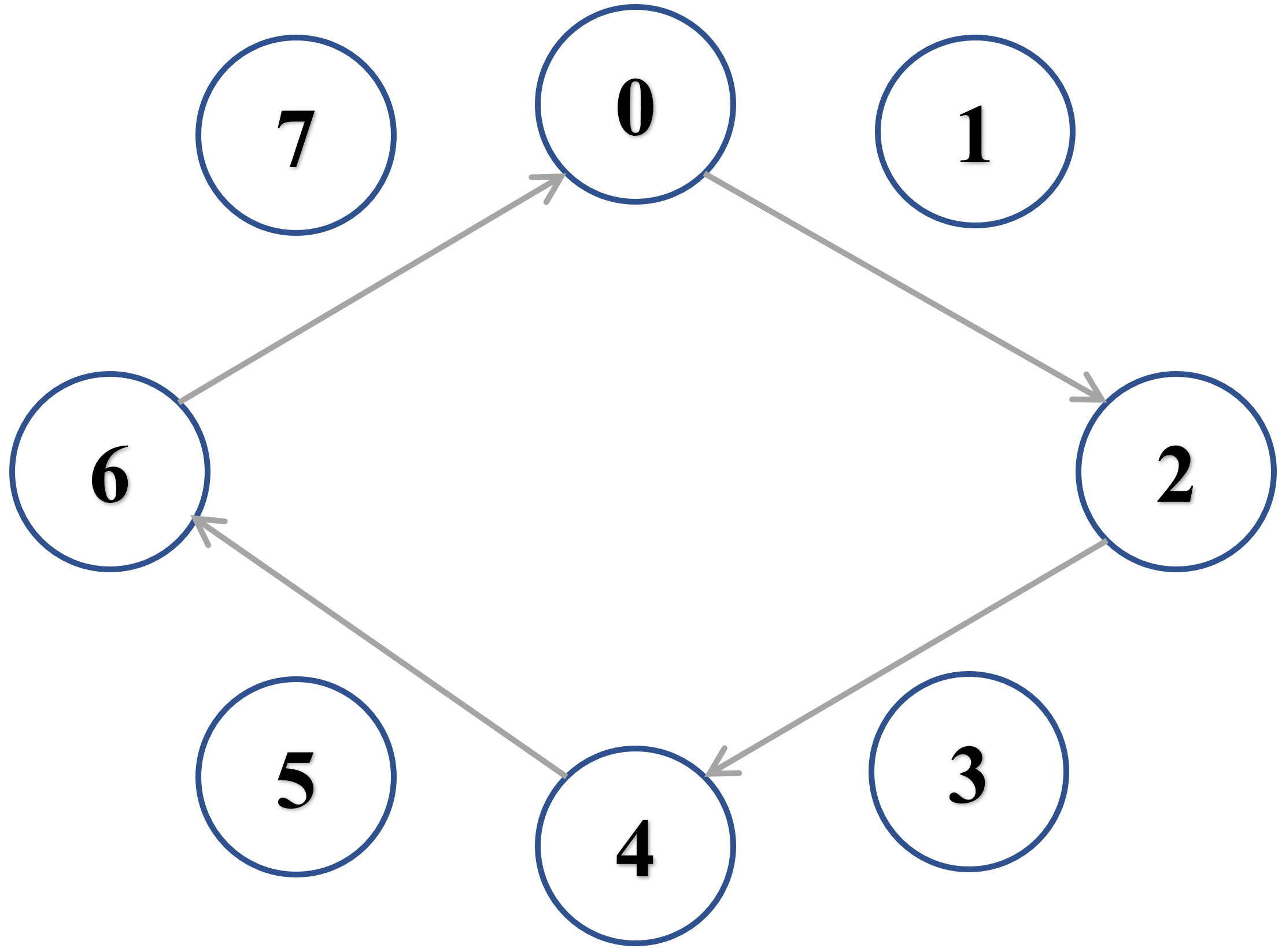}\label{fig:Case2} }
    \caption{In these examples $f^\star{(\hat{\mu})}$ is single distance, $\hat{\mu}$, away from $f^\star{(0)}$ hence it is assigned $e^{-\epsilon}f^\star{(0)}$, next  $f^\star{(2\hat{\mu})}$ is assigned $e^{-\epsilon}f^\star{(\hat{\mu})}$ since it is $\hat{\mu}$ away from $f^\star{(\hat{\mu})}$ and so on and so forth. So the order of assignment of values for plot (a) example is: $0,3,6,1,4,7,2,5$ and the order of assignment of values for plot (b) example is: $0, 2, 4, 6$. Since the values at $1, 3, 5, 7$ are not $\hat{\mu}$ away from $f^\star{(2k\hat{\mu})}, k \in [N_{\hat{\mu}}-1=3]$ they are assigned $0$ value to have a higher $f^\star{(0)}$.}
    \label{Fig:Periodic_Explain}
\end{figure}
From \eqref{eq:SD_f*(0)} we can observe that for case 1, the error rate $\rho^{\star}$ depends only on $\epsilon$ and $n$, and for case 2 (see \eqref{eq:SD_f*(0)_NotPrime}), $\rho^{\star}$ depends on both $\epsilon$ and  $N_{\hat{\mu}}$.

\begin{rem}\label{rem:er-mse}
 It is notable that in this formulation where the cost is the error rate,  positive probability masses corresponding to higher noise outcomes tend to be less likely than those having smaller outcomes. This is why these designs exhibit low MSE.
\end{rem}
Mechanisms with better error rate (lower $\rho$)  must allow for $\delta > 0$. It can be proven (see Theorem~\ref{Thm:Linear_Flat_Region}) that the optimal error rate $\rho^{\star}(\delta,\epsilon)$ vs. $\delta$ curve is piece-wise linear, interleaving flat regions with intervals with linear negative slope, see  Fig.~\ref{fig:flat_linear_region}.
\begin{figure}
    \centering
    \includegraphics[width=0.45\textwidth]{./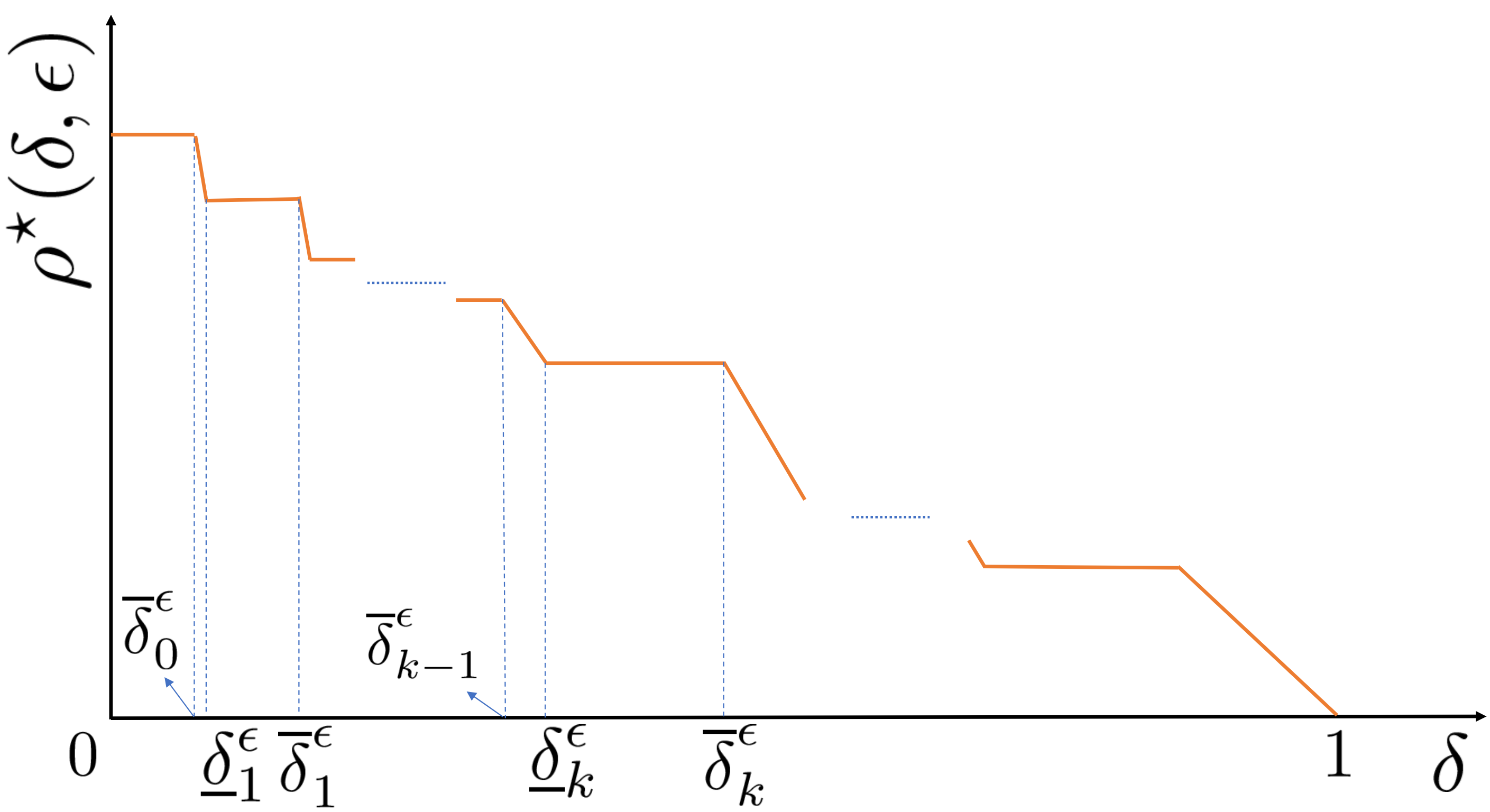}
    \caption{The variation of $\rho$ as a function of $\delta$ for SD neighborhood showing the alternate flat and linear regions.}
    \label{fig:flat_linear_region}
\end{figure}
We categorize them as ``linear regions'' and ``flat regions''. 
Let $\underline{\delta}^{\epsilon}_k$ and $\overline{\delta}^{\epsilon}_k$ be the instances of $\delta$ at which $\rho^{\star}(\delta,\epsilon)$ changes from $k^{th}$ linear region to $k^{th}$ flat region and $k^{th}$ flat region to $(k+1)^{th}$ linear region, respectively.

\begin{rem} \label{Rem:PrimeRemark} We state the  following results of this section, i.e., Section~\ref{Sec:Opt_Noise_SD_Nbd}, for the case $(n+1, \hat{\mu})$ that are relatively prime. For the case where they are not, the results are obtained by replacing $n$ with $N_{\hat{\mu}} = \frac{(n+1)}{\gcd((n+1),\hat{\mu})}$ in all the expressions and explanations.
\end{rem}

Let us define the following quantities:
\begin{subequations}
\label{eq:delta-top-bottom}
\begin{align} 
\overline{\delta}^{\epsilon}_k&:= e^{-(n-k-1)\epsilon}\frac{1-e^{-\epsilon}}{1-e^{-(n-k+1)\epsilon}},  ~\mbox{for $k\in [n\!-\!1]$}\\
\overline{\delta}^{\epsilon}_n&:=1,
~~~~
\underline{\delta}^{\epsilon}_0:=0
\\
\underline{\delta}^{\epsilon}_k&:=
e^{-(n-k)\epsilon}\frac{1-e^{-\epsilon}}{1-e^{-(n-k+1)\epsilon}}, ~\mbox{for $k\in [n]_+$}.
\end{align}
\end{subequations}
\begin{thm}\label{Thm:Linear_Flat_Region}
$k\in [n]_+$, in the $k^{th}$ section where $\rho^\star(\delta,\epsilon)$ is flat  within 
$\underline{\delta}^{\epsilon}_{k} \leq \delta\leq \overline{\delta}^{\epsilon}_k$:
\begin{align}\label{eq:SD_f*(h)flat}
f^\star_{(h)}&=
\begin{cases}
    \underline{\delta}^{\epsilon}_{k} e^{(n-k-h)\epsilon}, &h \in [n-k],\\
    0, &h \in [n-k+1:n],
\end{cases}
\end{align}
In the portion of the $k^{th}$ section where $\rho^\star(\delta,\epsilon)$ decreases linearly with $\delta$, which are within $~ \overline{\delta}^{\epsilon}_{k-1}< \delta\leq \underline{\delta}^{\epsilon}_k$:
\begin{align}\label{eq:SD_f*(h)linear}
f^\star_{(h)}=
\begin{cases}
    \delta e^{(n-k-h)\epsilon}, &\!\!\!\!h \!\in \! [n\!-\!k],\\
    e^{(n-h)\epsilon}\frac{e^{\epsilon}-1}{e^{k\epsilon}-1}\left(
    1-\frac{\delta}{\underline{\delta}^{\epsilon}_k}
    \right),& \!\!\!\!h\!\in \! [n\!-\!k+1\!:\!n].
\end{cases}
\end{align}
and $\rho^\star(\delta,\epsilon) =1-f^\star_{(0)}$. 
\end{thm}
\begin{proof}
The proof is in Appendix~\ref{Apdx:Thm3Proof}.
\end{proof}
From Theorem~\ref{Thm:Linear_Flat_Region} we note that the boundary point $\underline{\delta}^{\epsilon}_k$ determines the value of $f^\star_{(n-k)}$, which is the smallest non-zero probability mass in the $k^{th}$ flat region. Similarly, the other boundary point $\overline{\delta}^{\epsilon}_k \equiv e^{\epsilon} \underline{\delta}^{\epsilon}_k$ indicates the value of $f^\star_{(n-k-1)}$, which is the smallest non-zero probability mass in the $(k-1)^{th}$ flat region.
Having calculated the optimal PMF for the SD neighborhood case in Theorem~\ref{Thm:Linear_Flat_Region}, the $(\epsilon,\delta)$ curves correspond to $f^{\star}(0)=1-\rho^{\star}$ for all its $n+1$ possible expressions or, better stated, they are the level curves  $\rho(\delta, \epsilon)=\rho^{\star}$. The trend of $\delta$ curves is monotonically decreasing with respect to $\epsilon$ for a given $\rho^{\star}$. 
Let $\epsilon_0^{\rho^{\star}}$ be the solution obtained setting $f^{\star}(0)$ in \eqref{eq:SD_f*(0)} to be equal to $1-\rho^{\star}$:
\begin{align}\label{eq:eps0}
\epsilon_0^{\rho^{\star}}:~~\rho^{\star}=1-\frac{1-e^{-\epsilon_0^{\rho^{\star}}}}{1-e^{-(n+1)\epsilon_0^{\rho^{\star}}}} 
\end{align}
Then we must have that $\delta=0$ for
$\epsilon\geq \epsilon_0^{\rho^{\star}}$. Because this corresponds to a flat region for $f^{\star}(0)$, there has to be a discontinuity moving towards lower values $\epsilon<\epsilon_0^{\rho^{\star}}$, and $\delta$ must immediately jump to $\overline{\delta}^{\epsilon_0^{\rho^{\star}}}_0$ when $\epsilon$ is an infinitesimal amount below $\epsilon_0^{\rho^{\star}}$.
This point is the edge of the {\it linear region}. For a range of values of $\epsilon<\epsilon_0^{\rho^{\star}}$, $\delta$ with respect to $\epsilon$ must have the negative exponential trend $\delta =(1-\rho^{\star})e^{-n\epsilon}$ obtained from equation \eqref{eq:SD_f*(h)linear} for $h=0,k=1$ until the next jump occurs, for a value $\epsilon_1^{\rho^{\star}}$ which is obtained by setting $f^{\star}(0)$ for $h=0$ and $k=1$ in \eqref{eq:SD_f*(h)flat} to be equal to $1-\rho^{\star}$:
\begin{align}
    \epsilon_1^{\rho^{\star}}:~~\rho^{\star}=1-\underline{\delta}_1^{\epsilon_1^{\rho^{\star}}}
    e^{(n-1)\epsilon_1^{\rho^{\star}}}=1-\frac{1-e^{-\epsilon_1^{\rho^{\star}}}}{1-e^{-n\epsilon_1^{\rho^{\star}}}}.
\end{align}
Following this logic, one can prove that the optimum $(\epsilon, \delta)$ curve for a given error rate $\rho^{\star}$ is:
\begin{cor}
\label{Cor:eps_delta_SD}
For a given $\epsilon>0$, the privacy loss for the SD neighborhood case with the optimal noise mechanism, is a discontinuous function of $\epsilon$, where:
\begin{align}
    \delta^\epsilon = e^{-(n-k)\epsilon}(1-\rho^{\star}),~~~\epsilon^{\rho^{\star}}_{k} \leq \epsilon<\epsilon^{\rho^{\star}}_{k-1} 
\end{align}
when $\rho^\star = 1 - f^\star_{(0)}$ is in $k^{th}$ section, $k \in [n]_+$ and $\epsilon_k^{\rho^{\star}}$ are the solutions of the following equations:
\begin{equation}\label{Eq:eps-rho*}
        \epsilon_k^{\rho^{\star}}\!\!:\!\rho^{\star}=1-\underline{\delta}_k^{\epsilon_k^{\rho^{\star}}}
    e^{(n-k)\epsilon_k^{\rho^{\star}}}=1-\frac{1-e^{-\epsilon_k^{\rho^{\star}}}}{1-e^{-(n+1-k)\epsilon_k^{\rho^{\star}}}}.
\end{equation}
\end{cor}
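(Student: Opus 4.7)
The plan is to invert the piecewise characterization of $\rho^\star(\delta,\epsilon)$ given in Theorem~\ref{Thm:Linear_Flat_Region}, treating $\rho^\star$ as fixed and viewing $\delta$ as a function of $\epsilon$. First I would specialize \eqref{eq:SD_f*(h)linear} to $h=0$, which lies in $[n-k]$ for every $k\in[n]_+$. The case-one branch then collapses to the clean identity $f^\star_{(0)}=\delta\,e^{(n-k)\epsilon}$ throughout the $k$-th linear region. Imposing $1-\rho^\star=f^\star_{(0)}$ and solving for $\delta$ yields the advertised formula $\delta^\epsilon=e^{-(n-k)\epsilon}(1-\rho^\star)$ directly.

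Next I would pin down the interval of $\epsilon$ on which this formula really sits inside the $k$-th linear region, namely $\overline{\delta}^{\epsilon}_{k-1}<\delta^\epsilon\le \underline{\delta}^{\epsilon}_k$. Substituting the closed forms from \eqref{eq:delta-top-bottom} and cancelling a common factor $e^{-(n-k)\epsilon}$, both inequalities reduce to comparisons of the form $g_j(\epsilon)\mathrel{?}1-\rho^\star$, with $g_j(\epsilon):=(1-e^{-\epsilon})/(1-e^{-(n+1-j)\epsilon})$. The right endpoint $\delta^\epsilon=\underline{\delta}^{\epsilon}_k$ collapses to $g_k(\epsilon)=1-\rho^\star$, whose unique root is exactly $\epsilon^{\rho^\star}_k$ from \eqref{Eq:eps-rho*}; the left endpoint $\delta^\epsilon=\overline{\delta}^{\epsilon}_{k-1}$ becomes $g_{k-1}(\epsilon)=1-\rho^\star$, solved by $\epsilon^{\rho^\star}_{k-1}$. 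This gives the announced bracketing $\epsilon^{\rho^\star}_k\le \epsilon<\epsilon^{\rho^\star}_{k-1}$.

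The structural input underneath the whole argument is that $g_j$ is strictly increasing on $(0,\infty)$ with range $(1/(n+1-j),1)$; a one-line derivative check delivers this, and it both makes $\epsilon^{\rho^\star}_k$ well-defined and forces $\epsilon^{\rho^\star}_k$ to be strictly decreasing in $k$, so the intervals $[\epsilon^{\rho^\star}_k,\epsilon^{\rho^\star}_{k-1})$ tile the half-line $\epsilon<\epsilon^{\rho^\star}_0$. To see the claimed discontinuity, I would evaluate the $k$-th and $(k-1)$-th formulas at the common endpoint $\epsilon=\epsilon^{\rho^\star}_{k-1}$: the former gives $\overline{\delta}^{\epsilon}_{k-1}$ while the latter gives $\underline{\delta}^{\epsilon}_{k-1}=e^{-\epsilon}\overline{\delta}^{\epsilon}_{k-1}$, so $\delta^\epsilon$ drops by a factor $e^{-\epsilon}$ there; this missing mass corresponds exactly to the $(k-1)$-th flat region of $\rho^\star(\delta,\epsilon)$, which contributes no new $\epsilon$ value to the level curve.

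The main obstacle I anticipate is not the algebra but the bookkeeping that guarantees $\delta^\epsilon$ is single-valued as a function of $\epsilon$: I must exclude the possibility that the level set $\rho(\delta,\epsilon)=\rho^\star$ meets a flat region of $\rho^\star(\delta,\epsilon)$ on a non-trivial interval of $\epsilon$. This is handled by noting that within the $k$-th linear region $\rho^\star(\cdot,\epsilon)$ is strictly decreasing in $\delta$ (immediate from $f^\star_{(0)}=\delta e^{(n-k)\epsilon}$), so the solution in $\delta$ is unique for each admissible $\epsilon$, and that a flat region intersects the level $\rho^\star$ only at the single $\epsilon=\epsilon^{\rho^\star}_k$ where the flat value equals $\rho^\star$—a measure-zero event absorbed into the boundary convention of the piecewise definition. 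Once this point is dispatched, the corollary follows by collecting the $k$-indexed pieces.
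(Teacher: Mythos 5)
Your proposal is correct and follows essentially the same route as the paper: both arguments invert the level curve $f^\star_{(0)}=1-\rho^\star$ using the linear-region identity $f^\star_{(0)}=\delta e^{(n-k)\epsilon}$ from Theorem~\ref{Thm:Linear_Flat_Region} and locate the breakpoints $\epsilon_k^{\rho^\star}$ by matching $\delta=\underline{\delta}_k^{\epsilon}$ at the flat-region boundaries. Your version is somewhat more careful than the paper's (explicit monotonicity of $g_j$, verification of both interval endpoints, and the single-valuedness argument for the flat regions), but it is the same proof in substance.
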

\begin{proof}
As discussed before the Corollary, for $\delta=0$, the level curves of $\rho^{\star}(\delta,\epsilon)=\rho^{\star}$ as a function of $\epsilon$ must be monotonically decreasing for $\epsilon\geq \epsilon_0^{\rho^{\star}}$ (see \eqref{eq:eps0}) as $\epsilon$ increases. Then the curve will have discontinuities that correspond to the flat regions and the trend between these discontinuities is obtained through the equation $(1-\rho^{\star})=f^{\star}(0)=\delta e^{(n-k)\epsilon}$, which implies $\delta=(1-\rho^{\star})e^{-(n-k)\epsilon}$. The $k^{th}$ interval starts at the point $\epsilon=\epsilon_k^{\rho^{\star}}$ such that $\delta=\underline{\delta}_k^{\epsilon=\epsilon_k^{\rho^{\star}}}$  ensures that $f^{\star}(0)=1-\rho^{\star}$; thus, $\epsilon=\epsilon_k^{\rho^{\star}}$ must be the solution of \eqref{Eq:eps-rho*}.   
\end{proof}
\subsubsection{The BD neighborhood} \label{Sec:Opt_Noise_BD_Nbd}

Also in this case we first start with the optimum noise for $\delta = 0$. First, let us express $n$ as:
\begin{equation}
\label{eq:n_represent}
  n = b\bar{\mu}+r  
\end{equation}
where $b = \lfloor{n/\bar{\mu}}\rfloor$
and $r \in [\bar{\mu}-1]$.
For this case, the inequalities in~\eqref{Eq:f0_constraint} can be written as the following:
\begin{align}
    \label{Eq:f_eta_ordered_BD}
    f^\star{(h)} - e^\epsilon f^\star{(h+\mu)}\leq  0,~~\forall h \in [n],  ~\forall \mu \in [\bar{\mu}]_+.
\end{align}
\begin{figure}[!ht]
\subfloat[][]{%
  \includegraphics[clip,width=0.45\textwidth]{./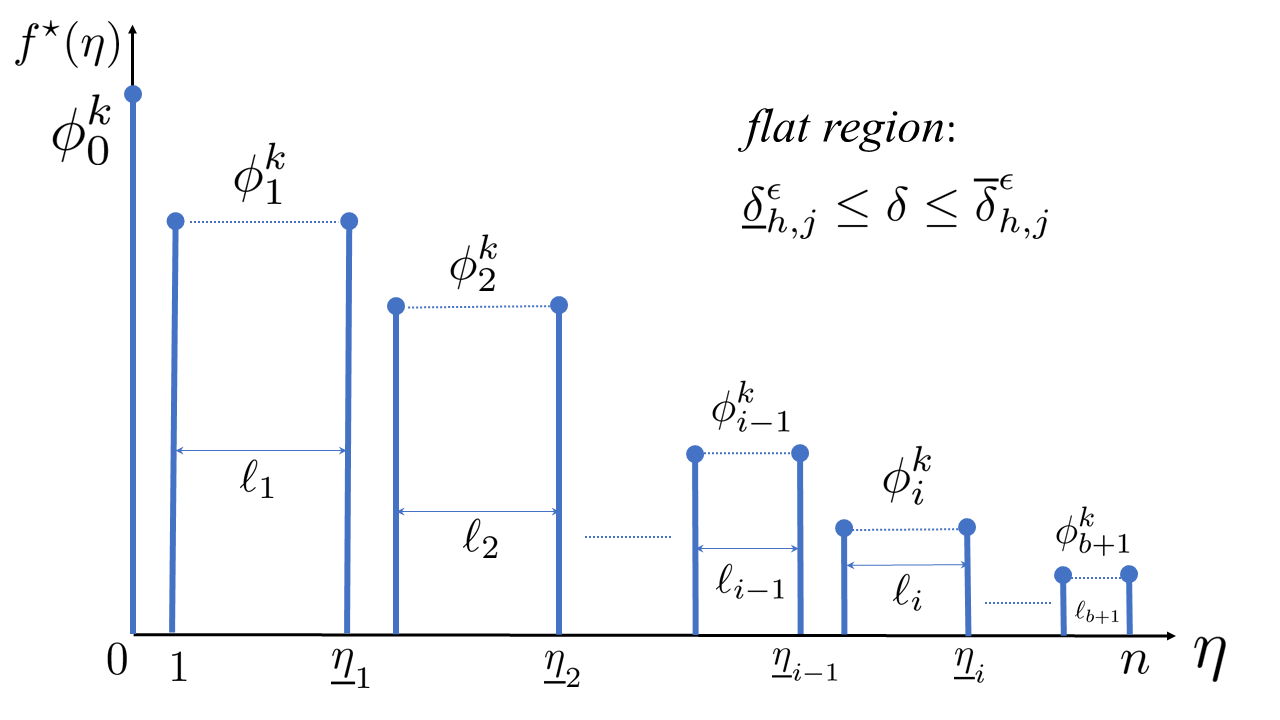}%
  \label{fig:flat_PMF_BD}
}\\
\subfloat[][]{%
  \includegraphics[clip,width=0.45\textwidth]{./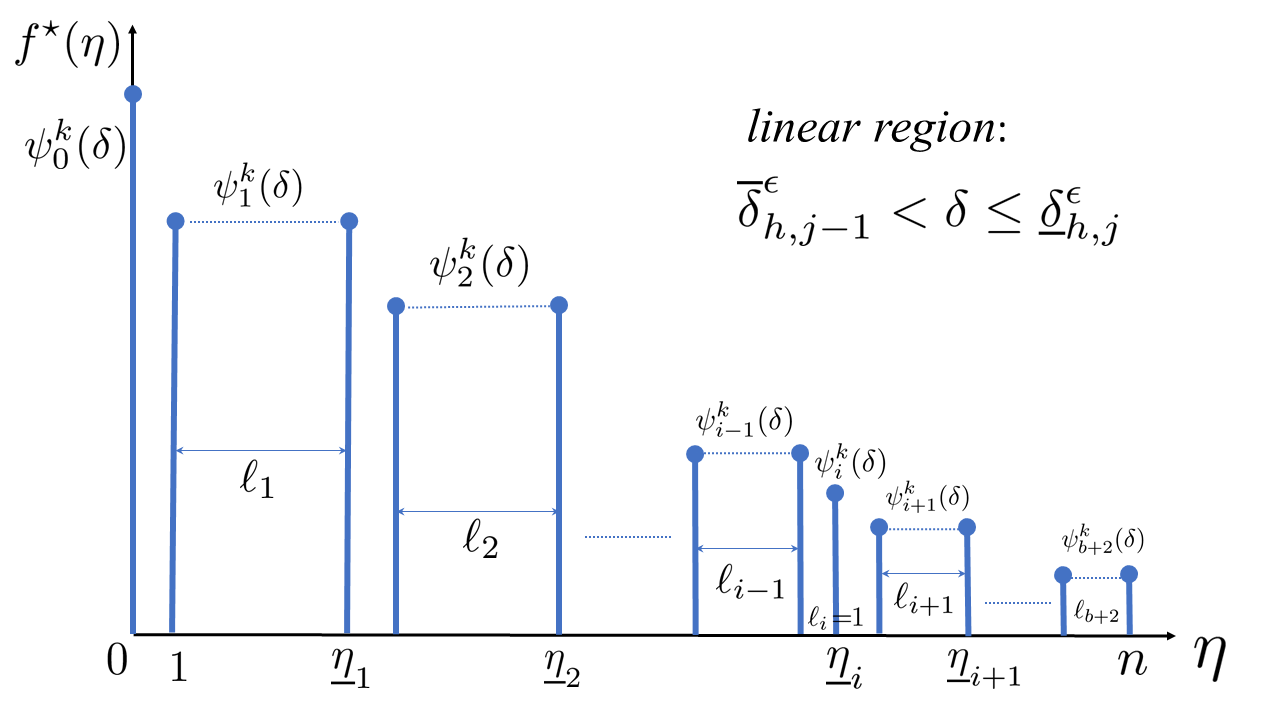}%
  \label{fig:linear_PMF_BD}
}
\caption{The PMF of the optimal noise mechanism for the BD neighborhood follows a staircase pattern. In the flat region, it has $b+1$ steps and $i^{th}$ step height is $\phi^k_i$. Similarly, in the linear  region, the PMF has $b+2$ steps and $i^{th}$ step height is $\psi^k_i(\delta)$.}
\label{fig:FixedCardNoise}
\end{figure}
\begin{lem}\label{Lem:BD_delta0}
In the case of a BD neighborhood of size $\bar{\mu}$, the optimum PMF $f^\star(\eta)$ has the following form for $i\in[b+1]_+$:
\begin{equation}
    f^\star(\eta)=f^\star(0)e^{-i\epsilon}\!\equiv \phi_i,
    ~~(i\!-\!1)\bar{\mu}+1\leq \eta\leq \min(i\bar{\mu},n).
    \label{eq:BD_fstarValues}
\end{equation}
and the mass at zero is:
\begin{align}
    f^\star(0)&=\left(1+\bar{\mu}\sum_{i=1}^b e^{-i\epsilon}+
    re^{-(b+1)\epsilon}\right)^{-1}\equiv \phi_0.\label{eq:Noise_FixedCardi}
\end{align}
The PMF has, therefore, a staircase trend with steps of length $\ell_i=\bar{\mu}$ for $i\in[b]_+$ and $\ell_{b+1}=r$ and $\rho^\star=1-\phi_0$. 
\end{lem}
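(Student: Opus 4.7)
The plan is to cast the problem as maximizing $f(0)$ (equivalently minimizing $\rho=1-f(0)$) subject to the normalization $\sum_{\eta=0}^n f(\eta)=1$ and the BD privacy constraints~\eqref{Eq:f_eta_ordered_BD}, and to argue that the optimum drives each $f(\eta)$ with $\eta\geq 1$ to the tightest lower bound it admits. Because~\eqref{Eq:f_eta_ordered_BD} only lower-bounds $f(\eta+\mu)$ relative to $f(\eta)$, chaining the inequalities along any sequence $\mu_1,\dots,\mu_k\in[\bar\mu]_+$ gives
\begin{equation}
f\bigl((\mu_1+\cdots+\mu_k)\bmod(n+1)\bigr) \;\geq\; e^{-k\epsilon}\,f(0).
\end{equation}
Hence $f(\eta)\geq e^{-k^\star(\eta)\epsilon}f(0)$, where $k^\star(\eta)$ is the minimum number of summands from $[\bar\mu]_+$ needed to represent $\eta$ modulo $n+1$. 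Since reducing any $f(\eta)$ with $\eta\geq 1$ allows $f(0)$ to grow through the normalization, the optimum saturates all these lower bounds.

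Next I would evaluate $k^\star(\eta)$ in closed form. Any representation that wraps at least once around the cycle uses at least $\lceil(\eta+n+1)/\bar\mu\rceil$ terms, which strictly exceeds $\lceil\eta/\bar\mu\rceil$ because $(n+1)/\bar\mu>1$; the direct representation is therefore optimal and $k^\star(\eta)=\lceil\eta/\bar\mu\rceil$ for $\eta\in[n]_+$. Using $n=b\bar\mu+r$ from~\eqref{eq:n_represent}, this partitions $[n]_+$ into blocks $[(i-1)\bar\mu+1,\min(i\bar\mu,n)]$ on which $k^\star$ equals $i$, for $i\in\{1,\dots,b+1\}$. Setting $f^\star(\eta)=\phi_0 e^{-i\epsilon}$ on block $i$ yields the staircase in~\eqref{eq:BD_fstarValues}, and the normalization becomes
\begin{equation}
\phi_0\Bigl(1+\bar\mu\sum_{i=1}^{b}e^{-i\epsilon}+r\,e^{-(b+1)\epsilon}\Bigr)=1,
\end{equation}
which delivers~\eqref{eq:Noise_FixedCardi} and $\rho^\star=1-\phi_0$.

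The main obstacle is feasibility, since~\eqref{Eq:f_eta_ordered_BD} is cyclic and must be verified for \emph{every} pair $(h,\mu)\in[n]\times[\bar\mu]_+$, including wrap-around cases where the naive staircase picture might seem to be violated. Letting $i(h)$ denote the block index of $h$ (with $i(0)=0$), the constraint reduces to $i\bigl((h+\mu)\bmod(n+1)\bigr)\leq i(h)+1$. When $h+\mu\leq n$ this follows from $\lceil(h+\mu)/\bar\mu\rceil\leq\lceil h/\bar\mu\rceil+1$, using $\mu\leq\bar\mu$. When $h+\mu\geq n+1$, the residue lies in $\{0,1,\dots,\bar\mu-1\}$, so its block index is at most $1$, which is trivially at most $i(h)+1$. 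Feasibility is therefore established, and combined with the matching lower bounds it proves optimality and the claim.
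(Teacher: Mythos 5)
Your proof is correct and follows the same strategy as the paper's (much briefer) argument: chain the inequalities in \eqref{Eq:f_eta_ordered_BD} starting from $f(0)$, saturate each $f(\eta)$ at its tightest implied lower bound so as to maximize $f(0)$, and then normalize to obtain \eqref{eq:Noise_FixedCardi}. Your write-up is in fact more complete than the paper's sketch, since you explicitly characterize the minimal chain length $k^\star(\eta)=\lceil\eta/\bar{\mu}\rceil$ (showing wrap-around representations are never shorter because $(n+1)/\bar{\mu}>1$) and verify feasibility of all the cyclic constraints, including the wrap-around pairs --- two points the paper leaves implicit.
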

\begin{proof}
The proof is similar to that of Lemma \ref{Lem:SD_delta0} because it recognizes that 
it is best to meet the inequalities in \eqref{Eq:f_eta_ordered_BD} as equalities, since that allows for the largest $f^{\star}(0)$. The only difference is that the masses in the first group $[\bar{\mu}]_+$ are equal to $\phi_1=e^{-\epsilon}f^{\star}(0)$, thus they constrain a second group to have value $\phi_2=e^{-\epsilon}\phi_1$ and so on. There are $b$ of them that contain $\bar{\mu}$ masses of probability and only the last group includes the last $r$ values. $f^\star(0)$ is obtained normalizing the PMF to add up to 1. 
\end{proof}

The PMF for $\delta>0$ has staircase pattern (see Fig.~\ref{fig:FixedCardNoise}), similar to Lemma \ref{Lem:BD_delta0} and, also in this case, for $\delta>0$  the $\rho^\star(\delta,\epsilon)$ has a piece-wise linear trend that alternates between flat and linear regions. However, the BD neighborhood case has an intricate pattern in which the constraints become violations, as the privacy loss $\delta \rightarrow 1$. Rather than having $k=n$, the number of sections $k$ is quadratic with respect to $b$. To explain the trend, it is best to divide the section of the $f^{\star}(0) \equiv 1 - \rho^\star(\delta,\epsilon)$ curve vs. $\delta$ in $b$ segments, indexed by $h \in [b]_+$ as shown in Fig.~\ref{fig:BD_f0_vs_delta}. Except for the last interval corresponding to $h=b$, the other segments, indexed by $h \in [b-1]_+$, are further divided into $h$ segments, indexed by $j \in [h]_+$ and this index refers to one of the alternating flat and linear regions within the $h^{th}$ interval. This results in  $k=\frac{b(b-1)}{2}$ alternating flat and linear regions. Instead, in the segment indexed by $h=b$, there is only one linear region i.e.,  $f^{\star}(0) = \delta$. The optimum distribution is specified in the following theorem:

\begin{figure}
    \centering
    \includegraphics[width=0.45\textwidth]{./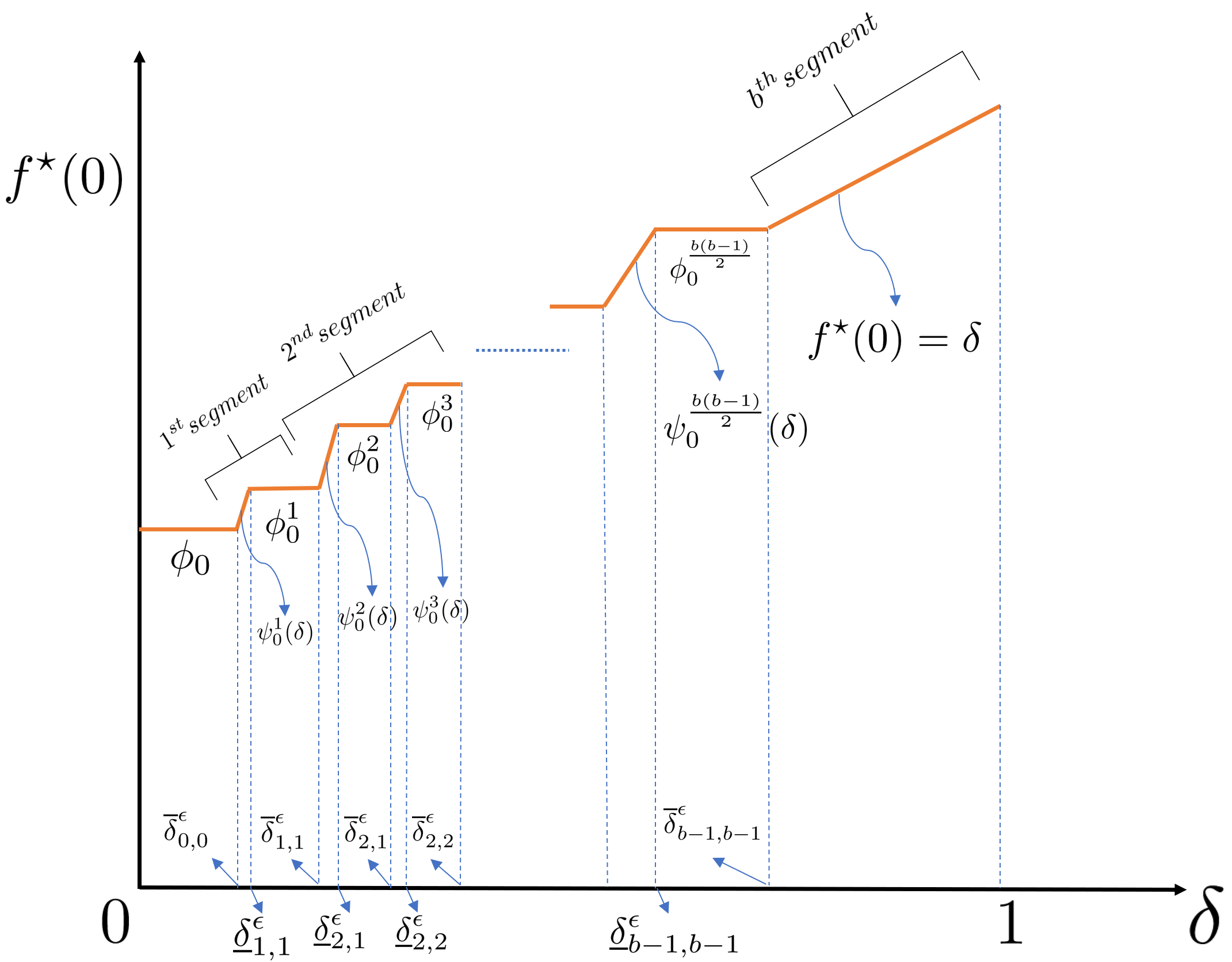}
    \caption{The variation of $f^{\star}(0)$ as a function of $\delta$ for BD neighborhood showing $b$ segments with the alternate flat and linear regions.}
    \label{fig:BD_f0_vs_delta}
\end{figure}
\begin{thm}
\label{Thm:BD_results}
 Let $b+r<\bar{\mu}$. For a given $\epsilon >0$ and for $\mu_{\XX}  \in [\bar{\mu}]_+$, $~\forall X \in {\mathcal X},\forall X'\in {\mathcal X}_{X}^{(1)}$ (i.e. the BD neighborhood), $f^{\star}(0)$  versus $\delta$ features flat and linear regions as shown in Fig.~\ref{fig:BD_f0_vs_delta}. In the first $(b-1)$ segments, indexed by $h\in [b-1]_+$, each alternating a pair of flat and linear regions indexed $j \in [h]_+$, with respective boundaries $\underline{\delta}^{\epsilon}_{h,j}\leq \delta\leq \overline{\delta}^{\epsilon}_{h,j}$ and $\overline{\delta}^{\epsilon}_{h,j-1} < \delta\leq \underline{\delta}^{\epsilon}_{h,j}$, with the convention $\overline{\delta}^{\epsilon}_{h,0}=\overline{\delta}^{\epsilon}_{h-1,h-1}$.
The following facts are true:\\[0.1cm]
 {\bf (a)}   In the $k^{th}$ flat region, $k=\sum_{j'=1}^{h-1} j'+j=\frac{(h-1)h}{2}+j$, the optimum PMF (c.f. Fig.~\ref{fig:flat_PMF_BD}) for $i\in [b+1]_+$ is:
\begin{equation}
    \label{eq:PMF_BD_thm}
    f^\star(\eta)= f^\star(0)e^{-i\epsilon}\equiv \phi_i^k,~ \underline{\eta}_{i-1}< \eta \leq \underline{\eta}_{i},\underline{\eta}_{0}=0,
\end{equation}
where what distinguishes the distributions for each $k$ are the intervals $\underline{\eta}_{i-1}< \eta \leq \underline{\eta}_{i},i \in [b+1]_+$ with equal probability mass $\phi^k_i$. More specifically, considering the $k^{th}$ flat region, corresponding to the pair $h,j$ with $h\in [b-1]_+$, $j\in [h]_+$,
the intervals $\underline{\eta}_{i-1}< \eta \leq \underline{\eta}_{i}$ of the optimum PMF,  for $i\in [b+1]_+$, have lengths $\ell_i=\underline{\eta}_{i}-\underline{\eta}_{i-1}$:
\begin{align}
   \ell_i&\!=\!
   \begin{cases}
          1,& \!\!\!\!\mbox{for}~ i=0\\
          \bar{\mu},& \!\!\!\!\mbox{for} ~i \in [b\!-\!h]_+ \\
     \bar{\mu}\!-\!1,& \!\!\!\!\mbox{for}~i \in [b\!-\!h\!+\!1\!:\!b], \\&~~~i \! \neq \!b\!-\!h\!+\!j\!+\!1\\
     \bar{\mu},& \!\!\!\!\mbox{for} ~i \!=\!b\!-\!h\!+\!j\!+\!1 \mbox{ when } \!j\! \neq \!h \\
     r+h-u_{hj},& \!\!\!\!\mbox{for}~i=b+1.
   \end{cases}
\\
    u_{hj} &:= 
   \begin{cases}
    1, ~& \mbox{for}~ j \neq h,\\
    0, ~& \mbox{for}~ j=h.
\end{cases}
\end{align}
The corresponding indexes sets are obtained as:
\begin{equation}
    \underline{\eta}_{0}=0,  ~~\underline{\eta}_{i}=\underline{\eta}_{i-1}+\ell_{i},~~i\in[b+1]_+.
\end{equation}
To normalize the distribution $f^\star(0)= \phi^k_0$ must be:
\begin{align}
    \phi^k_0&=\frac{1}{\sum_{i=0}^{b+1}\ell_i e^{-i\epsilon}} =\left(\bar{\mu}\alpha_{hj}^{\epsilon} \!+\!\beta_{hj}^{\epsilon}\right)^{-1}, \label{eq:phi_0}\\
    \alpha^{\epsilon}_{hj}&:=\! e^{-\epsilon} \xi_{b}^{\epsilon} \!-\! (1\!-\!u_{hj})e^{-\!(b-h+j+1)\epsilon}, \\
    &~~~~~~\text{where }\xi_{a}^{\epsilon} :=  \sum_{i'=0}^{a-1} e^{-i'\epsilon}, \forall a\in \mathbb{N}^+ \nonumber\\
    \beta_{hj}^{\epsilon} &:=1+ e^{-(b-h+1)\epsilon}(e^{-j\epsilon} - \xi_{h}^{\epsilon}) \nonumber \\
    &~~~~~~+(r+h-u_{hj}) e^{-(b+1)\epsilon} 
\end{align}
and the PMF is valid within $\underline{\delta}^{\epsilon}_{h,j}\leq \delta\leq \overline{\delta}^{\epsilon}_{h,j}$ where:
\begin{subequations}
\begin{align}
    \overline{\delta}^{\epsilon}_{h,j}&= \phi_{0}^{k} e^{-(b-h)\epsilon} \sum\limits_{j'=0}^{j} e^{-j'\epsilon}\nonumber\\
    &=\phi_{0}^{k} e^{-(b-h)\epsilon}\xi^{\epsilon}_{j+1},j\in [h-1]\\ 
    \overline{\delta}^{\epsilon}_{h,h}&=\phi_{0}^{k} e^{-(b-h-1)\epsilon}, ~\overline{\delta}^{\epsilon}_{0,0}=\phi_{0} e^{-(b-1)\epsilon},  \label{eq:delta_overline_BD} \\
    \underline{\delta}^{\epsilon}_{h,j}&= \phi_{0}^{k}e^{-(b-h)\epsilon} \xi_{j}^{\epsilon} \equiv \left(\frac{\phi_{0}^{k}}{\phi_{0}^{k-1}}\right) \overline{\delta}^{\epsilon}_{h,j-1} \label{eq:delta_underline_BD}
\end{align}
\end{subequations}
{\bf (b)} In each of the linear regions, i.e.,  $\overline{\delta}^{\epsilon}_{h,j-1} < \delta\leq \underline{\delta}^{\epsilon}_{h,j}$, the PMF values vary linearly in groups with respect to $\delta$. The group lengths are:
\begin{align}
   \ell_i&=
   \begin{cases}
        1,& \mbox{for}~i=0\\
     \bar{\mu},& \mbox{for}~i \in [b-h]_+\\
     \bar{\mu}-1,& \mbox{for}~i \in [b-h+1:b], \\&~~~~~i \! \neq b-h+j+1\\ 
     1,& \mbox{for}~i=b-h+j+1\\
     r+h-1,& \mbox{for}~i=b+2\\
   \end{cases}
\end{align}
each with probability $\psi_i^k(\delta)$ (as shown in Fig.~\ref{fig:linear_PMF_BD}):
\begin{align}
&\psi_i^k(\delta) = 
 \begin{cases}
{\delta e^{(b-h-i) \epsilon}}/{\xi_{j}^{\epsilon}}, &\!\!\!\!i \in  [b-h+j] \\
{\delta e^{(b-h-i+1) \epsilon}}/{\xi_{j}^{\epsilon}}, &\!\!\!\!i \!\in \! [b\!-\!h\!+\!j\!+\!2:\!b+2],
\end{cases}\nonumber\\
&\psi_i^{b-h+j+1}(\delta)  \!=  1 -\!\!\!\!\!\! \sum\limits_{\substack{i=0\\i \neq b-h+j+1}}^{b+2}\!\!\! \ell_i \psi_i^k(\delta).
\label{eq:step_height_thm}
\end{align}
{\bf (c)} In the $b^{th}$ segment, i.e. $\overline{\delta}^{\epsilon}_{b-1,b-1} < \delta \leq 1$, the objective function maximum $f^\star(0)=\delta$. So any set of $f^\star(\eta), ~\eta \in [n]$, that satisfy $\sum\limits_{\eta=1}^{n} f^\star(\eta) = 1 - \delta$ and $u_{\XX} (0) = 1, u_{\XX} (\eta) = 0, ~\eta \in [n]_+$, provides the optimal PMF. One of the possible solutions is:
\begin{align}
    f^\star(0) &= \delta, \\
    f^\star(\eta) &= \frac{1-\delta}{n}, ~~\eta \in [n]_+.
\end{align}
Note that, in the last segment, the optimal PMF no longer follows the staircase pattern. 
\end{thm}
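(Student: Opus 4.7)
The plan is to reduce the BD case to a sequence of structured instances of the LP from Section~\ref{Sec:Opt_numerical}, using Lemma~\ref{Lem:BD_delta0} as the base case ($\delta=0$) and generalizing the flat/linear-region argument of Theorem~\ref{Thm:Linear_Flat_Region}. The essential observation is that the constraint \eqref{Eq:f_eta_ordered_BD} must hold for every shift $\mu\in[\bar{\mu}]_+$, so a sorted decreasing PMF that saturates all these constraints simultaneously must be constant on runs of length $\bar{\mu}$, producing the staircase shape $\phi_i=\phi_0 e^{-i\epsilon}$ of Lemma~\ref{Lem:BD_delta0}. When $\delta>0$, allowing violations at indices $\eta$ where $u_{\XX}(\eta)=1$ lets us release mass from the tail of the staircase and shift it to $f^\star(0)$, increasing accuracy; the index $h$ in the theorem counts how many of the original $b$ full steps have been partially ``eaten,'' and the sub-index $j$ counts the phase within a given step's collapse.

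First I would prove the flat-region formulas of part~(a) by a saturation-and-support argument: fix the support pattern (the set of $\eta$ where $u_{\XX}(\eta)=0$), and show that the optimum PMF on this support still saturates all active inequalities $f(\eta)=e^\epsilon f(\eta+\mu)$ for every $\mu\in[\bar{\mu}]_+$ that connects two ``surviving'' indices. This forces the surviving $f^\star(\eta)$ to form a geometric staircase whose step lengths $\ell_i$ are determined by which tail positions have been deleted; the specific shrinkage of a run from $\bar{\mu}$ to $\bar{\mu}-1$, with one distinguished run at index $b-h+j+1$ still of length $\bar{\mu}$, is exactly the pattern obtained by removing $j$ of the rightmost $\bar{\mu}-1$ positions of step $b-h+1$ and keeping the remaining step widths maximal. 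Normalization $\sum_\eta \ell_i e^{-i\epsilon}\phi_0^k=1$ then gives \eqref{eq:phi_0} after separating the contribution of the $b$ intermediate steps (the $\bar{\mu}\alpha_{hj}^\epsilon$ term) from the head/tail corrections (the $\beta_{hj}^\epsilon$ term). The boundaries $\underline{\delta}_{h,j}^\epsilon,\overline{\delta}_{h,j}^\epsilon$ in \eqref{eq:delta_overline_BD}--\eqref{eq:delta_underline_BD} then arise from evaluating $\sum_{\eta: u_{\XX}(\eta)=1} f^\star(\eta)$ at the start and end of the flat region, i.e., just after the previous violation has been added and just before the next one must be introduced.

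Next I would handle the linear regions of part~(b) by noting that between two consecutive flat regions, the support pattern is fixed and only one probability-mass value is free to scale with $\delta$: the constraint $\sum_\eta u_{\XX}(\eta) f(\eta)=\delta$ becomes active and linear in the $\phi$-levels, forcing all level heights to scale with $\delta$. Requiring \eqref{Eq:f_eta_ordered_BD} to hold with equality on the still-saturated edges yields $\psi_i^k(\delta)\propto \delta e^{(b-h-i)\epsilon}/\xi_j^\epsilon$ (the two cases in \eqref{eq:step_height_thm} reflecting whether index $i$ lies before or after the ``special'' short run), and $\psi_{b-h+j+1}^k(\delta)$ is determined by residual normalization. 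Part~(c) is then essentially direct: once $\delta>\overline{\delta}_{b-1,b-1}^\epsilon$, the objective $1-\sum_{\eta\ge 1}f(\eta)$ is dominated by the single constraint $f^\star(0)\le \delta$ obtained by taking $u_{\XX}(0)=1$ and $u_{\XX}(\eta)=0$ for $\eta\ge 1$, so any feasible distribution of the residual mass $1-\delta$ over $[n]_+$ is optimal, as claimed.

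The hard part, in my view, is justifying rigorously the exact combinatorial order in which violations appear, namely that eating from the tail rightwards (rather than creating holes elsewhere) and the specific choice of which intermediate run remains of length $\bar{\mu}$ are genuinely optimal. I would handle this by an exchange argument against the MILP: given any optimal solution, swap a violation at an interior index with one at a tail index and verify that (i) all cyclic constraints \eqref{Eq:f_eta_ordered_BD} with $\mu\in[\bar{\mu}]_+$ remain satisfiable and (ii) the $\delta$ budget consumed weakly decreases, so that $f^\star(0)$ can be increased. Combined with induction on $h$ and $j$, this yields the uniqueness of the described staircase up to the indeterminacy noted in part~(c), and the closed-form expressions follow by direct geometric summation.
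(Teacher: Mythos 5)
Your overall strategy (start from the $\delta=0$ staircase of Lemma~\ref{Lem:BD_delta0}, alternate flat and linear regions, obtain the $\delta$-boundaries by summing $f^\star(\eta)$ over the violated indices, and normalize) is the same as the paper's, and your treatment of the linear regions and of part~(c) matches the actual proof. However, there are two genuine gaps. First, you never establish \emph{where} the violations begin. The paper identifies this as the key step: the first inequality to be broken is the one between the third-to-last and second-to-last groups, i.e.\ the first linear region starts at $\delta=\phi_{b-1}=\overline{\delta}^{\epsilon}_{0,0}$, \emph{not} at the smallest mass as the SD analysis would suggest. The reason is the circular (modulo $n+1$) nature of \eqref{Eq:f_eta_ordered_BD}: the forward constraints leaving the last group, and some leaving the second-to-last group, point into $f(0)$, which is being maximized, so those constraints can never be the profitable ones to violate. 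Without this observation you cannot derive the boundary values in \eqref{eq:delta_overline_BD}--\eqref{eq:delta_underline_BD} or the order in which groups split, so the combinatorics of the $\ell_i$ cannot be pinned down.

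Second, your description of the flat-region mechanism is modeled on the SD case and is wrong for BD. You speak of ``releasing mass from the tail,'' ``deleted tail positions,'' and ``removing $j$ of the rightmost positions of step $b-h+1$.'' In the BD optimum no probability mass is zero in any flat region (all $\phi^k_i>0$ until the final segment $h=b$); what actually happens is that a length-$\bar{\mu}$ group splits into a length-$(\bar{\mu}-1)$ group plus a singleton, the singleton's mass decreases through the linear region until it merges with the next group, and after $k$ such events exactly $h$ of the trailing full steps have each lost one position to the last step (giving $\ell_{b+1}=r+h-u_{hj}$) --- not one step losing $j$ positions. Your proposed exchange argument for the ``hard part'' also misses the paper's actual reason why only length-$\bar{\mu}$ groups may split: if a group of length $\ell<\bar{\mu}$ split, the detached singleton would lie within distance $\bar{\mu}$ of at least two members of the preceding group, so at least two constraints in \eqref{Eq:f_eta_ordered_BD} would be violated simultaneously and the realized privacy loss would be at least $2\delta$, contradicting \eqref{eq:y_eta_condition}. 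An exchange argument might be made to work, but as stated it does not engage with this double-counting phenomenon, which is what forces the claimed step-length pattern.
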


\begin{proof}
The proof is in Appendix~\ref{Apdx:Thm4Proof}.
\end{proof}
In Theorem~\ref{Thm:BD_results} we cover the case $b+r<\bar{\mu}$. When $b+r \geq \bar{\mu}$ the result is not conceptually difficult, but the optimal PMF is hard to express in a readable form. We discuss the general case towards the end of Appendix~\ref{Apdx:Thm4Proof}.

\begin{cor}
\label{Cor:eps_delta_BD}
For a given $\epsilon>0$, the privacy loss for the BD neighborhood case with the optimal noise mechanism is also a discontinuous function of $\epsilon$, where:
\begin{align}
    \delta^\epsilon = \xi_{j}^{\epsilon} e^{-(b-h)\epsilon}(1-\rho^{\star}),~~~\epsilon^{\rho^{\star}}_{h,j} \leq \epsilon<\epsilon^{\rho^{\star}}_{h,j-1} 
\end{align}
when $\rho^\star = 1 - \psi_0^k(\delta)$ is in $k^{th}$ section, $k \in \left[\frac{b(b-1)}{2}\right]_+$ and $\epsilon_{h,j}^{\rho^{\star}}$, $h \in [b-1]_+,~j \in [h]_+$, are the solutions of:
\begin{align}\label{Eq:eps-rho_BD*}
        \epsilon_{h,j}^{\rho^{\star}}:~~\rho^{\star}=1-\phi_0^k=1-\left(\bar{\mu}\alpha_b^{\epsilon^{\rho^{\star}}_{h,j}} \!+\!\beta_{hj}^{\epsilon^{\rho^{\star}}_{h,j}}\right)^{-1}.
\end{align}
\end{cor}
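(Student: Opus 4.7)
The plan is to mirror the argument given for Corollary~\ref{Cor:eps_delta_SD}, replacing the single distance expressions from Theorem~\ref{Thm:Linear_Flat_Region} with the BD expressions from Theorem~\ref{Thm:BD_results}. Fix an achievable error rate $\rho^{\star}\in(0,1)$ and trace the level curve $\rho^{\star}(\delta,\epsilon)=\rho^{\star}$ as $\epsilon$ varies. The first step is to argue that, on this level curve, $\delta$ is monotonically non-increasing in $\epsilon$: larger $\epsilon$ relaxes the inequalities in \eqref{Eq:f_eta_ordered_BD}, so the same $\rho^{\star}=1-f^{\star}(0)$ can be achieved with a smaller privacy loss. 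This provides the skeleton of the curve; the rest of the proof is identifying the pieces.

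Next, I would exploit Theorem~\ref{Thm:BD_results}(b). Inside the $k$th linear region, indexed by the pair $(h,j)$ with $h\in[b-1]_+$ and $j\in[h]_+$, the $i=0$ case of \eqref{eq:step_height_thm} reads $\psi_0^k(\delta)=\delta e^{(b-h)\epsilon}/\xi_j^{\epsilon}$, since $0\in[b-h+j]$. Setting $\psi_0^k(\delta)=1-\rho^{\star}$ and inverting for $\delta$ immediately yields the claimed expression
\begin{align}
\delta^\epsilon=\xi_j^\epsilon e^{-(b-h)\epsilon}(1-\rho^{\star}).
\end{align}
This is exactly the analogue of the relation $\delta=(1-\rho^{\star})e^{-(n-k)\epsilon}$ in the SD proof.

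The remaining task is to determine the endpoints of each linear segment in $\epsilon$. Just as in Corollary~\ref{Cor:eps_delta_SD}, the level curve is forced to jump across each flat region: inside a flat region $f^{\star}(0)=\phi_0^k$ does not depend on $\delta$, so as $\epsilon$ decreases the trajectory sits at the right edge of the current linear piece until $\phi_0^k$ itself drops below $1-\rho^{\star}$. The threshold $\epsilon_{h,j}^{\rho^{\star}}$ is therefore defined by equating $1-\rho^{\star}$ to $\phi_0^k$ from \eqref{eq:phi_0}, which gives precisely \eqref{Eq:eps-rho_BD*}. Combining this with the monotonicity yields the range $\epsilon^{\rho^{\star}}_{h,j}\leq \epsilon<\epsilon^{\rho^{\star}}_{h,j-1}$ over which the closed-form expression for $\delta^{\epsilon}$ is valid, with the convention on $(h,j-1)$ inherited from Theorem~\ref{Thm:BD_results}. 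Finally, once $\epsilon$ is small enough that $h=b$, the last segment of Theorem~\ref{Thm:BD_results}(c) provides the terminal piece, and the curve reaches $\delta=1-\rho^{\star}$.

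The main obstacle, I expect, is bookkeeping rather than a new idea: the BD case has $\tfrac{b(b-1)}{2}$ alternating flat/linear regions indexed by the pair $(h,j)$, and one must verify carefully that the boundary values computed from $\psi_0^k$ at $\delta=\underline{\delta}^{\epsilon}_{h,j}$ and from $\phi_0^k$ at $\delta=\overline{\delta}^{\epsilon}_{h,j-1}$ both match at the discontinuity, so that the sectioning in $\epsilon$ is contiguous and the ordering $\epsilon^{\rho^{\star}}_{h,j}<\epsilon^{\rho^{\star}}_{h,j-1}$ is correctly inherited from $\underline{\delta}^{\epsilon}_{h,j}<\overline{\delta}^{\epsilon}_{h,j-1}$ proven in Theorem~\ref{Thm:BD_results}. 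Once this consistency is checked, the argument reduces to a direct inversion, exactly paralleling the SD corollary.
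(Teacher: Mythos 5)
Your proposal is correct and follows exactly the route the paper intends: the paper omits this proof as ``a direct extension of that for Corollary~\ref{Cor:eps_delta_SD},'' and your argument is precisely that extension, inverting $\psi_0^k(\delta)=\delta e^{(b-h)\epsilon}/\xi_j^{\epsilon}=1-\rho^{\star}$ on the linear pieces and locating the jumps via $\phi_0^k=1-\rho^{\star}$. No gaps; your closing remark about checking contiguity of the $(h,j)$ sectioning is the right bookkeeping caveat.
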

\begin{proof}
The proof is a direct extension of that for Corollary~\ref{Cor:eps_delta_SD} and is omitted for brevity.
\end{proof}

\subsubsection{Optimal Error Rates as $n \rightarrow \infty$} \label{Sec:InfinityCase}
In this section, we study the limit for $n\rightarrow \infty$ of the distributions for the two cases we studied, the SD and BD neighborhoods.
First, we discuss the $\delta=0$ case.
The goal is to find the relationship between $\epsilon$ and $\rho$ when n $\rightarrow \infty$.  From Lemma \ref{Lem:SD_delta0}~\eqref{eq:SD_f*(0)}, we see that for SD neighborhood case, $f^{\star}(0) \equiv 1 - \rho^{\star}(\epsilon) \rightarrow 1 - e^{-\epsilon} \implies \rho^{\star}(\epsilon) \rightarrow e^{-\epsilon}$ as $n \rightarrow \infty$. Since, $\rho^{\star}(\epsilon)$ is also constrained by 0.5, we have that the limit function $ \rho^{\star}_{\infty}(\epsilon)$:
\begin{align}
    \rho^{\star}_{\infty}(\epsilon) &=
    \begin{cases}
         0.5, ~\epsilon \in (0, \ln{2}] \\
         e^{-\epsilon}, ~\epsilon \geq \ln{2}.
    \end{cases}
\end{align}
And the optimum PMF is zero for all $\eta\neq h\hat{\mu}$,  and:
\begin{align}
f_{\infty}^\star(h\hat{\mu})=(1-\rho^{\star}_{\infty}(\epsilon))e^{-h\epsilon}, ~~h\in \mathbb{N}
\end{align}
Similarly, for the BD neighborhood case and $
\delta=0$ in Lemma \ref{Lem:BD_delta0} as $n \rightarrow \infty \implies b \rightarrow \infty$,  from~\eqref{eq:Noise_FixedCardi} we get: 
\begin{align}
    \phi_0 \equiv 1 - \rho^{\star}(\epsilon) &\rightarrow \left(1+\frac{\bar{\mu}e^{-\epsilon}}{ 1-e^{-\epsilon}}\right)^{-1} \label{eq:phi0}\\
    \implies \rho^{\star}(\epsilon) &\rightarrow \frac{\bar{\mu}e^{-\epsilon}}{\bar{\mu}e^{-\epsilon} + (1-e^{-\epsilon})}
\end{align}
Hence, the expression for $\rho^{\star}(\epsilon)$ for any $\epsilon >0$ is:
\begin{align}
    \rho^{\star}_{\infty}(\epsilon) &=
    \begin{cases}
         0.5, &~\epsilon \in (0, \ln{(1+\bar{\mu})}] \\         \frac{\bar{\mu}e^{-\epsilon}}{\bar{\mu}e^{-\epsilon} + (1-e^{-\epsilon})}, &~\epsilon \geq \ln{(1+\bar{\mu})}.
    \end{cases}
\end{align}
Each of the PMF staircase steps becomes of size $\bar{\mu}$ and the values have an exponentially decaying trend:
\begin{align}
f_{\infty}^\star(0)&=1-\rho^\star_{\infty}(\epsilon)\\
    f^{\star}_{\infty}(\eta)&=f_{\infty}^\star(0)e^{-h\epsilon},~(h\!-\!1)\bar{\mu}<\eta\leq h\bar{\mu}, ~h\in \mathbb{N}^+.
\end{align}
For $0<\delta \leq 1$, it is convenient to use the index $i=n-k$ looking at the trend of the distortion from $\delta=1$, where $f^{\star}(0)=1$ backward. Because the discontinuities between flat and linear regions happen at the points where $\delta=\underline{\delta}^{\epsilon}_{n-i},~i\in [n-1]$ we can see from Theorem \ref{Thm:Linear_Flat_Region} the distortion for $i\in [n-1]$:
\begin{equation}
    f_i^\star(0)\leq \frac{1\!-\!e^{-\epsilon}}{1\!-\!e^{-(i+1)\epsilon}}\Rightarrow \rho^\star(\delta,\epsilon)\geq 1- \frac{1\!-\!e^{-\epsilon}}{1\!-\!e^{-(i+1)\epsilon}}, 
\end{equation}
and the size of the intervals shrinks like an $o(e^{-i\epsilon})$, as $i\rightarrow +\infty$ quickly leading to the same result as $\delta\rightarrow 0$, where the distortion tends to $e^{-\epsilon}$ as stated before. 

Similarly, for the BD neighborhood case, to  find the expressions for $\phi_0^{\infty}$ and  $\phi_i^{\infty}$, it is convenient to use a new index $c=b-h$, looking at the trends of the distortion from $\delta=1$, where $f^{\star}(0)=1$, going backwards towards $\delta=0$. In the $b^{th}$ segment (part (c) of Theorem~\ref{Thm:BD_results}, $f_b^\star(\eta) \rightarrow 0$ as $b \rightarrow \infty$ and $n \rightarrow \infty$ for $\eta \in [n]_+$ and thus $f_b^\star(0) \rightarrow 1$. 
For $\delta \approx 1$, in the $c^{th}$ region we get the following expressions by using~\eqref{eq:phi_0}:
\begin{align}
    \alpha^{\epsilon}_{b-c,j} & \!\rightarrow\! \frac{e^{-\epsilon}}{1-e^{-\epsilon}};
    ~~\beta^{\epsilon}_{b-c,j} \!\rightarrow\! 1 - \frac{e^{-(c+j+1)\epsilon}}{1-e^{-\epsilon}}
\end{align}
\begin{align}
    \implies \!\!
    \phi_0^{\infty}(c,j) &\rightarrow \frac{1-e^{-\epsilon}}{\bar{\mu}e^{-\epsilon} \!+\! (1\!-\!e^{-\epsilon}(1\!+\!e^{-(c+j)\epsilon}))}, \label{eq:phi0_infity}\\
    \phi_i^{\infty} &\rightarrow e^{-i\epsilon} \phi_0^{\infty}, ~~i \in [b+1]_+.
\end{align}
Now, as $c \rightarrow \infty$, the expression of $\phi_0^{\infty}(c,j)$ in~\eqref{eq:phi0_infity} converges to $\phi_0$ as shown in~\eqref{eq:phi0}, i.e. the result for $\delta \rightarrow 0$.

\subsubsection{Optimal Noise Mechanism for Vector Queries} \label{Sec:Opt_Noise_Vector}
Next, we briefly discuss the optimal noise mechanism design for vector queries { to explore what difference it makes to optimize after mapping each vector onto a number in $[n]$ vs. adopting the mechanism on an entry-by-entry basis.} In fact, let each entry of a vector query be in the set ${\mathcal Q}$. In the first case, the MILP formulation of the problem defined in~\eqref{eq:MILP_obj}--\eqref{eq:eta_range} can be applied directly to vectors of queries considering the masses of probabilities as a joint PMF, with arguments corresponding to all possible tuples in ${\mathcal Q}^k$.  In Section~\ref{Sec:Simulations}, we provide two examples for 2D vector queries-- one for the BD neighborhood and another for an arbitrary neighborhood (see Fig.~\ref{Fig:PMF_2D_plots}). We observe that for the BD neighborhood case, the optimum noise mechanism follows a staircase pattern in 2D as well and for the arbitrary neighborhood, the optimum noise mechanism has $e^{-\epsilon} f^{\star}(\bm 0)$ values at $\bm \eta = \bm \mu_{\XX}$, where boldface letters are vectors.  
\begin{rem} \label{Rem:VectorQuery}
 In general,  the optimal multidimensional noise mechanism does not amount to adding independent random noise to each query entry.
 In  Section~\ref{Sec:Simulations} we corroborate this statement by showing a counter-example obtained using the MILP program for the vector case, considering a two-dimensional vector query.
 \end{rem}

\section{Numerical Results}\label{Sec:Simulations}
\begin{figure}
     \centering
     \subfloat[][]{\includegraphics[width=0.23\textwidth]{./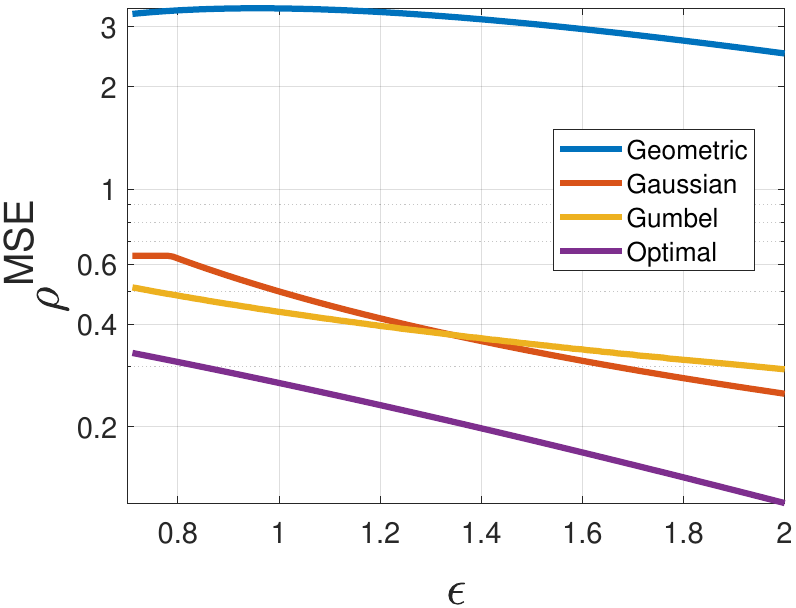}\label{fig:rho_MSECompare}}
     \subfloat[][]{\includegraphics[width=0.23\textwidth]{./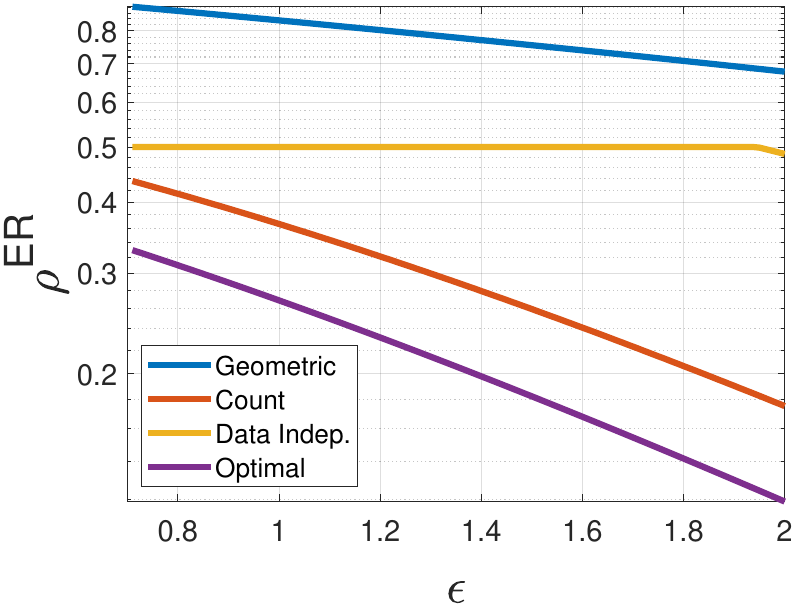}\label{fig:rho_ERCompare}}
     \caption{{Comparison of the proposed optimal mechanism in terms of the expected distortion costs with those proposed in~\cite{Ghosh_Universal,canonne2021discrete,sadeghi2020differentially,McSherry2007Mech,Ravi2022Diff}. In plot (a), the optimal noise mechanism is compared in terms of MSE, $\rho^{MSE}$, with the discrete geometric mechanism, discrete Gaussian mechanism, and Gumbel mechanism for a fixed a value of $\delta=0.3$ and $n=8$.} In plot (b), the optimal noise mechanism is compared in terms of error rate, $\rho^{ER}$, with the discrete geometric mechanism, discrete count mechanism, and data independent mechanism for a fixed a value of $\delta=0.5$ and $n=7$.}
     \label{Fig:MSE_ER_Comparisons}
\end{figure}

First, we compare the performance of our proposed optimal noise mechanism with the  discrete geometric mechanism~\cite{Ghosh_Universal}, discrete Gaussian mechanism~\cite{canonne2021discrete}, classical exponential mechanism~\cite{McSherry2007Mech}, discrete count mechanism~\cite{sadeghi2020differentially}, and data independent mechanism~\cite{Ravi2022Diff}. In plot~\ref{fig:rho_MSECompare}, the performance is compared in terms of $\rho^{MSE}$ vs. $\epsilon$ for a fixed value of $\delta=0.3$. Similarly, in plot~\ref{fig:rho_ERCompare}, the performance is compared in terms of $\rho^{ER}$ vs. $\epsilon$ for a fixed value of $\delta=0.5$.  
From the plots, it is clear that the proposed optimal noise mechanism significantly outperforms all these mechanisms.

Popular mechanisms for discrete queries are adding a random variable from a geometric distribution ~\cite{Ghosh_Universal},~\cite{Balcer2018DifferentialPO} or a quantized Gaussian distribution (see e.g. ~\cite{canonne2021discrete}). 
Clamping is an operation in which the query response $\tilde{q}$ is projected onto the domain  $[n]$ for any $\eta \in \mathbb{Z}$~\cite{Ghosh_Universal}, i.e.:
\begin{align}
    \tilde{q} = \min(\max(0, q+\eta),n). 
\end{align}
Let $F_{\eta}(\eta)$ denote the cumulative distribution function of $\eta$; then the distribution of $\tilde{{q}}$ in terms of the distribution of $\eta$ after clamping is as follows:
\begin{align} \label{eq:Noise_PMF_Clamp}
      f_{\tilde{q}}(k|q) = 
      & \begin{cases}
      F_{\eta}(-q), & ~~~~~~k = 0\\
      f_{\eta}(k-q), &~~~~~~k \in [n-1]_+\\ 1-F_{\eta}(n-q)
      & ~~~~~~k = n.
      \end{cases}      
\end{align}
From \eqref{eq:Noise_PMF_Clamp} one can compute the $(\epsilon, \delta)$ privacy curve  using~\eqref{Eq:PrivacyLoss},~\eqref{eq:epsilon_delta_relation} and~\eqref{eq:Noise_PMF_Clamp}. After clamping the $(\epsilon,\delta)$ guarantees provided by the said DP mechanisms are different from the ones calculated for $n=+\infty$ as shown in Fig.~\ref{fig:ClampingEffect}, reported for the same MSE = 3.38. 
From the figure, it is clear that clamping increases $\delta$ for the same $\epsilon$ budget, and this effect is particularly pronounced in the case of the Gaussian mechanism. 
Hence, it is not advisable to use infinite support-based noise mechanisms, such as discrete geometric and discrete Gaussian, in tandem with clamping operations to publish the discrete query response with finite support.\footnote{{The clamped Geometric mechanism has only one predefined parameter, $\alpha$, from which one can choose, whereas we choose all probability masses, so we do not have a single distribution, but we define a class of optimal distributions as the output of the optimization problem.}}
\begin{figure}
     \centering     
     \includegraphics[width=0.41\textwidth]{./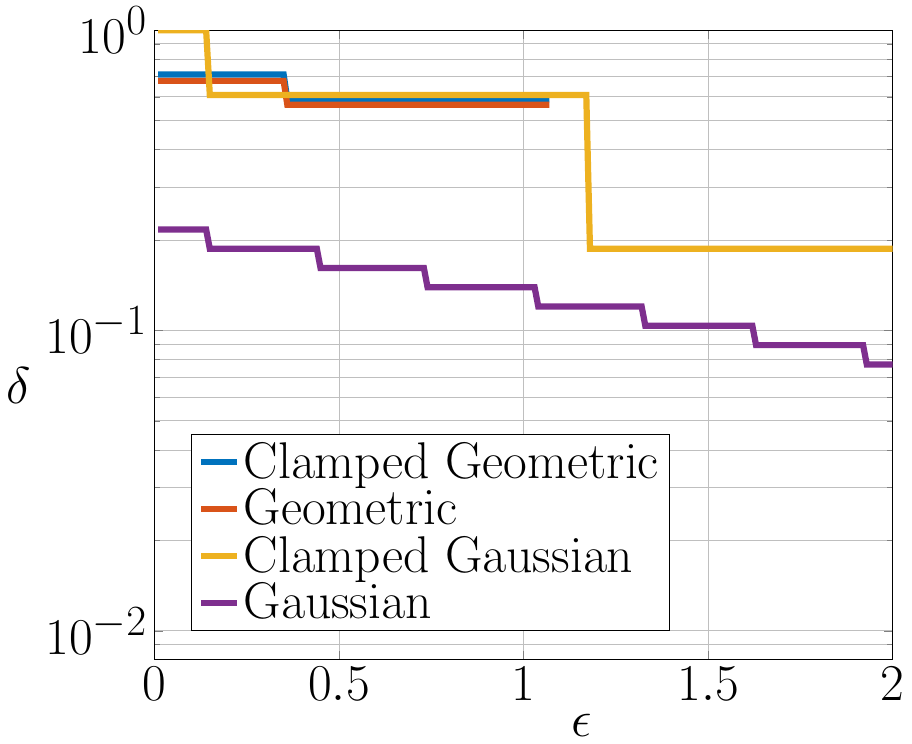}
     \caption{{This plot shows the adverse effect of clamping operation on the $(\epsilon,\delta)$ trade-off for discrete geometric and discrete Gaussian mechanisms. The following parameters are used: $n=8, \alpha = 0.7$ and $\sigma^2 = 3.38$.}}     \label{fig:ClampingEffect}
\end{figure}

Next, we now test the modified MILP problem  \eqref{eq:MILP_obj2} of minimizing $\delta$, constraining the error rate $\rho = \rho^{ER}$ 
and solve the MILP numerically using Gurobi as the solver. 
For a fair comparison, we consider the measure of errors vs. the ER and MSE, and respective parameters of the noise mechanisms viz., $\alpha$ in~\cite{Ghosh_Universal}, $\sigma^2$ in~\cite{canonne2021discrete}, $\beta'$ in~\cite{Adams2013Gumbel}\footnote{{To realize the classical exponential mechanism based DP~\cite{McSherry2007Mech}, we utilize the addition of $Gumbel(\epsilon,\beta')$ distribution to the query inputs~\cite{Adams2013Gumbel}, where $\beta'$ is a parameter.}}, and  $\rho$ in~\cite{sadeghi2020differentially,Ravi2022Diff}; the results are shown in Fig.~\ref{Fig:Comparisons}. 
\begin{figure}
     \centering
     \subfloat[][]{\includegraphics[width=0.23\textwidth]{./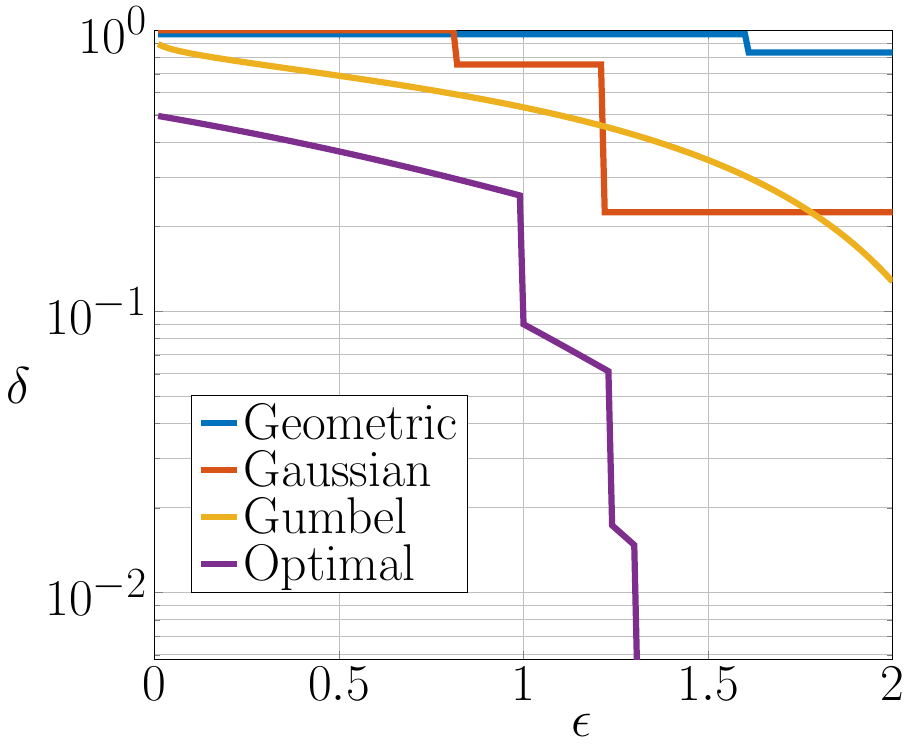}\label{fig:GeometricGaussianCompare}}
     \subfloat[][]{\includegraphics[width=0.23\textwidth]{./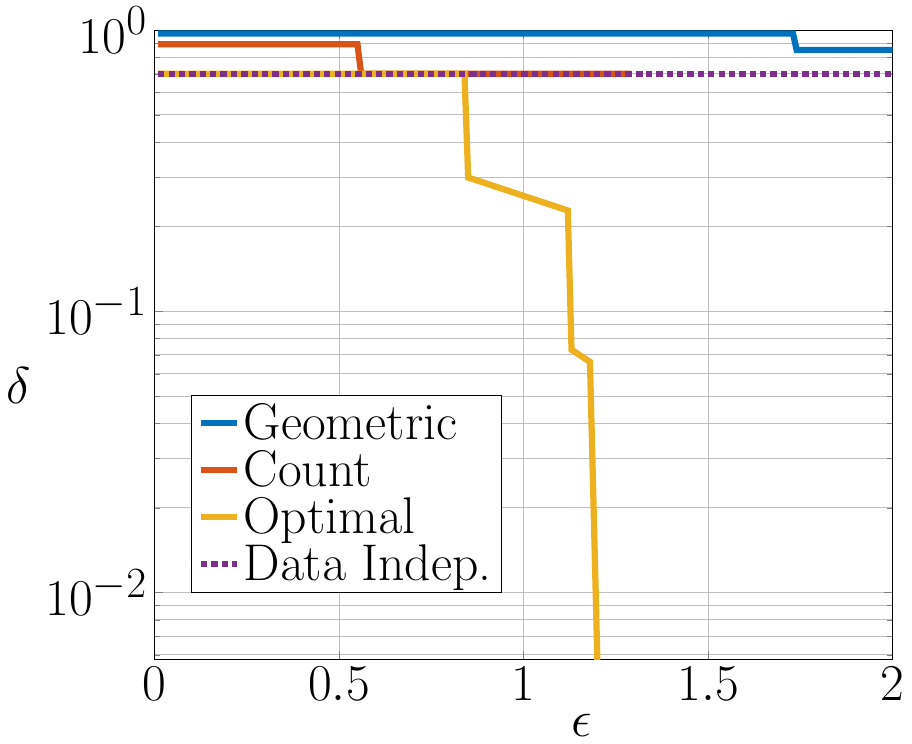}\label{fig:GeometricCountCompare}}
     \caption{{Comparison of the proposed optimal mechanism in terms of $(\epsilon, \delta)$ trade-offs with those proposed in~\cite{Ghosh_Universal,canonne2021discrete,sadeghi2020differentially,McSherry2007Mech}. In plot (a), the optimal noise mechanism is compared with  the discrete geometric mechanism, discrete Gaussian mechanism and Gumbel mechanism for a fixed MSE, i.e., $\rho^{MSE}_{Geo.} = \rho^{MSE}_{Gau.} = \rho^{MSE}_{Gum.} = \rho^{MSE}_{Opt.} = 0.6101$ and $n=8$.} In plot (b), the optimal noise mechanism is compared with the discrete geometric mechanism, discrete count mechanism, and data independent mechanism, i.e. $f(0) = 1-\rho^{ER}_{Ind.}$, $f(\eta) = \rho^{ER}_{Ind.}/n,~\eta \in [n]_+$, for a fixed ER, i.e., $\rho^{ER}_{Geo.} = \rho^{ER}_{Cnt.} = \rho^{ER}_{Opt.} = \rho^{ER}_{Ind.} = 0.3$ and $n=7$.}
     \label{Fig:Comparisons}
\end{figure}
In plot~\ref{fig:GeometricGaussianCompare}, we show the comparison of the proposed optimal noise mechanism with the discrete geometric, discrete Gaussian, and Gumbel mechanisms for a given MSE = 0.6101. 
Similarly, in plot~\ref{fig:GeometricCountCompare}, we show the comparison of the proposed optimal noise mechanism with the discrete Gaussian mechanism, discrete count mechanism, and data independent mechanisms for a given ER = 0.3. From the plots, it is clear that the proposed optimal noise mechanism significantly outperforms all these mechanisms $\forall \epsilon > 0$.
\begin{figure}
     \centering
     \includegraphics[width=0.45\textwidth]{./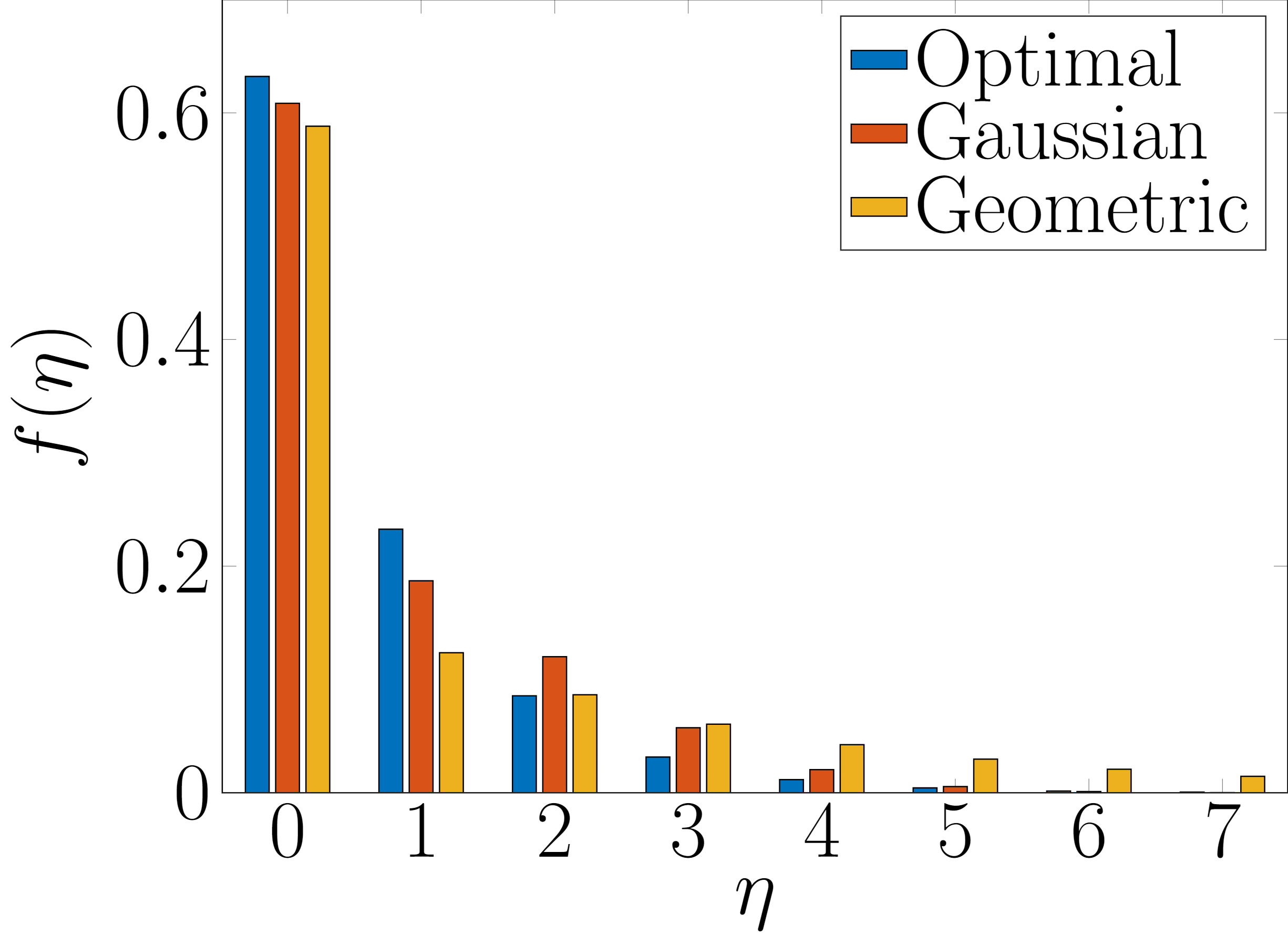}
     \caption{{This plot shows the PMF of the optimal noise mechanism compared with both the Gaussian and geometric mechanisms. The following parameter values are used: $n=8, \alpha = 0.7, \epsilon =1$, {$\delta = 0$}, and $\sigma^2 = 3.38$.}}
     \label{fig:PMF_comparison}
\end{figure}

\begin{figure}
     \centering
     \subfloat[][]{\includegraphics[width=0.23\textwidth]{./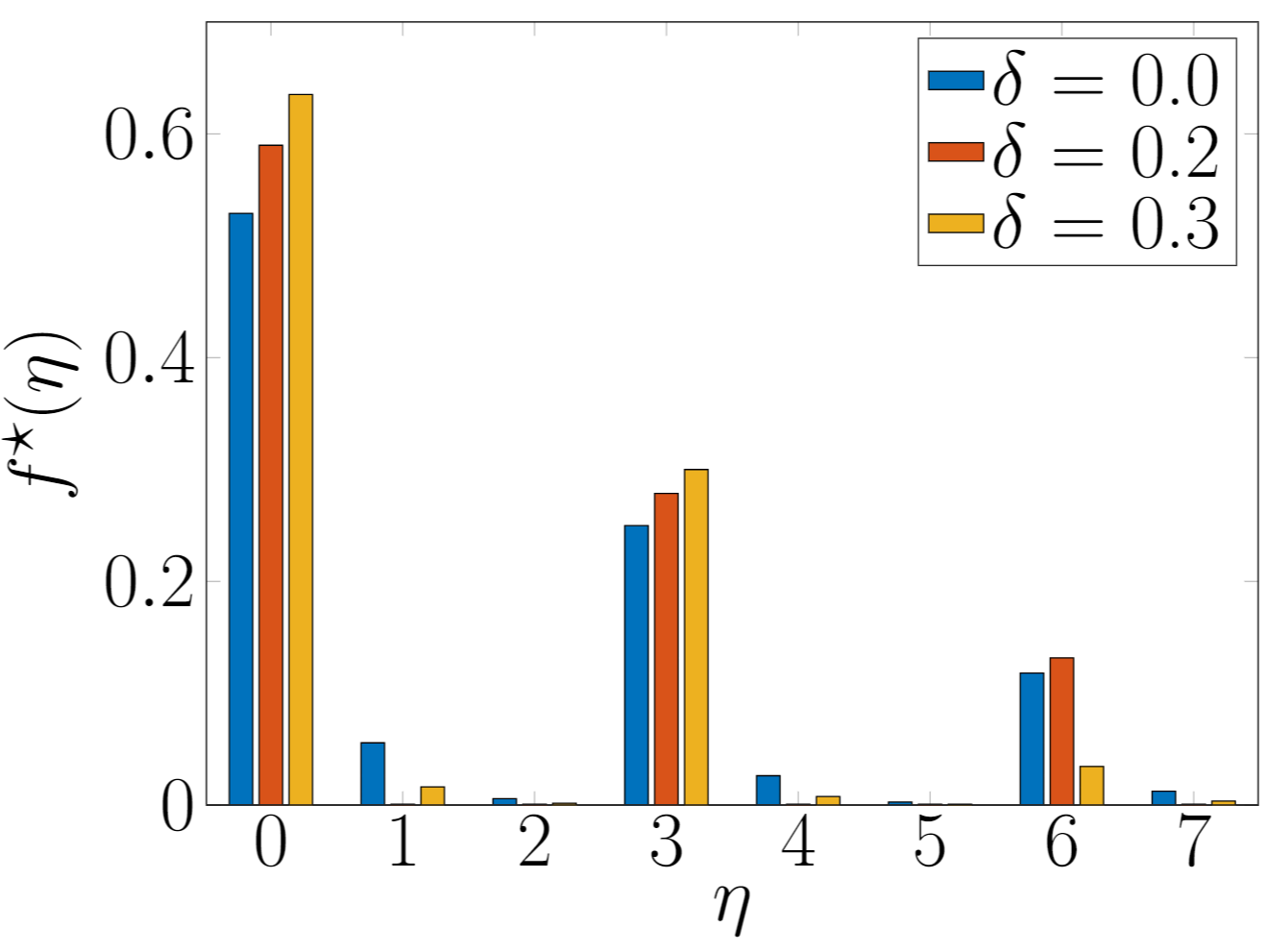}\label{fig:PMF_SD1}}
     \subfloat[][]{\includegraphics[width=0.23\textwidth]{./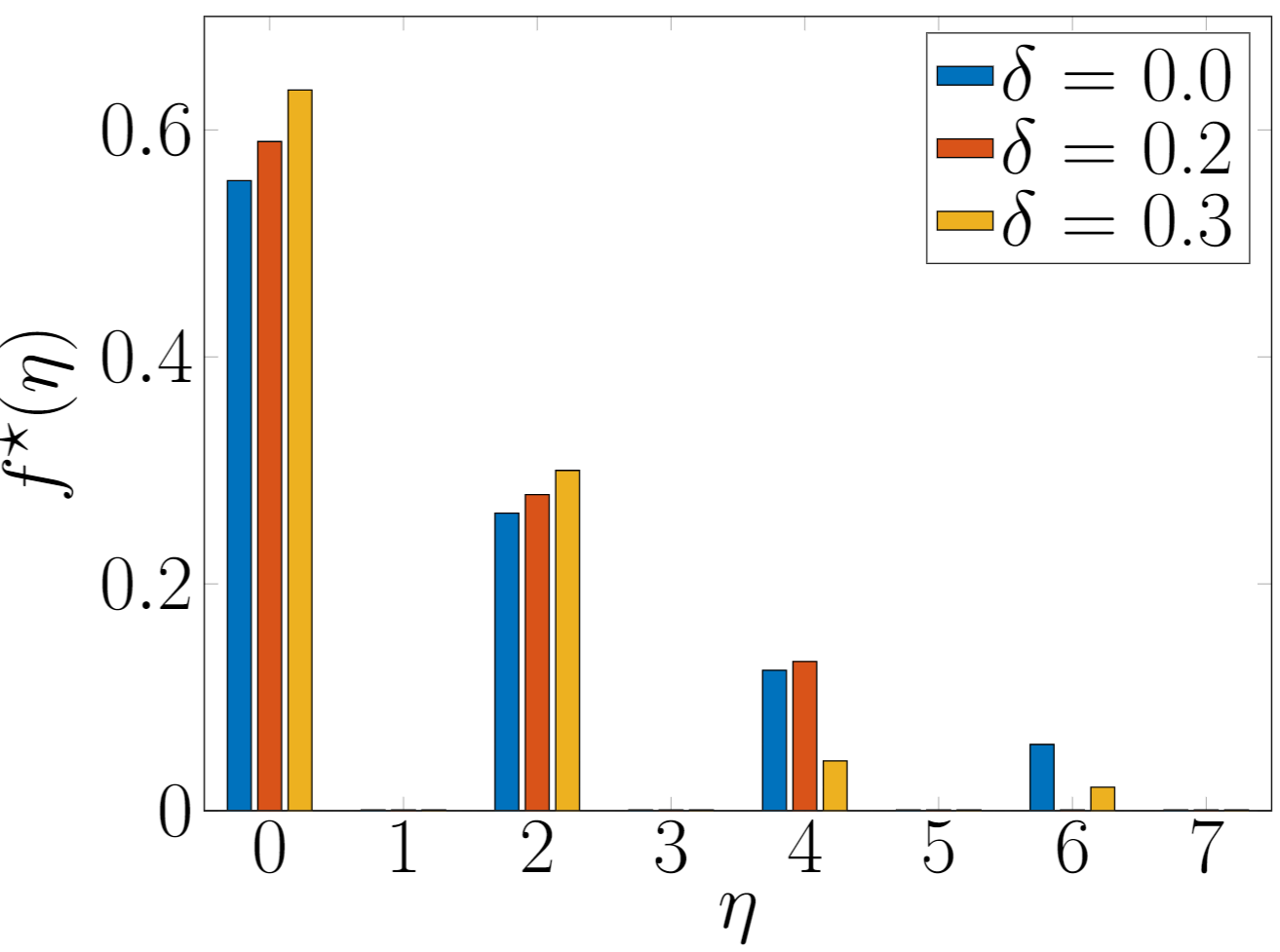}\label{fig:PMF_SD2}}
     \caption{These plots show the PMF of the optimal noise mechanisms for the SD neighborhood case and the following parameter values are used for both $n=7, \epsilon = 0.75$. $\hat{\mu} = 3$ is used in plot (a), whereas $\hat{\mu} = 2$ in used in plot (b).}
     \label{Fig:SD_PMF_plots}
\end{figure}

{
Now, we provide the comparison between the PMFs of optimal noise distribution with regard to Gaussian and geometric distributions in Fig.~\ref{fig:PMF_comparison} for the same MSE parameter for all the distributions. From the plot, we can observe that the probability mass at $\eta=0$ is maximum for the proposed mechanism, which validates our claim of the least error rate among these mechanisms.  
}

In the following figures, we show the structure of the PMF associated with the optimal noise mechanism. First, we consider the SD neighborhood case. 
More specifically, the plots in Fig.~\ref{Fig:SD_PMF_plots} show the PMF of the optimal noise mechanisms, $f^\star{(\eta)}, \eta \in [n]$, for SD neighborhood case for $\hat{\mu} = 3$ in Fig.~\ref{fig:PMF_SD1} and for $\hat{\mu} = 2$ in Fig.~\ref{fig:PMF_SD2}. In the left plot, we see that $f^\star{(\eta)}$ is non-zero for all $\eta \in [n]$ since ($n+1, \hat{\mu}$) are relatively prime but in the right plot, we see that $f^\star{(\eta)}$ is zero for $\eta \in \{1, 3, 5, 7\}$, since ($n+1, \hat{\mu}$) are not relatively prime (see Fig.~\ref{Fig:Periodic_Explain} for the reasoning). From the plots we observe that as $\delta$ increases, $f^\star{(0)}$ increases and since error rate, $\rho^{ER}$, is $1 - f^\star{(0)}$ it decreases with increase in $\delta$. And in the right plot, at some of the $\eta \in [n]_+$ values, zero probability mass is assigned. Hence, we see a higher value of $f^\star{(0)}$ compared to the corresponding values in the left plot, thus having a lower error rate in the right plot for a given $\delta$ value. Recall Lemma~\ref{Lemma:ordered_set} and Lemma~\ref{Lem:SD_delta0}, to see that $f^\star{(0)} \ge f^\star{(3)} \ge f^\star{(6)} \ge f^\star{(1)} \ge f^\star{(4)} \ge f^\star{(7)} \ge f^\star{(2)} \ge f^\star{(5)}$ in Fig.~\ref{fig:PMF_SD1} which are represented using $f^\star_{(i)}, i \in [n]$, respectively. Similarly, we observe  $f^\star{(0)} \ge f^\star{(2)} \ge f^\star{(4)} \ge f^\star{(6)}$ and rest $f^\star{(1)} = f^\star{(3)} = f^\star{(5)} = f^\star{(7)} = 0$ in Fig.~\ref{fig:PMF_SD2}. 
\begin{figure}
     \centering
     \subfloat[][]{\includegraphics[width=0.23\textwidth]{./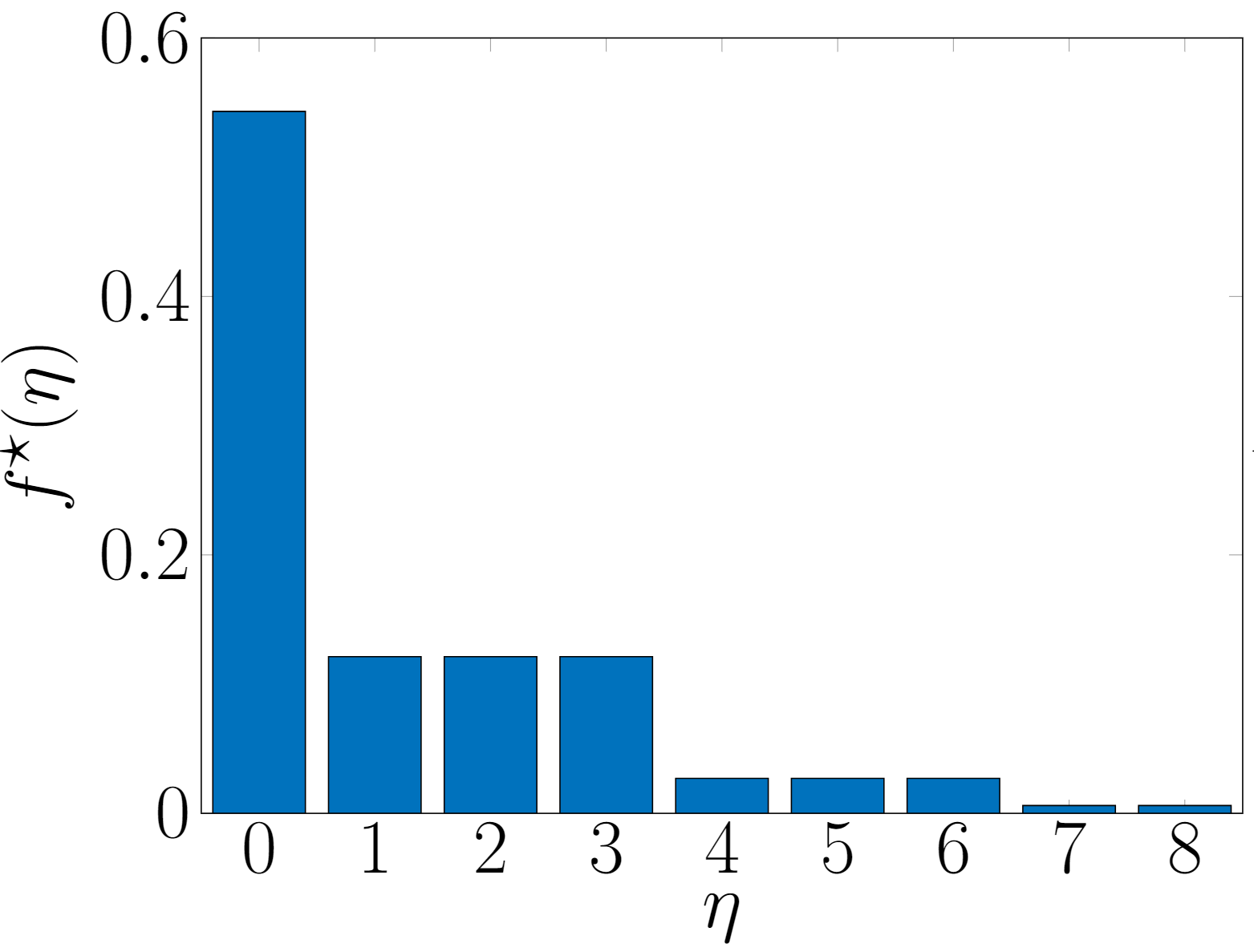}\label{fig:PMF_BD1}}
     \subfloat[][]{\includegraphics[width=0.23\textwidth]{./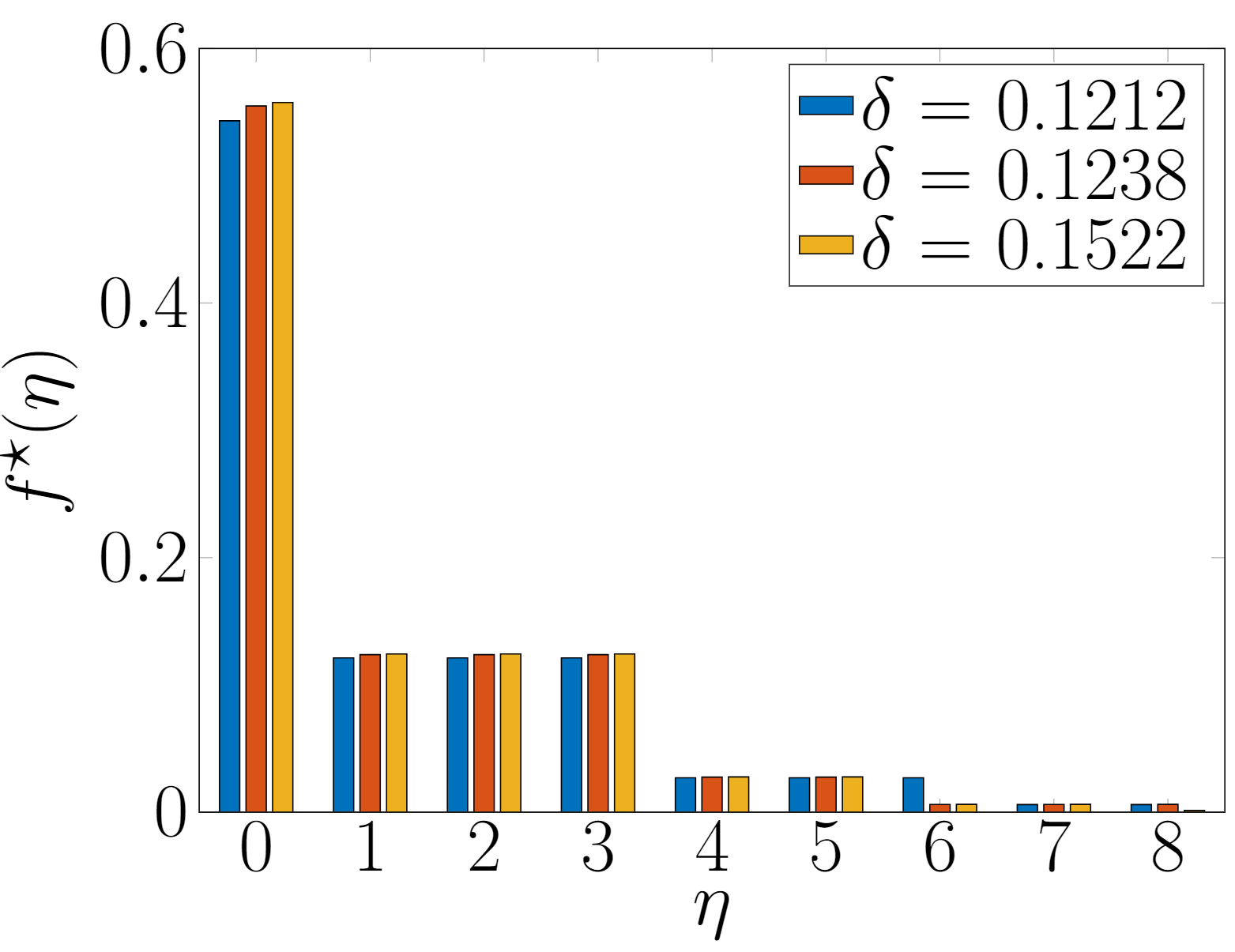}\label{fig:PMF_BD2}}
     \caption{These plots show the PMF of the optimal noise mechanisms for BD neighborhood case and the following parameter values are used: $n=8, \bar{\mu} = 3, \epsilon = 1.5$.  In plot (a) we see that the optimal noise mechanism follows staircase pattern starting from $\eta = 1$ with $b=2$ steps of length $\bar{\mu} = 3$ and one last step of length $r=2$. In plot (b) we show how staircase pattern and step lengths change with $\delta$. It can be seen at $\delta = \overline{\delta}^{\epsilon}_{0} = 0.1212$, $\delta = \underline{\delta}^{\epsilon}_{1} = 0.1238$, and $\delta = \underline{\delta}^{\epsilon}_{2} = 0.1522$  step lengths are: (1,3,3,2) (blue coloured bars), (1,3,2,3) (red coloured bars), and (1,3,2,2,1) (yellow coloured bars), respectively.}
     \label{Fig:BD_PMF_plots}
\end{figure}
Next, we consider the BD neighborhood case. The plots in Fig.~\ref{Fig:BD_PMF_plots} show the PMF of the optimal noise mechanisms, $f^\star{(\eta)}, \eta \in [n]$, for BD neighborhood case for $\delta = 0$ in Fig.~\ref{fig:PMF_BD1} and for $\delta > 0$ in Fig.~\ref{fig:PMF_BD2}. In the left plot, we clearly see the staircase pattern with step sizes equal to $\bar{\mu}$, except for the last step. In the right plot, we see that step widths change as $\delta$ increases its values while the staircase structure is maintained. 
Note that the vertical height of each step is $e^{\epsilon}$ times higher than the previous one, as can be seen in Fig.~\ref{Fig:BD_PMF_plots} and in Table~\ref{Table:BD_Case}. Also, from the plot in Fig.~\ref{fig:PMF_BD2} and Table~\ref{Table:BD_Case}, one can observe as the value of $f^\star{(0)}$ increases (thus the value of $\rho^{ER}$ decreases) as $\delta$ increases, as it is expected. 
\begin{table}[ht]
\centering
\caption{The PMF of the optimal noise mechanism for different values of $\delta$ for $n=8, \bar{\mu} = 3, \epsilon = 1.5$.}
\scalebox{0.9}{
\begin{tabular}{ |c|c|c|c|c| } 
 \hline
    &  $\delta=0$ & $\delta=0.1212$ & $\delta=0.1238$ & $\delta=0.1522$\\
  \hline
 $f^\star{(0)}$ & 0.5432 & 0.5432 & 0.5548 & 0.5575 \\ 
 $f^\star{(1)}$ & 0.1212 & 0.1212 & 0.1238 & 0.1244 \\ 
 $f^\star{(2)}$ & 0.1212 & 0.1212 & 0.1238 & 0.1244 \\
 $f^\star{(3)}$ & 0.1212 & 0.1212 & 0.1238 & 0.1244 \\
 $f^\star{(4)}$ & 0.0270 & 0.0270 & 0.0276 & 0.0278 \\
 $f^\star{(5)}$ & 0.0270 & 0.0270 & 0.0276 & 0.0278 \\
 $f^\star{(6)}$ & 0.0270 & 0.0270 & 0.0062 & 0.0062 \\
 $f^\star{(7)}$ & 0.0060 & 0.0060 & 0.0062 & 0.0062 \\
 $f^\star{(8)}$ & 0.0060 & 0.0060 & 0.0062 & 0.0014 \\
 \hline
\end{tabular}}
\label{Table:BD_Case}
\end{table}

Next, we consider the vector query case and provide a counter example to support Remark~\ref{Rem:VectorQuery} for a two-dimensional vector query. Let $n=4, \epsilon_1 =1.5, \epsilon_2 =1.5, \epsilon = 3, \mu_{XX'} = \{0,1,2\}, \delta=0$. The optimal noise mechanism for  $(\epsilon_1,0)$ and $(\epsilon_2,0)$ DP are: $f^{\star}_1 (\eta)$ = $f^{\star}_2 (\eta)$ = $[0.6469,  0.1443, 0.1443, 0.0322, 0.0322]$; the values of the optimal noise mechanism PMF $f^{\star} (\eta_1, \eta_2)$ for  $(\epsilon,0)$ are in \eqref{eq:jointPMFvalues}:
\begin{equation}\label{eq:jointPMFvalues}
\begin{bmatrix}
 0.6954 &   0.0346 &   0.0346 &   0.0017 &   0.0017\\
  0.0346 &   0.0346 &   0.0346  &  0.0017 &   0.0017 \\
  0.0346 &   0.0346 &   0.0346  &  0.0017 &   0.0017 \\
  0.0017 &   0.0017 &   0.0017  &  0.0017 &   0.0017 \\
  0.0017 &   0.0017 &   0.0017  &  0.0017 &   0.0017 \\
\end{bmatrix}
\end{equation}

The marginal distributions happen to be equal, which makes sense in terms of symmetry: $f_1 (\eta_1) = \sum_{\eta_2=0}^n f^{\star} (\eta_1, \eta_2)\equiv f_2 (\eta_2)$ and they have masses  [0.7681, 0.1073,   0.1073,   0.0086, 0.0086]. We can observe that $f^{\star} (\eta_1, \eta_2) \ne f_1 (\eta_1)f_2 (\eta_2)$.

The plots in Figs.~\ref{Fig:ErrorRate_mu_plots}--\ref{Fig:PMF_AMI_plots} are self explanatory. 

\begin{figure}[!ht]
\centering
\begin{subfigure}{.23\textwidth}
\centering
\begin{adjustbox}{width = 1\columnwidth}
\includegraphics[width=0.99\textwidth]{./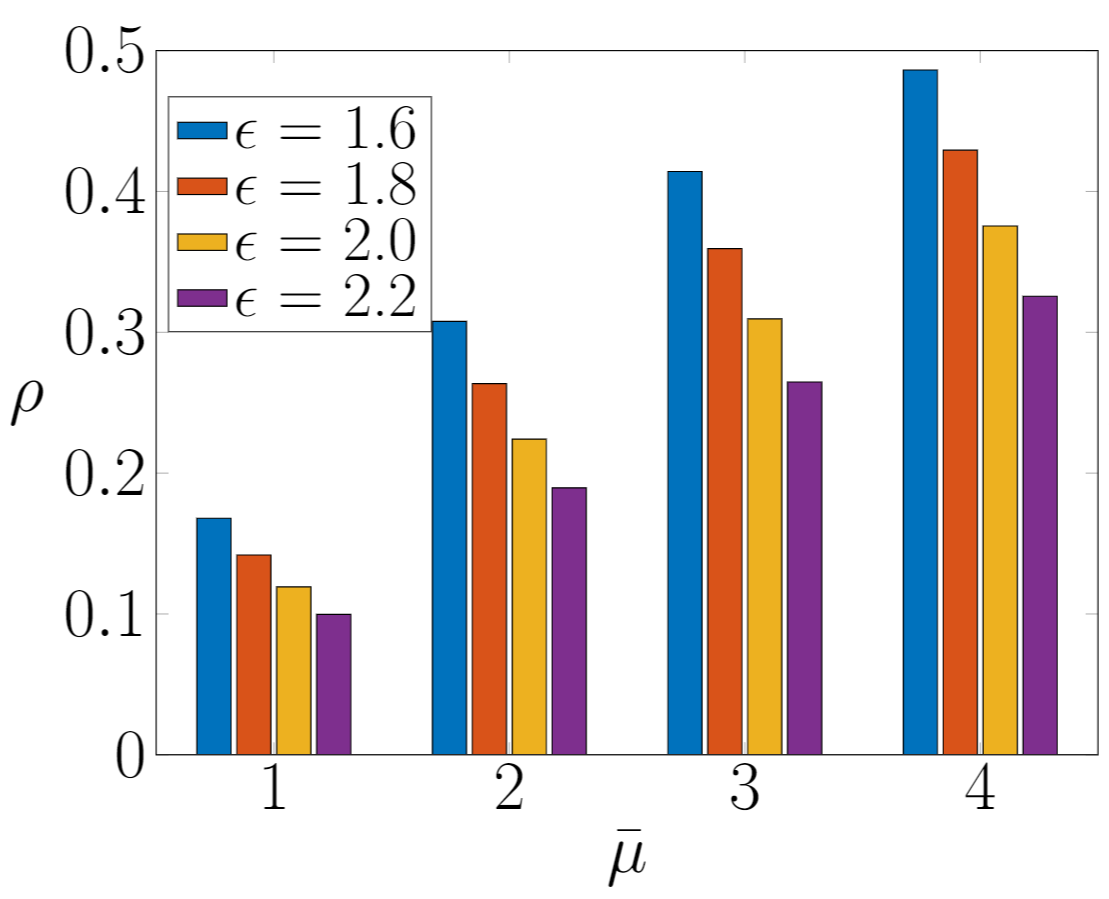}
\end{adjustbox}
\label{fig:ErrorRate_mu_eps}
\end{subfigure}%
\begin{subfigure}{.23\textwidth}
\centering
\begin{adjustbox}{width = 1\columnwidth}
\includegraphics[width=0.99\textwidth]{./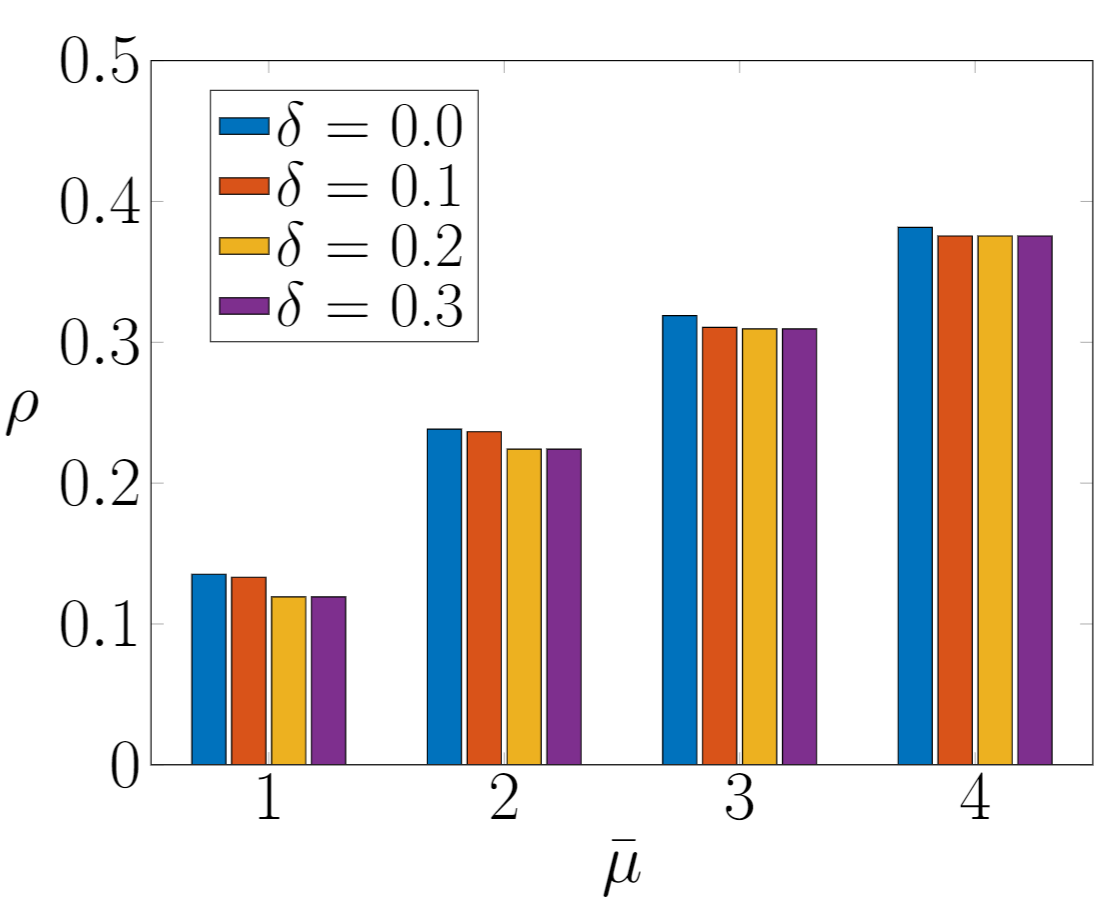}
\end{adjustbox}
\label{fig:ErrorRate_mu_delta}
\end{subfigure}
\caption{These plots show the error rate for the BD neighborhood case v/s parameter $\bar{\mu}$. In the left plot $n=9, ~\delta=0.2$, and in the right plot  $n=9, ~\epsilon=2$ are used.}
\label{Fig:ErrorRate_mu_plots}
\end{figure}

\begin{figure}[!ht]
\centering
\begin{subfigure}{.23\textwidth}
\centering
\begin{adjustbox}{width = 1\columnwidth}
\includegraphics[width=0.99\textwidth]{./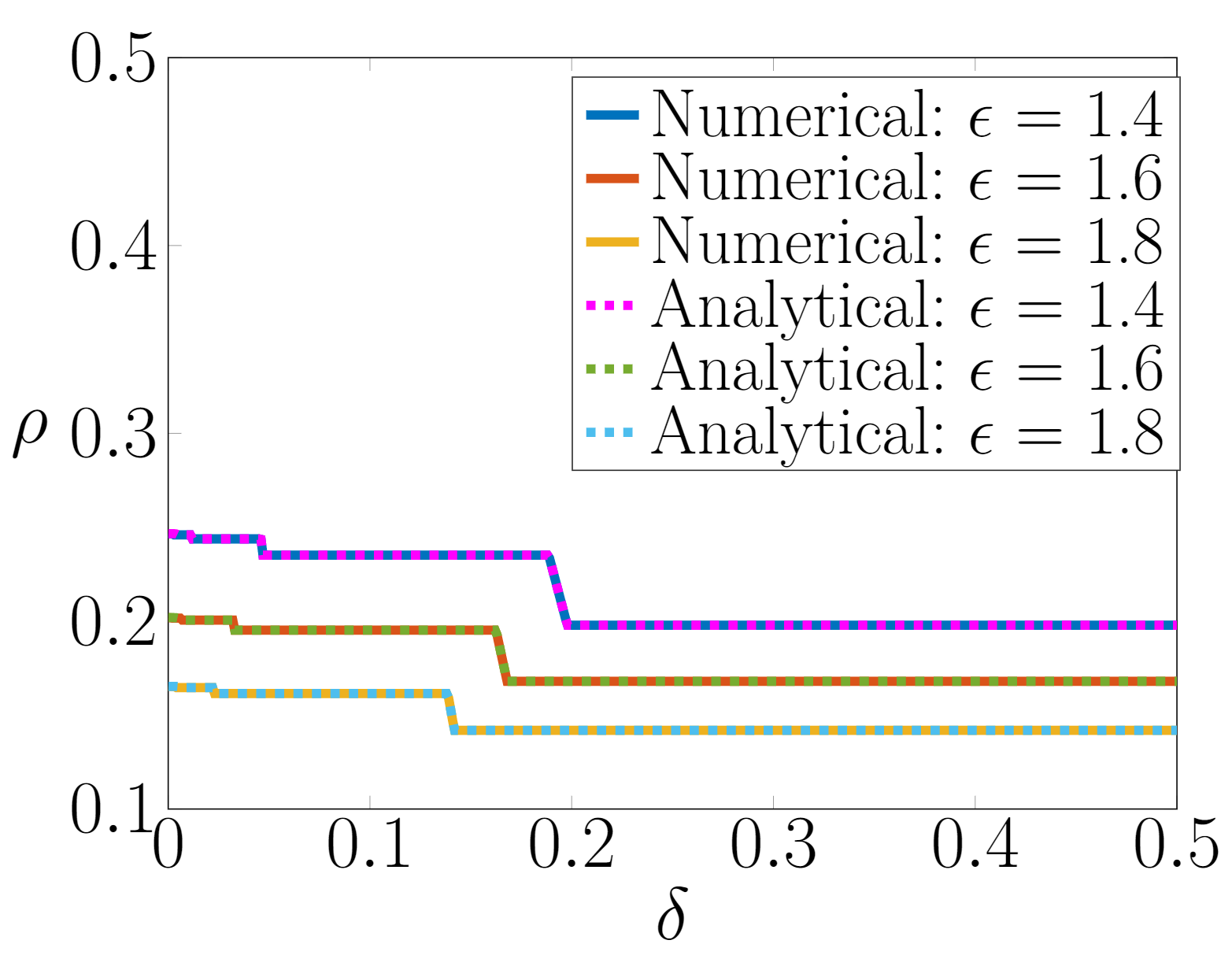}
\end{adjustbox}
\caption{}
\label{fig:ErrorRate_delta_SD}
\end{subfigure}%
\begin{subfigure}{.23\textwidth}
\centering
\begin{adjustbox}{width = 1\columnwidth}
\includegraphics[width=0.99\textwidth]{./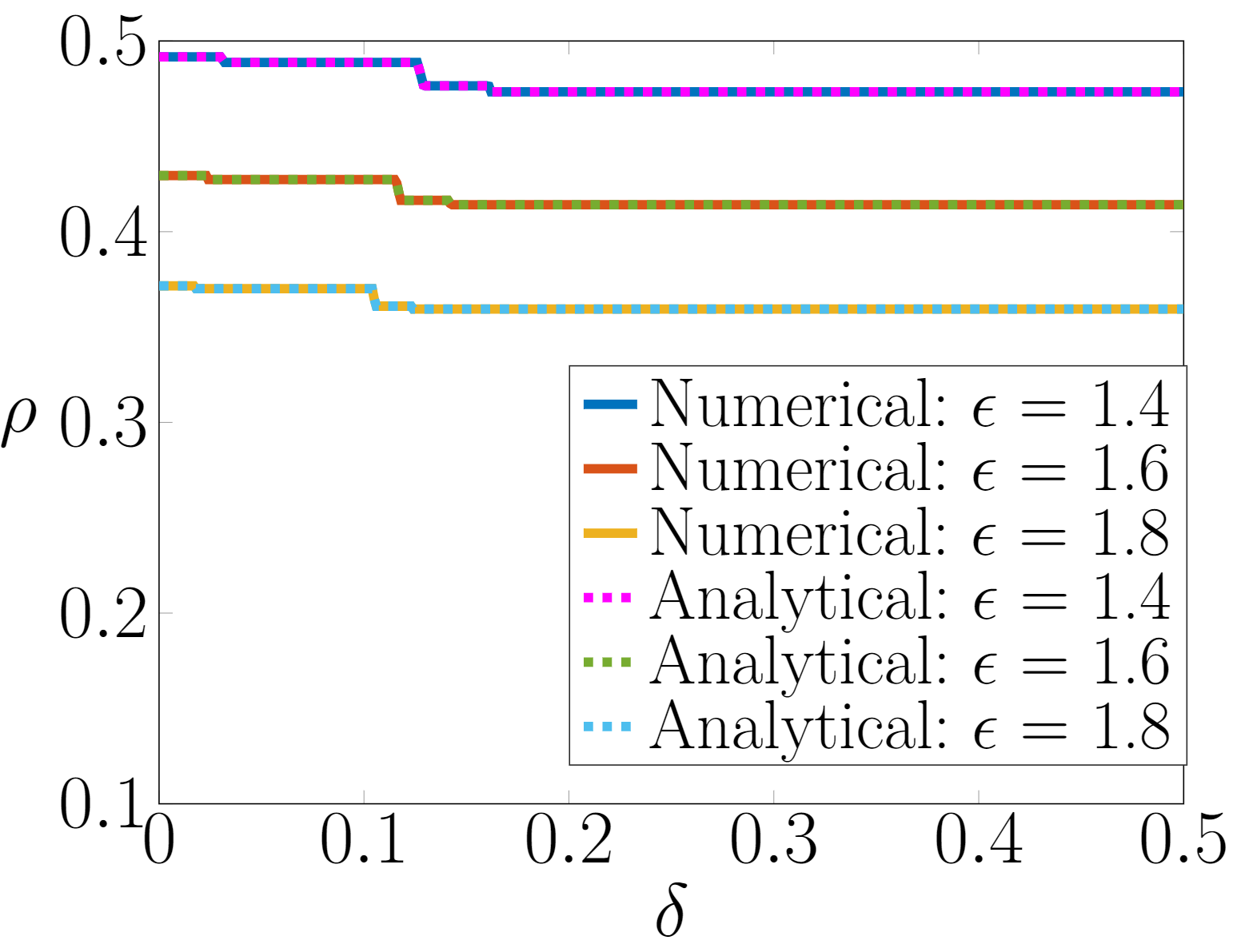}
\end{adjustbox}
\caption{}
\label{fig:ErrorRate_delta_BD}
\end{subfigure}
\caption{Error rate $\rho$ for the SD and BD neighborhood cases v/s parameter $\delta$, respectively, confirming the trends predicted by Theorems \ref{Thm:Linear_Flat_Region} and \ref{Thm:BD_results}. In plot (a) $n=9, ~\hat{\mu}=3$ and in plot (b) $n=9, ~\bar{\mu}=3$ are used.}
\label{Fig:ErrorRate_delta_plots}
\end{figure}

\begin{figure}[!ht]
\centering
\begin{subfigure}{.23\textwidth}
\centering
\begin{adjustbox}{width = 1\columnwidth}
\includegraphics[width=0.99\textwidth]{./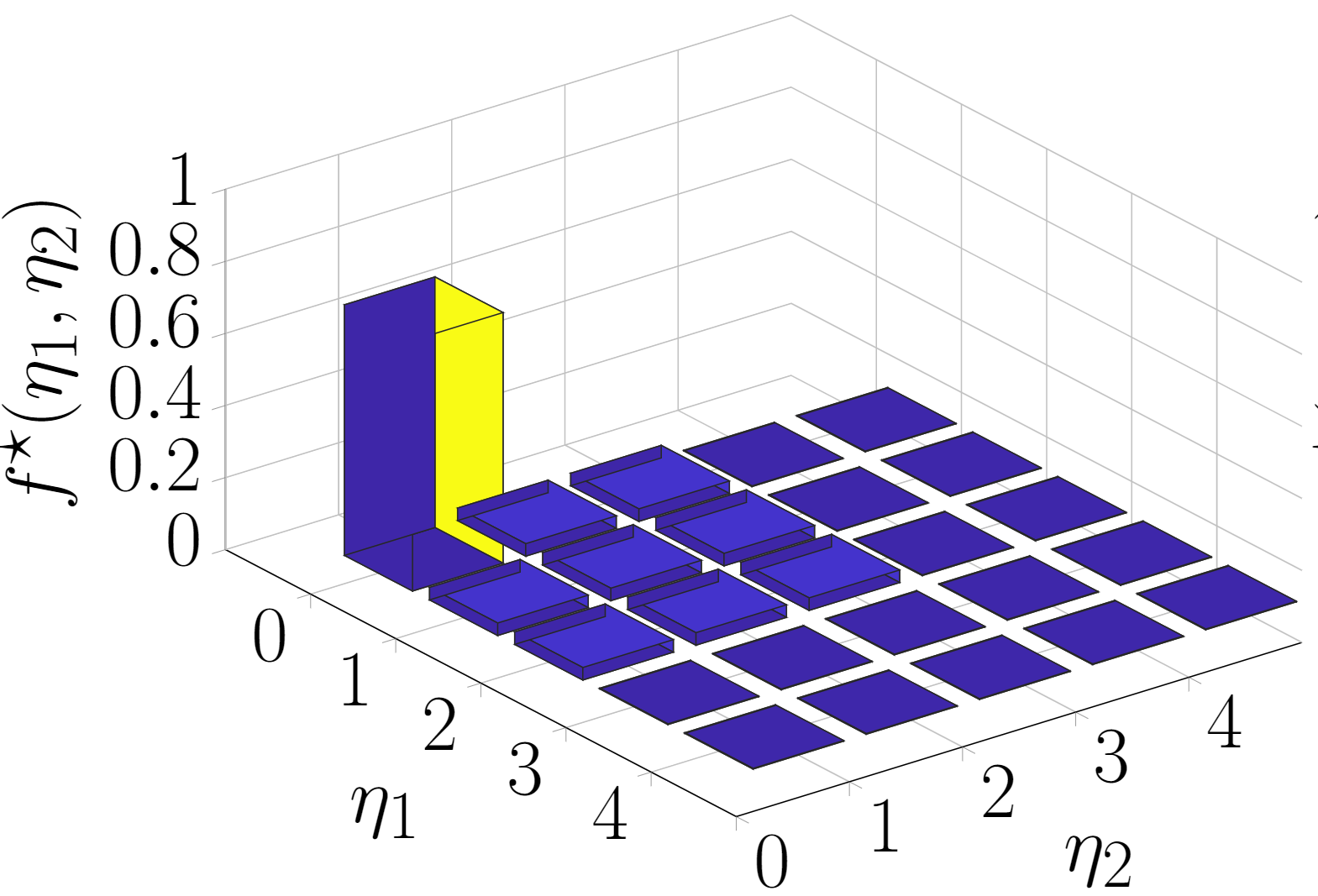}
\end{adjustbox}
\caption{}
\label{fig:PMF_2D_BD}
\end{subfigure}%
\begin{subfigure}{.23\textwidth}
\centering
\begin{adjustbox}{width = 1\columnwidth}
\includegraphics[width=0.99\textwidth]{./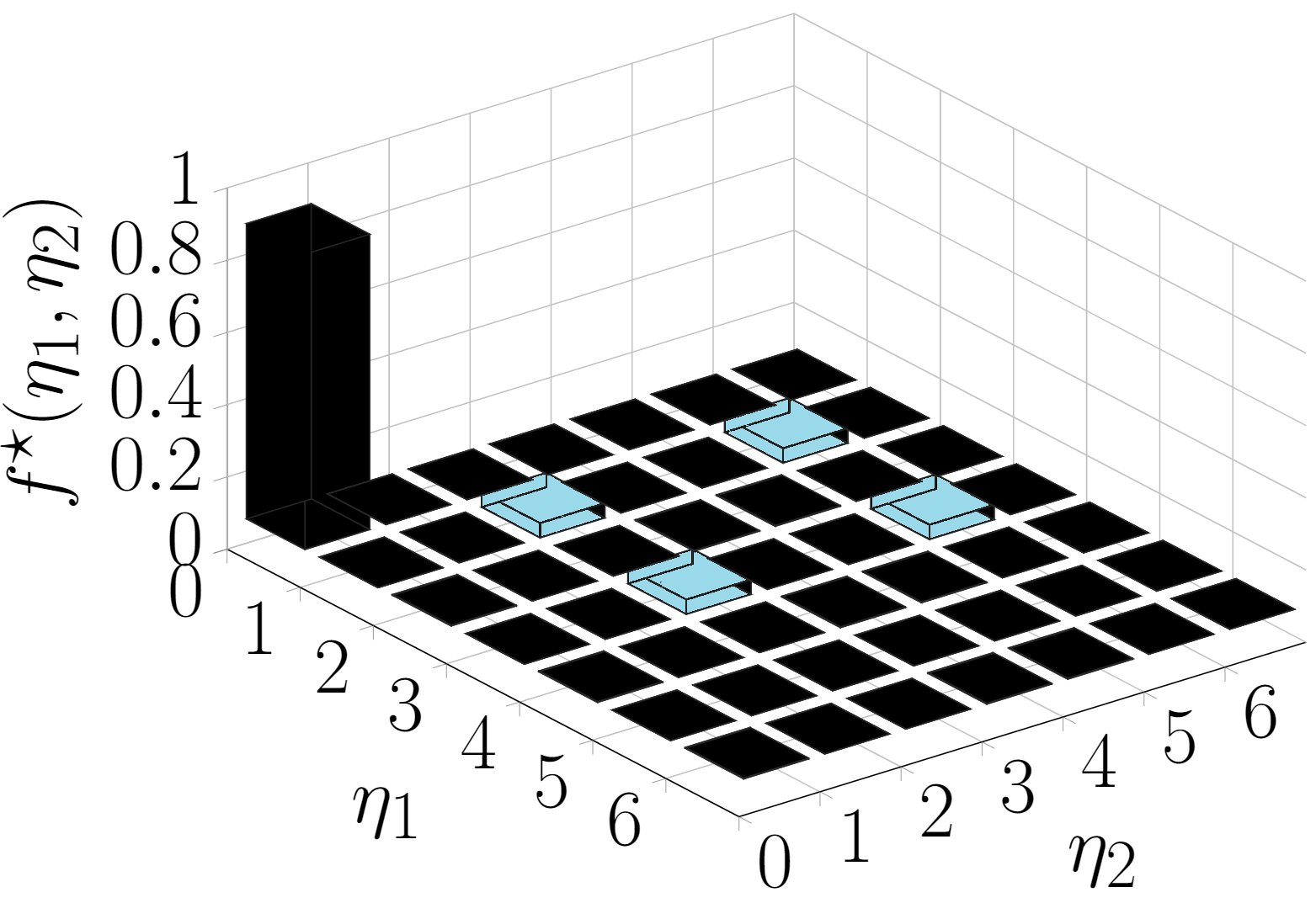}
\end{adjustbox}
\caption{}
\label{fig:PMF_2D_Arb}
\end{subfigure}
\caption{The optimal noise joint PMF for two-dimensional vector query case for a BD and arbitrary neighborhoods, respectively.  In plot (a) the parameters are: $n=4, ~\epsilon=3,  ~\bar{\mu}_1=\bar{\mu}_2=2$. In this BD neighborhood example, the staircase pattern is similar to the scalar query case in Theorem \ref{Thm:BD_results}. In  plot (b) the following parameters are used: $n=6, ~\epsilon=3,  ~{\mu}_1= \{1, 3\}, ~{\mu}_2= \{2, 5\}$. In this arbitrary  neighborhood example, the second largest peaks can be observed at the union of distance one set of each dimension, i.e., {$[1,2], ~[1,5],~[3,2], ~[3,5]$}.}
\label{Fig:PMF_2D_plots}
\end{figure}

\begin{figure}[!ht]
\centering
\begin{subfigure}{.23\textwidth}
\centering
\begin{adjustbox}{width = 1\columnwidth}
\includegraphics[width=0.99\textwidth]{./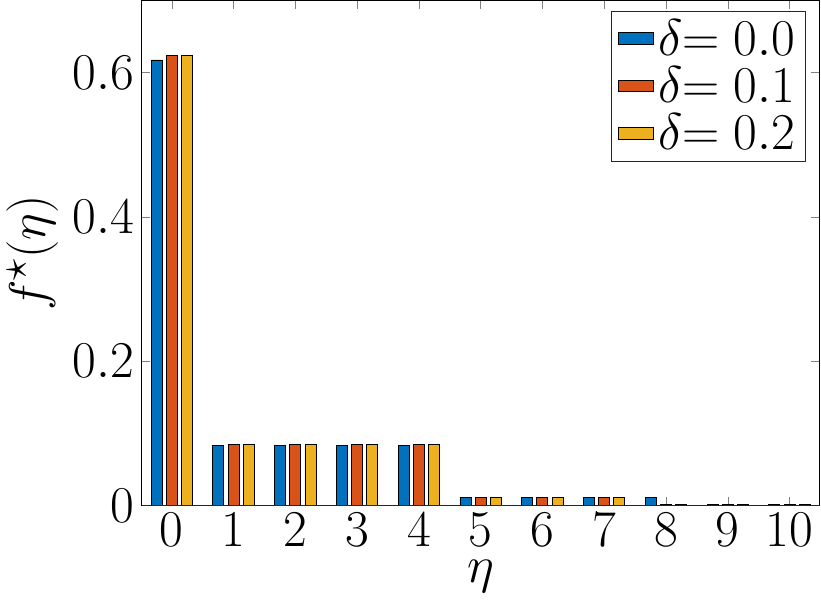}
\end{adjustbox}
\label{fig:PMF_AMI1}
\end{subfigure}%
\begin{subfigure}{.23\textwidth}
\centering
\begin{adjustbox}{width = 1\columnwidth}
\includegraphics[width=0.99\textwidth]{./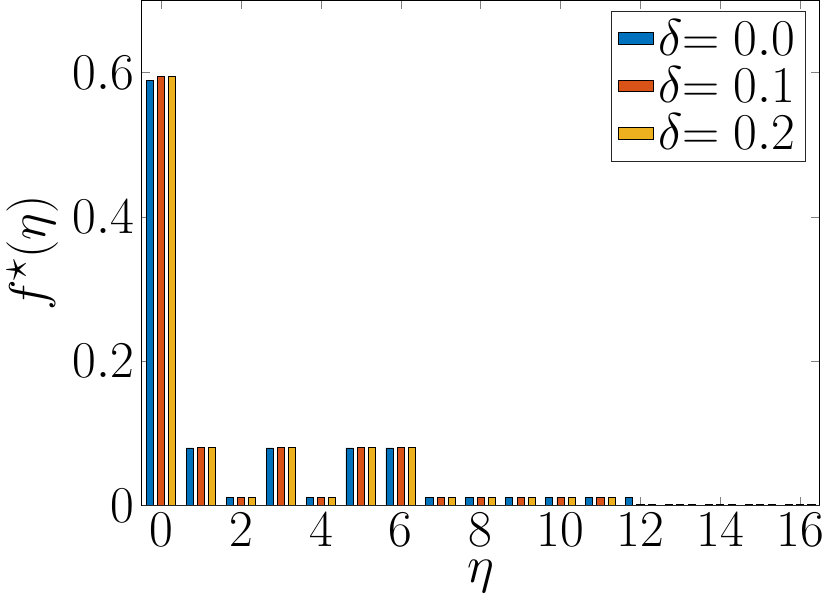}
\end{adjustbox}
\label{fig:PMF_AMI2}
\end{subfigure}
\caption{{{These plots show the PMF of the optimal noise mechanisms when the Advanced Metering Infrastructure (AMI) database is queried from 1416 houses that belong to 12 distribution circuits across California, USA. We use $\tilde{\mu} = \bigcup_{\forall X \in {\mathcal X}} \bigcup_{\forall X'\in {\mathcal X}_{X}^{(1)}} \mu_{\XX}$. In the left plot, 11 quantization levels are used, hence $n=10$. In this example, $\tilde{\mu} = \{1, 2, 3, 4\}$ is observed. In the right plot, 17 quantization levels are used, hence $n=16$. In this example, $\tilde{\mu} = \{1, 3, 5,6\}$ is observed. From these figures we observe that $f^\star(\eta)$ is considerably larger for $\eta \in \{\tilde{\mu} \cup 0\}$ than those $\eta \in [n] \setminus \{\tilde{\mu} \cup 0\}$.}}}
\label{Fig:PMF_AMI_plots}
\end{figure}

\section{Conclusions and Future Work}
Considering queries whose domain is discrete and finite,  in this paper we proposed a novel MILP formulation to determine what is the PMF for an additive noise mechanism that minimizes the error rate of the DP answer for any ($\epsilon,~\delta$) pair. The modulo addition between the noise and the queried data is modulo $n+1$ equal to the size of the query domain. 
For two special cases, which we referred to as the SD neighborhood and bounded difference (BD) neighborhood, we have provided closed-form solutions for the 
optimal noise PMF and its probability of error versus $\delta$ for a given $\epsilon$ and studied the asymptotic case for $n\rightarrow \infty$.
We also compared the proposed optimal noise mechanism to state-of-the-art noise mechanisms and found that it significantly outperforms them for a given ER or MSE. 
In the future, we plan to leverage these results in several applications that have to do with labeling data as well as a building block to study theoretically queries with finite uncountable support as well as the case of vector queries, whose optimum PMF can be calculated with our MILP and does not appear to be the product of the optimum PMFs for each entry. 

\section*{Appendix}
\section{Proof of Proposition~\ref{prop:PostProcessing}}\label{Apdx:prop1Proof}
{
Let us define the domains of $q, \qtilde,$ and $g(\qtilde)$ to be ${\mathcal Q}_1, {\mathcal Q}_2$, and ${\mathcal Q}_3$, respectively. The given function is $g: {\mathcal Q}_2 \rightarrow {\mathcal Q}_3$. We can prove that $(\epsilon,\delta)-\text{PDP}$ is preserved in general under post-processing if $g(\qtilde)$ is bijective.
In fact, in this case ${\mathcal Q}_2={\mathcal Q}_3$ and the probability mass of $g(\qtilde)$ in ${\mathcal Q}_3$ domain, i.e., $f_{g(\qtilde)}(g(\qtilde))$, is equal to the probability mass of the corresponding argument $\qtilde$. So, $\forall X\in {\mathcal X}, \forall {X'\in {\mathcal X}_{X}^{(1)}}$, it is trivial to see that:
\begin{equation}
\label{eq:post_proc_1}
    L_{\XX} (g(\qtilde)) = L_{\XX} (\qtilde) 
\end{equation}
and that the probability $\delta$ of the leakage event $L_{\XX} (g(\qtilde))>\epsilon$ remains the same. 
If $\delta=0$ then the $\epsilon-PDP$ is always preserved. The case $|{\mathcal Q}_2|\geq |{\mathcal Q}_3|$ is the interesting one (e.g. clamping): in this case multiple values of $\qtilde \in {\mathcal V}_g$ map onto a single value $g(\qtilde)=g$. Next, we show that $L_{\XX} (\qtilde)\leq \epsilon$ with probability one implies $L_{\XX} (g(\qtilde))\leq \epsilon$ with probability one. In fact, since for all $\qtilde$, $f(\qtilde|X)\leq e^{\epsilon}f(\qtilde|X')$:
\begin{align}
    L_{\XX} (g(\qtilde))&=\log
    \frac{f(g|X)}{f(g|X')}=
    \log
    \frac{\sum_{\qtilde\in {\mathcal V}_g}f(\qtilde|X)}{
    \sum_{\qtilde\in {\mathcal V}_g}f(\qtilde|X')}\notag\\
    &\leq 
    \log
    \frac{\sum_{\qtilde\in {\mathcal V}_g}e^{\epsilon}f(\qtilde|X')}{
    \sum_{\qtilde\in {\mathcal V}_g}f(\qtilde|X')}\leq \epsilon
\end{align}
}

\section{Proof of Lemma~\ref{Lemma:ordered_set}}\label{Apdx:Lem1Proof}
When the objective is minimizing error rate, it is natural to introduce the inequality constraint  $f^\star_{(0)} = f^\star(0) \ge 0.5=1-\rho^{ER}$, 
. Hence, $f^\star_{(0)} = \sup_{\forall k \in [n]} f^\star_{(k)}$. Now, substitute $\eta=0$ and $\mu_{\XX} = \hat{\mu}$ in~\eqref{Eq:f0_constraint} we see that $f^\star(0) \le e^{\epsilon}f^\star(\hat{\mu})$. 
Since we are minimizing the sum of the mass away from zero, we assign $f^\star(\hat{\mu})$ to the minimum possible value, which is $f^\star(\hat{\mu}) = e^{-\epsilon}f^\star(0)$ in this case.  Similarly, from~\eqref{Eq:f0_constraint} we see that $f^\star(k\hat{\mu}) \le e^{\epsilon}f^\star((k+1)\hat{\mu})$, $k \in [n-1]_+$, and we are minimizing the sum of the mass away from zero, we need to assign $f^\star((k+1)\hat{\mu})$ to the minimum possible value, which is $f^\star((k+1)\hat{\mu}) = e^{-\epsilon}f^\star(k\hat{\mu})$.\footnote{Note that for $f^\star(n\hat{\mu})$, there is no choice to assign any value since it will be automatically fixed once $f^\star(k\hat{\mu})$, $k \in [n-1]$ values are fixed and it is trivial to see that $f^\star(n\hat{\mu}) = \min f^\star(k\hat{\mu}), \forall k \in [n]$.} Since $\epsilon > 0$, $f^\star(k\hat{\mu}) \ge f^\star((k+1)\hat{\mu})$, $\forall k \in [n]$. Hence, we can write $f^\star_{(k)} = f^\star(k\hat{\mu})$, $\forall k \in [n]$.
\section{Proof of Lemma~\ref{Lem:SD_delta0}}\label{Apdx:Cor1Proof}
\textbf{Case 1}: $(\hat{\mu}, (n+1))$ are relatively prime.\\
The proof logic is as follows. Since minimizing the error rate is equivalent to having the maximum mass possible at $\eta=0$, we expect $f^\star(0) = \sup_{\eta \in [n]}f^\star(\eta)$. The constraint~\eqref{Eq:f_eta_ordered}, implies $f^\star(\hat{\mu}) \ge e^{-\epsilon} f^\star(0)$. Also that for any $\eta = k\hat{\mu},~k\in [2:n]$, $f^\star(k\hat{\mu}) \ge e^{-\epsilon} f^\star((k-1)\hat{\mu}) \ge e^{-k\epsilon} f^\star(0)$ and $f^\star((n+1)\hat{\mu}) \ge e^{-\epsilon} f^\star(n\hat{\mu})$. From all these inequalities and the constraint~\eqref{Eq:Opt_prob}, we conclude that what would allow having the largest mass of probability at $\eta=0$ is meeting all constraints as equality, starting from the first. Since $\hat{\mu}$ and $n$ are prime, the multiples of $\hat{\mu}$ eventually cover the entire range $[n]$ and therefore:  $f^\star(k\hat{\mu}) = e^{-k\epsilon} f^\star(0),~k \in [n]$. This result leads to the optimum distribution in  Lemma~\ref{Lem:SD_delta0}.  Now, $f^\star(0)$ can be computed as follows:
\begin{subequations}
\begin{align}
    f^\star(0) &+ e^{-\epsilon}f^\star(0) + \ldots + e^{-n\epsilon}f^\star(0) = 1 \\
    \Rightarrow\quad f^\star(0) & = \frac{1 - e^{-\epsilon}}{1 - e^{-(n+1)\epsilon}}.
\end{align}
\end{subequations}

\noindent \textbf{Case 2}: $(\hat{\mu}, (n+1))$ are not relatively prime.\\
The same argument of Case 1 holds for $k \in [N_{\hat{\mu}}-1]_+$, where $N_{\hat{\mu}} = \frac{(n+1)}{\gcd((n+1),\hat{\mu})}$,
i.e., $f^\star(k \hat{\mu}) =  e^{-k \epsilon} f^\star(0), ~~~~k \in [N_{\hat{\mu}}-1]_+$. However in this case, since for $k = N_{\hat{\mu}}$, $f^\star(N_{\hat{\mu}} \hat{\mu}) = f^\star(0)$ and the cycle repeats over the same exact values covered from zero up to $(N_{\hat{\mu}}-1)$ which does not include all PMF entries. Since we are minimizing the objective function to satisfy all the inequality constraint $f^\star_{(0)} = f^\star(0) \ge 0.5=1-\rho^{ER}$  and~\eqref{Eq:f_eta_ordered} for all $k$ values that are not constraining $f^\star(0)$, the best choice is to assign them zero, i.e., $f^\star(k) = 0,~k \in [n] \setminus [i\hat{\mu}], ~\forall i \in [N_{\hat{\mu}}-1]$.

\section{Proof of Theorem~\ref{Thm:Linear_Flat_Region}}\label{Apdx:Thm3Proof}
We focus on Case 1, as Case 2 follows from Remark \ref{Rem:PrimeRemark}. In Lemma \ref{Lemma:ordered_set}, we clarified that it is best to deal with ordered values, and in Lemma~\ref{Lem:SD_delta0}, we specified $f^\star_{(h)}$, $h \in [n]$, as a function of $\epsilon >0$ for  $\delta=0$. The best solution for $\rho(\epsilon,\delta)$ initially does not change until violating an inequality in \eqref{Eq:f_eta_ordered} yields better accuracy. This happens as soon as the second smallest value $f^{\star}_{(n-1)}$ corresponding to $\delta=0$ is $f^{\star}_{(n-1)}=\delta$, which is the upper-limit $\overline{\delta}_0$ from  \eqref{eq:delta-top-bottom}. The reason why it is $f^{\star}_{(n-1)}$ and not $f^{\star}_{(n)}$ matters because surely $f^{\star}_{(n)}<e^{\epsilon}f^{\star}(0)$ which in the modulo $n$ sum is the value that follows and that we aim at maximizing. At this point, for $\overline{\delta}_0<\delta\leq \underline{\delta}_1$ the value of  $f^{\star}_{(n-1)}=\delta$, all the values for $0\leq h<n-1$ meet the constraints with equality and thus $f^{\star}_{(h)}=e^{(n-1-h)}\delta$, while the last value $f^{\star}_{(n)}$ progressively diminishes until it becomes zero, as shown in equation \eqref{eq:SD_f*(h)linear}, at the start of the next flat region. This pattern continues until eventually one by one all $n-1$ masses become zero except for $f^{\star}(0)=1=\underline{\delta}_n^{\epsilon}$.   
\section{Proof of Theorem~\ref{Thm:BD_results}} \label{Apdx:Thm4Proof}
In Lemma~\ref{Lem:BD_delta0}, we have the expressions for $\phi_i, ~i \in [b+1]_+$ for $\epsilon>0$ and $\delta=0$. The best solution for 
$\rho^{\star}(\delta,\epsilon)$ does not change w.r.t $\delta>0$ until violating an inequality in \eqref{Eq:f_eta_ordered_BD} to yield better accuracy. The key to the proof is understanding that the first inequality to be violated occurs when considering $\delta$ greater or equal not to the smallest PMF value but to the PMF values of the third to the last group in $\delta=0$ case, i.e., $\phi_{b-1} = \delta$, whose value equals the first boundary point $\overline{\delta}^{\epsilon}_{0,0}$ (see~\eqref{eq:delta_overline_BD}). The inequality violated is with respect to the PMF of the second to the last group, which is $\phi_{b}$, which becomes $=\delta$ and $\phi_{b-1}>e^{\epsilon}\delta$.  The reason why the PMF of the last group, i.e. $\phi_{b+1}$, does not violate the inequality in \eqref{Eq:f_eta_ordered_BD} is that $\phi_{b+1} < e^{\epsilon}\phi_0$ (we use $\phi_0$ since we are doing modulo summation) is always true for all members of this group, due to the fact that $\phi_0$ is the objective function which we are maximizing.  Similarly, the PMF of the second last group, i.e. $\phi_{b}$, does not violate the inequality in \eqref{Eq:f_eta_ordered_BD} because some members of this group do not violate the inequality $\phi_{b} < e^{\epsilon}\phi_0$ for the same reason as stated above. At this point, for $\overline{\delta}_{0,0}<\delta\leq \underline{\delta}_{1,1}$ the value of  $\phi_{b-1}=\delta$, the PMF values for $i \in [b-1]$ meet the constraints with equality and thus $\psi_i^1(\delta)=\delta e^{(b-1-i)\epsilon}$, for $i \in [b-1]$. At this point, the group with PMF $\phi_b$, whose length is $\ell_b=\bar{\mu}$, splits into two groups, one of length $\ell_b=(\bar{\mu}-1)$ and the other with a singleton step $\ell_{b+1}=1$. This split happens to assign more probability mass at $\psi_0^1(\delta)$, which is our objective, while still violating the constraint in \eqref{Eq:f_eta_ordered_BD} between $\delta = \psi_{b-1}^1(\delta)$ and $\psi_{b+1}^1(\delta)$ in order to lower the error further. For $i=b$ and $i=b+2$, the PMF values satisfy the constraints with equality, hence $\psi_{b}^1(\delta)=\delta e^{-b\epsilon}$ and  $\psi_{b+2}^1(\delta)=\delta e^{-(b+1)\epsilon}$ while the unconstrained singleton step PMF $\psi_{b+1}^1(\delta)$ decreases until it joins the next group since it has matched its value, and becomes $\phi^1_{b+1}$, as shown in equation \eqref{eq:PMF_BD_thm}, at the start of the next flat region, i.e., for  $\underline{\delta}_{1,1} \leq \delta\leq \overline{\delta}_{1,1}$. In this region, the PMF of every group is $e^{-\epsilon}$ times the PMF of previous group similar to  Lemma~\ref{Lem:BD_delta0}; the only change here is the lengths of $b^{th}$ and $(b+1)^{th}$ groups which are now $\ell_b = (\bar{\mu}-1)$ and $\ell_{b+1} = r+1$. 

Now we provide the reason for splitting only groups of lengths  $\bar{\mu}$ using contradiction. Suppose there is a group $i$ of length $\ell < \bar{\mu}$ which splits at the beginning of a linear region $k \in [b]_+$. As we know, for this group to split there must be a violation of the inequality in~\eqref{Eq:f_eta_ordered_BD} between the PMF values $\phi_{i-1}^{k-1} = \delta$ and $\phi_i^{k-1}$. Now, $\phi_i^{k-1}$ splits into  $\psi_{i}^k(\delta)$ of length $(\ell-1)$ and $\psi_{i+1}^k(\delta)$ of length 1, which decreases as $\delta$ grows. We see that $\psi_{i+1}^k(\delta)$ is $\leq \bar{\mu}$ distance away from at least two members of the $\psi_{i-1}^k(\delta)$ group leading to at least two violations in the inequalities in~\eqref{Eq:f_eta_ordered_BD} which makes the actual privacy loss $\geq 2\delta$, in violation of the inequality constraint  in~\eqref{eq:y_eta_condition}, which is a contradiction. 

From the discussion in the previous paragraph, for the next linear region, we now search for a group of length $\bar{\mu}$ with the smallest possible PMF, which is found to be  $(b-1)^{th}$ group, whose PMF is $\phi^1_{b-1}$. For this group to split and enter into the linear region, we should have  $\delta=\phi^1_{b-2} $. As explained in the first paragraph of this proof, the same process follows in this linear region too, i.e., for $\overline{\delta}_{1,1} < \delta\leq \underline{\delta}_{2,1}$. In the next flat region, i.e., for $\underline{\delta}_{2,1} \leq \delta\leq \overline{\delta}_{2,1}$ the step lengths, as compared to previous flat region, which have been altered are $\ell_{b-1} = (\bar{\mu}-1)$ and $\ell_{b} = \bar{\mu}$. Now, observe that $b^{th}$ group has length $\bar{\mu}$ and from the discussion in the previous paragraph, this group splits at the end of this flat region, i.e., when $\delta=\phi^2_{b-2} + \phi^2_{b-1} = \phi^2_{0} e^{-(b-2)\epsilon} (1+e^{-\epsilon}) \equiv \overline{\delta}_{2,1}$. In this case the inequality in~\eqref{Eq:f_eta_ordered_BD} are violated by both $\phi^2_{b-2}$ and $\phi^2_{b-1}$, hence the summation and the reasoning for~\eqref{eq:delta_overline_BD} (this was missing in SD neighborhood case). For the same reason, in the next linear region i.e., for $\overline{\delta}_{2,1} < \delta\leq \underline{\delta}_{2,2}$ the normalizing factor  $(1+e^{-\epsilon}) \equiv \xi_2^\epsilon$ is used while computing the PMF values in~\eqref{eq:step_height_thm}. 

These alternate flat and linear intervals are formed as $\delta$ increases and the process continues until all the remaining groups of lengths   $\bar{\mu}$ split into groups of lengths $(\bar{\mu}-1)$.
The expression for $\underline{\delta}_{h,j}$, $h \in [b-1]_+, ~j \in [h]_+$ in~\eqref{eq:delta_underline_BD} is computed using $f^\star(0)$ vs. $\delta$ curve (see Fig.~\ref{fig:BD_f0_vs_delta})  in the linear region between $\phi^{k-1}_{0}$ and  $\phi^{k}_{0}$, the slope corresponding to $\psi_0^k(\delta)$ i.e.  $e^{(b-h)\epsilon}/ \xi_j^{\epsilon}$ from \eqref{eq:step_height_thm}. 
i.e.,
\begin{align}
   \underline{\delta}_{h,j} &=  \overline{\delta}^{\epsilon}_{h,j-1} + (\phi^{k}_{0} - \phi^{k-1}_{0}) e^{-(b-h)\epsilon} \xi_j^{\epsilon} \nonumber\\
   &= \phi^{k-1}_{0} e^{-(b-h)\epsilon} \xi_j^{\epsilon} + (\phi^{k}_{0} - \phi^{k-1}_{0}) e^{-(b-h)\epsilon} \xi_j^{\epsilon} \nonumber\\
   &= \phi^{k}_{0} e^{-(b-h)\epsilon} \xi_j^{\epsilon}.
\end{align}

Now, we find the simplified expression for the value of $\phi^k_{0}$  from the fact that  
$\phi^k_{0} + \sum_{i=1}^{b+1} \ell_i \phi^k_{i} = 1$ as following:
\begin{align}
\label{eq:f_star_0_thm}
    \phi^k_{0} &\left(1 + \bar{\mu} \sum_{i=1}^{b-h} e^{-i\epsilon} \right.+ (\bar{\mu}-1)\!\!\!\sum_{\substack{i=b-h+1\\i \neq b-h+j+1}}^{b} e^{-i\epsilon} \nonumber\\ &\!\!\!\!+ \bar{\mu} u_{hj}e^{-(b-h+j+1)\epsilon} \left. + (r+h-u_{hj}) e^{-(b+1)\epsilon}\right) \!\!=\!\! 1
\end{align}
By further simplifying, we get:
\begin{align*}
\label{eq:f_star_0_thm2}
    \!\!\!\!\phi^k_{0} & = \left(1+\bar{\mu} \sum_{i=1}^{b} e^{-i\epsilon}  -\bar{\mu}(1-u_{hj}) e^{-(b-h+j+1)\epsilon}\right.\\ 
    & \!\!\!\!\!\!+ e^{-(b-h+j+1)\epsilon}\!+\!(r\!+\!h\!-\!u_{hj}) e^{-(b+1)\epsilon} \!-\! \left. \!\!\!\!\!\!\sum_{i=b-h+1}^{b}\!\!\!\!\!\!\! e^{-i\epsilon} \right)^{-1} \\
    \!\!\!\!\phi^k_{0} & = \left(\bar{\mu} \alpha^{\epsilon}_{hj} + \beta^{\epsilon}_{hj}\right)^{-1}, ~\mbox{see~\eqref{eq:phi_0}}.
\end{align*}

The last group has length $b+r$ which is less than $\bar{\mu}$ in our case. This assumption simplifies the analysis, hence tractable. Even though the pattern is same, the case $b+r \geq \bar{\mu}$ complicates the analysis because, for $j \in [h]_+$ and for every $h \in [b]_+$,  instead of increasing $h$ when $j =h$, $j$ must be increased beyond $h$ to accommodate the groups of lengths $\bar{\mu}$ created by the last group whenever its length $r+h$ exceeds $\bar{\mu}$. Hence, we do not discuss the analysis of this case, but we do  provide some numerical results in Section~\ref{Sec:Simulations}.

\section*{Declarations}

\subsection*{Availability of data and materials}
Not applicable.

\subsection*{Funding}
This research was supported by the Director, Cybersecurity, Energy Security, and Emergency Response, Cybersecurity for Energy Delivery Systems program, of the U.S. Department of Energy, under contract DE-AC02-05CH11231.  Any opinions, findings, conclusions, or recommendations expressed in this material are those of the authors and do not necessarily reflect those of the sponsors of this work.

\subsection*{Acknowledgements}
Not applicable.

\subsection*{Ethics approval and consent to participate}
Not applicable.

\subsection*{Consent for publication}
Not applicable.

\bibliographystyle{bmc-mathphys}
\bibliography{bibfile}



\section*{Note}
A preliminary version of this paper~\cite{kadam2023optimum} is published in the proceedings of IEEE SmartNets 2023 (https://smartnets.ieee.tn/). 

\end{document}